\newtheorem{theorem}{Theorem}
\newtheorem{lemma}[theorem]{Lemma}
\newtheorem{proposition}[theorem]{Proposition}
\newtheorem{remark}[theorem]{Remark}
\DeclareMathOperator{\Tr}{Tr}
\DeclareMathOperator{\diag}{diag}
\newcommand{\bw}{\textbf{w}}
\begin{document}
	
	\preprint{APS/123-QED}

		\title{ Designing Optimal   Multiplex Networks for Certain Laplacian Spectral Properties		}
\author{Heman Shakeri$^{1}$}
\author{Ali Tavasoli$^{2}$}
\author{Ehsan Ardjmand$^{3}$}
\author{Pietro Poggi-Corradini$^4$}
\affiliation{$^1$School of Data Science, University of Virginia, Charlottesville, Virginia, USA}
\affiliation{$^2$Department of Mechanical Engineering, Payame Noor University, Tehran, Iran}
\affiliation{$^3$Department of Analytics and Information Systems, College of Business, Ohio University, OH, USA}
\affiliation{$^4$Mathematics Department, Kansas State University, Manhattan, Kansas, USA}

\date{\today}% It is always \today, today,
%  but any date may be explicitly specified

	\begin{abstract} 
 We discuss  the design of interlayer edges in a multiplex network, under a limited budget, with the goal of  improving its overall performance. We analyze the following three problems separately; first, we maximize the smallest nonzero eigenvalue, also known as the algebraic connectivity; secondly, we minimize the largest eigenvalue, also known as the spectral radius; and finally, we minimize the spectral width.
	Maximizing the algebraic connectivity requires identical weights on the interlayer edges  for budgets less than a threshold value. However, for larger budgets, the optimal weights are generally non-uniform. The dual formulation transforms the problem into a graph realization (embedding) problem that allows us to give a fuller picture. Namely, before the threshold budget, the optimal realization is one-dimensional with nodes in the same layer embedded to a single point; while, beyond the threshold, the optimal embeddings generally unfold into spaces with dimension bounded by the multiplicity of the algebraic connectivity. Finally, for extremely large budgets the embeddings revert again to lower dimensions.
Minimizing the largest eigenvalue is driven by the spectral radius of the individual networks and its corresponding eigenvector. Before a threshold, the total budget is distributed among interlayer edges corresponding to the nodal lines of this eigenvector, and the optimal largest eigenvalue of the Laplacian remains constant. For larger budgets, the weight distribution tends to be almost uniform. In the dual picture, the optimal graph embedding is one-dimensional and non-homogeneous at first, with the nodes corresponding to the layer with largest spectral radius distributed on a line according to its eigenvector, while the other layer is embedded at the origin. Beyond this threshold, the optimal embedding expands to be multi-dimensional, and for larger values of the budget, the two layers fill the embedding space.
Finally, we show how these two problems are connected to minimizing the spectral width.
	\end{abstract}
	
%	\noindent%
%	{\it Keywords:}  Algebraic connectivity; Laplacian spectral radius; Laplacian spectral width; multiplex networks.
%  
	\maketitle

\section{Introduction}

%\begin{color}{red}
%The discussion on homogeniouty (p-686 of \citep{Boyd2006FastestMP}) and that we dont have this ..)
%\end{color}
Multiplex networks consist of distinct layers interacting together in diverse social, economical, transportation, and biological networks \citep{Kim2013Coevolution,Arenas2014multilayer}. Coupling structure of network layers is shown to affect  connectivity and robustness properties of the entire system \citep{Kim2012Correlated}. With the existing literature mostly limited to single networks, further systematic work is needed to discover the underlying  mechanism in multiplex networks \citep{Arenas2014multilayer}. 

Let $G=\left(V,E\right) $
represent an undirected network and by $V=\left\{ 1,\ldots,n\right\} $
and $E\subset{V\choose 2}$, we denote the set of nodes and links. 
For a link $e$ between nodes $i$ and $j$, i.e., $e:\{i,j\}\in E$, we define a nonnegative value $w_{ij}$ as the weight of the link.
Given $G$  a multiplex network,  let $G_1 = \left\{ V_1,E_1\right\}$ and $G_2=\left\{ V_2,E_2\right\}$, $|V_1| =| V_2|$ represent the layers, and a bipartite graph $G_3=\left\{ V,E_3\right\}$ with $E_3\subseteq \{\{v_1,v_2\}: v_i\in V_i\}$ are connecting the layers (Fig. \ref{fig:Multilayer}). Throughout the paper, we use the term \textit{intralayer} links for $E_1$ and $E_2$, and \textit{interlayer} links for $E_3$.

The links in $G_3$ bridge $G_1$ and $G_2$ and should be chosen strategically, for instance in a way that minimizes the disruption of the flow of information, electric power or goods, or to avoid failures against attackers and possible errors that can  fragment the system or cause  cascading phenomena \citep{Buldyrev2010}.
The edge weights of $G_3$ are the sole design parameters.

The Laplacian matrix is defined as 
\begin{equation}
\label{eq:Laplacian}
L(w):=
\sum_{\{i,j\}\in E_1\cup E_2}L_{ij}+\sum_{\{i,j\}\in E_3}w_{ij}L_{ij},
\end{equation}
where $L_{ij}:= (\delta_i-\delta_j)(\delta_i-\delta_j)^T$, for each link $\{i,j\}$, and $\delta_i$ is the delta function at vertex $i$. In particular, we think of the Laplacian matrix of $G$ as a function of the interlayer weights $w$.
Enumerating the vertices in $V_1$ followed by the vertices in $V_2$, we can write $L(w)$ in block form in terms of the Laplacian matrices of the layers, $L_1$ and $L_2$, as follows:
\begin{equation} \label{eq:LaplacMatrix}
\begin{aligned}
L(w)=
\begin{bmatrix}
L_1+W & -W \\
-W & L_2+W
\end{bmatrix}
\end{aligned}
\end{equation}
In particular, we will assume from now on that  $W=\text{diag}(w)$, meaning that $E_3$ consists of a perfect matching, see Fig. \ref{fig:multiplex}.
\begin{remark}\label{remark:posPert}
	All eigenvalues of $L(w)$ are nondecreasing functions of $w$,
	because \eqref{eq:LaplacMatrix} can be thought of as perturbing $L(w)$ using a positive semidefinite matrix.
\end{remark}
\begin{figure}[!htb]\label{fig:multiplex}
	\centering
	\includegraphics[clip,width=.75\columnwidth]{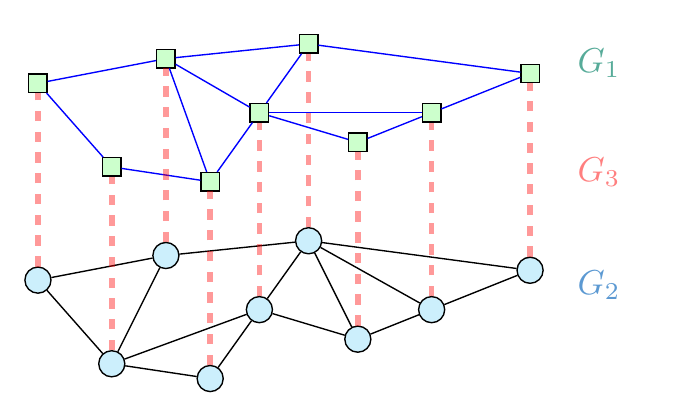}%
	\caption{A multiplex network with two layers $G_1$ and $G_2$ and interlayer link structure in $G_3$.}
	\label{fig:Multilayer}
\end{figure}

%Spectral graph theory is a powerful tool that allows assessing structural  features of  networks. In this context, nontrivial extremal eigenvalues of the Laplacian matrix are of particular importance due to their connections with fundamental structural properties. For example, the authors in \citep{Jamakovic2007} proved that the second eigenvalue of Laplacian matrix represents a  measure of network robustness, it is related to connectivity of networks \citep{Fiedler1973} and network bisection is harder for networks with larger algebraic connectivity \citep{Fallat2003}.
%
%Several studies exist on optimizing extremal eigenvalues on networks \citep{Boyd2006Laplacian,Boyd2006Growing}. Optimization problems are initially formulated as semidefinite primal-dual programs, holding strong duality. 
%Moreover, we explore the structural properties of networks from graph realization (embedding) problems that are extracted equivalently from the dual programs. This strategy has recently led to impressive results  that relates the optimal extremal eigenvalues to the geometrical properties such as separators of the network (a separator is a set splitting the graph into some separated parts). Even graphs with highly complex structures are likely to have embeddings of simple geometry with small dimension in optimal setting \citep{Goering2005,GoringShadowSeperator}. 

Our goal is to  allocate weights on the interlayer links, subject to a budget such that $\sum w_{ij} = c$,
 to obtain extremal spectral properties. More specifically, we are interested in maximizing the smallest positive eigenvalue of the Laplacian,  minimizing the largest eigenvalue, and minimizing the difference between the two. After formulating the primal-dual program and deriving its properties, the equivalent graph realization problem in each case is extracted and its features with respect to the multiplex network's structure are identified. 

%\subsection{Extremal eigenvalues of Laplacian matrix}\label{Section: Extermal}
%\subsection{Eigenvalues of Adjacency matrix}
%\subsubsection{Spectral radius $\mu_n(A)$}
%We can use the eigenvalues of the adjacency matrix to count the number of
%walks of definite length. Particularly, given a connected, non-bipartite graph $G$, the number of walks of length $k$ (for $k$ very large) is $\approx a\mu_n^2$ for some positive constant $a$ \citep{ButlerThesis}. 
%\subsubsection{spectral width}
%$\mu_n(A) - \mu_{n-1}(A)$
%\subsection{Eigenvalues of Laplacian matrix}
%Spectral graph theory is studying eigenproperties of matrix representations of graphs, e.g. eigenvalues of adjacency matrix \citep{cvetkovic2004spectral}. Eigenvalues and eigenvectors of the Laplacian are useful descriptors for graphs and they considered more natural with  interpretations \citep{mohar1991laplacian} in various fields such as theoretical chemistry \citep{Gutman2005PiElectron}, computer science \citep{Cvetkovic2011GraphSpectCompSci}, combinatorial optimization \citep{Mohar1993CombinatorialOptimization}, social networks \citep{Amelkin2017}, and cooperative mechanical systems \citep{Murray2004}. In this context the extremal eigenvalues are commonly of more importance. 

Recall that the Laplacian matrix $L(\omega) $ is positive semidefinite and has (at least for connected networks) one zero eigenvalue with eigenvector $\boldsymbol{e}=[1,\dots,1]^{T}$, the vector of all ones of appropriate length. The eigenvalues of $ L(\omega) $ are ordered as $ 0=\lambda_1(\omega)\leq\lambda_2(\omega)\leq\lambda_3(\omega)\leq\dots\leq\lambda_{n}(\omega) $.

The main contributions of the paper are traceable under several circumstances. While the literature is limited to uniformly weighted interconnections, we study optimal distributions (not necessarily uniform) of interlayer weights and define three types of optimization criteria based on our inter-structure design objectives. In addition to maximizing  algebraic connectivity $\lambda_2(\omega)$, we consider the problem of minimizing   spectral radius $\lambda_{n}(\omega) $ and finally minimizing the spectral width $\lambda_{n}(\omega) -\lambda_2(\omega)$ with motivating their  numerous applications. 
Most of our results for all three problems are  analytical so that there is no need for resorting to numerical solutions in such conditions. However,   formulating the considered  problems in a convex framework enables  efficient numerical approaches  in regions where there is no analytical solution yet. Moreover, we provide a primal-dual setting that reflects a multi-sided view of each problem; i.e. the geometric dual formulation reduces the primal problems into  graph embedding problems  related to several physical aspects that also enables easier interpretations.

\subsection{Algebraic connectivity $\lambda_2$}% $\lambda_2(L)$}
From now on we assume the network to be connected, thus the second smallest eigenvalue is positive in this case, and its magnitude reflects the level of connectedness of the graph. In fact, $\lambda_2$ is called the algebraic connectivity of $G$; \citet{Fiedler1973} showed that algebraic connectivity increases monotonically by adding links.  The algebraic connectivity can also be considered as a measure of network robustness \citep{Jamakovic2007}. Moreover, various bounds in graph partitioning, optimal graph labeling, min-sum problems, or bandwidth optimization can be obtained using $\lambda_2$ as a determining factor \citep{Juvan1993Bandwidth,Helmberg1995Bounds}. The convergence speed of various processes such as mixing Markov chains on graphs \citep{bremaud2013markov}, reaching consensus in multi-agent systems \citep{Jadbabaie2003NearestNeighbor,Olfati-Saber2004Consensus,Olfati-SaberFlock,Jadbabaie2007Flocking}, or diffusion dynamics on networks \citep{GomezDiffusion2013} are controlled by the second smallest eigenvalue of the Laplacian. 

In our multiplex model with varying weights $w$, the algebraic connectivity for constant $w$ grows linearly with $w$ up to a critical $w^*$, and then has a nonlinear behavior afterwards \citet{GomezDiffusion2013}. Bounds for $w^*$ are found in \citet{Radicchi2013} and its exact value is found in \citet{sahneh2014exact}.  In \citep{Martin-Hernandez2014}, the structure of the interlayer links $G_3$ is studied by comparing two configurations: diagonal (one-to-one) vs. random.

In a single layer  graph $G$, with variable edge-weights subject to a total budget, \citet{boyd2004fastestChain} and \citet{GoringShadowSeperator} discuss how maximizing $\lambda_2$  corresponds to a dual semidefinite optimization problem  and show that the optimal solutions of the dual are related to the eigenvectors of the optimal algebraic connectivity. It turns out that the dual may be interpreted as an embedding of the single-layer graph in $\mathbb{R}^n$ (optimal realization of the graph in Euclidean space), and the optimal embedding has structural properties tightly connected to the separators of the graph. 

In this paper we consider the setting of multi-layer networks with a one-to-one interconnected structure and maximize the algebraic connectivity of the whole given a limited total budget. We address this by formulating and studying the properties of primal and dual problems. \citet{shakeri2015PRL} show that strong duality holds and revisit the fact  that, up to a threshold budget, the solution to the primal problem is the uniform distribution with identical weights. For larger budgets, the optimal weights are generally not uniform. %\todo{while studying the primal problem gives information about the magnitude of the eigenvalue itself, studying the dual problem will reveal the corresponding eigenvector structure.}
The dual formulation transforms the problem into a graph realization (embedding) problem.  We show that before the threshold budget, the optimal realization is one-dimensional and consists of nodes in the same layer clumped together; while, beyond the threshold, the eigenvalues of the Laplacian coalesce and optimal embeddings generally take place in spaces with more than one dimension. Finally, for very large budgets the embedding is again one-dimensional.

\subsection{Laplacian spectral radius $\lambda_n$}
The largest eigenvalue $\lambda_n$ provides useful information about graph structural properties (see \citet{SusannaThesis} and references therein), various bounds for algebraic properties in graphs \citep{Juvan1993Bandwidth,Helmberg1995Bounds}, and applications in game theory \citep{li2009largest}.
The largest eigenvalue must be sufficiently small for stability of formation control algorithms when the
agent dynamics are prone to high-gain instability or unmodeled dynamics \citep{Bai2010InstabilityMechanisms}. Optimization of the largest eigenvalue is related to finding a weighted tree with the largest spectral radius of the Laplacian matrix \citep{Tan2010WeightedTrees,Li2011WeightedTrees}.  \citet{Fiedler1990LambdaMax} considered the problem of minimizing the maximum eigenvalue of the weighted Laplacian for trees and bipartite graphs and investigated connections with doubly stochastic matrices. \citet{goring2012MaxLambda} studied the problem of minimizing the maximum eigenvalue of the Laplacian of single-layer graphs, and by transforming the corresponding  dual problem studied the graph realization problem. 
%By the Perron-Frobenius theorem, the largest eigenvalue of a connected graph is simple.
%We address the problem of minimizing the maximum eigenvalue in multiplex networks  by studying the primal and dual problems, and the corresponding optimal embedding problem. We show that up to a threshold budget, solutions to the primal and dual problems are determined by the maximum eigenvalue of individual networks and its corresponding eigenvector, say $v_N^1$. Therefore, before the threshold, the total budget is distributed among those interlinks corresponding to the zero components of $v_N^1$, and the optimal maximum eigenvalue of the Laplacian is the same as the maximum eigenvalue of individual Laplacians; for larger budgets the weight distribution becomes more uniform. 
%Moreover, before the threshold budget, the optimal graph embedding is one-dimensional and non-homogeneous, since nodes of the graph with largest eigenvalue, say $G_1$, are distributed align the line while the other layer, is embedded at the origin. Beyond the threshold, optimal embeddings expand to be multi-dimensional, and for larger values of budget, the two layers fill the embedding space,  leading to more homogeneous embedding with respect to the nodes of the individual networks.
%%The largest eigenvalue increases with adding edges to the graph.  

\subsection{Spectral width $\lambda_n-\lambda_2$}
The difference between the second smallest and largest eigenvalue, or the \textit{spectral width}, provides important information about several bounds on different graph metrics. For instance, the value of a uniform sparsest cut falls in the interval bounded by the second smallest and the largest eigenvalue of the Laplacian divided by the number of nodes (see \citet[Lemma 4.1]{beineke2004topics}).
%\todo{p16 Reiss}. 
Moreover, a small spectral width implies that the graph is Hamiltonian, i.e. a graph possessing a cycle visiting each vertex exactly once \citep{Butler2010Hamiltonian}. Also instability can occur in cooperative motion control when the frequency of the parametric perturbation is near an eigenvalue of the Laplacian \citep{Bai2010InstabilityMechanisms}. Thus, minimizing the spectral width will shrink the frequency band of parametric resonance. \citet{goring2013Gap} studied the problem of minimizing the spectral width and discovered connections between this hybrid problem and the separate problems of maximizing the algebraic connectivity and minimizing the spectral radius. 
In this paper, the equivalent  problem for multiplex networks is considered by deriving and studying the primal and dual embedding problems. In particular, we show that the primal and embedding problems corresponding to spectral width minimization can be related to the individual problems of maximizing $\lambda_2$ and minimizing $\lambda_n$.

%\textbf{sth about Expanders?}
%A sufficient condition for a graph being Hamiltonian is that the nontrivial eigenvalues of the Laplacian are sufficiently close to the average degree of the graph

%Optimizing the Laplacian eigenvalues leads to semidefinite programming. Since the initial (primal) optimization problem is convex, strong duality holds. Therefore, studying the dual problem reveals important facts about solution of the optimization. 
\subsection{Graph realization problem}
Inspired by \citet{Boyd2006FastestMP}, 
we reformulate the dual problem  as a graph realization problem and study the connections to the graph's structure. Since the Laplacian is a positive semidefinite matrix, the solution of the dual program can be represented by a Gram matrix. The dual problem is seen to be equivalent to a realization of the graph in $\mathbb{R}^n$, by assigning the graph nodes to the vectors of the Gram representation. It is of particular interest to find the smallest dimension for which a graph is realizable. Graph realizations related to extremal Laplacian eigenvalues have close connections with problems in other areas. For example, graph realization problems arise in the determination of molecular conformation \citep{Hendrickson1995MoleculeProblem}. An interesting problem also appears in manifold learning where the structure of a low dimensional manifold is constructed from sampled high dimensional data  \citep{Weinberger2004Kernel, Weinberger2006UnsupervisedLearning}.\\% (low dimensional representation of high dimensional data that arises, e.g., when for a given data set of images a low dimensional representation of each image is to be computed such that distances between nearby images are preserved) \\

%Intuitively, it is clear that the complexity of the structure of an optimal
%embedding mainly depends on some kind of central separator enclosing the barycenter in this embedding. Consequently, even graphs with highly complex separator structures
%are likely to have optimal embeddings of very small dimension, if these structures only appear on the periphery \citep{goring2011rotational}.

In Sections \ref{sec:lambda2}, \ref{sec:lambdan}, and \ref{app:gap}, we formulate maximizing the second smallest eigenvalue, minimizing the maximum eigenvalue, and minimizing the difference between the second smallest and the largest eigenvalue in multiplex networks, respectively. We inspect and demonstrate the properties of primal-dual and embedding problems in each case. Section \ref{sec:conclusion} is devoted to concluding remarks.

\section{Maximizing $\lambda_2$}\label{sec:lambda2}
%\subsection{Primal problem for $\lambda_2$}\label{subsec:Lam2Prim} 
This section examines properties of the graph resulting from maximizing the second smallest eigenvalue under a budget constraint on the interlayer edge weights. The original work for single-layer graphs was proposed by \citet{FiedlerLaplacianOfGraphs} who maximized the graph connectivity under the budget constraint that the total of the edge weights equals the number of edges. \citet{shakeri2015PRL}  follow \citet{GoringShadowSeperator,goring2011rotational} and propose the following formulation for  multi-layer networks 
\begin{equation}\label{lambda_2Primal}
\begin{aligned}
& \underset{w_{ij}, \lambda_2, \mu}{\text{maximize}}
& & \lambda_2 \\
& \text{subject to}
& & \sum_{ij\in E_3}w_{ij}L_{ij}+L_0+\mu \boldsymbol{e} \boldsymbol{e}^T - \lambda_2 I\succeq 0 \\
& & & \sum_{ij\in E_3}w_{ij}= c \\
& & & w_{ij}\geq 0 
\end{aligned}
\end{equation}
where $L_0 = \sum_{ij\in E_1\cup E_2}L_{ij}$ is the Laplacian for the disjoint union of the layers. In the semidefinite constraint, the free variable $\mu$ serves to shift the zero eigenvalue, and ensure that $\lambda_2$ becomes the smallest eigenvalue of $\sum_{ij\in E_3}w_{ij}L_{ij}+L_0+\mu \boldsymbol{e} \boldsymbol{e}^T $. In this regard, when the optimal solution $ (w_{ij}^*, \lambda_2^*, \mu^*) $ is attained, $\lambda_2^*$ is the smallest eigenvalue of $\sum_{ij\in E_3}w_{ij}^*L_{ij}+L_0+\mu^* \boldsymbol{e} \boldsymbol{e}^T $ or equivalently the second smallest eigenvalue of $\sum_{ij\in E_3}w_{ij}^*L_{ij}+L_0 $. \\

\citet{shakeri2015PRL} show that, for the special case of multiplex networks with identical layers the uniform weight distribution is always optimal for \eqref{lambda_2Primal}. Moreover, the corresponding optimal algebraic connectivity increases linearly with the budget $c$ up to a threshold $c^*>0$ and then remains constant after that. When the layers are not identical, the 
uniform distribution is optimal only for budgets $c$ up to the threshold $ c^*>0 $, 
and increasing the budget past the threshold, yields nonuniform optimal weight distributions. Therefore, in this case, it is still possible to improve the algebraic connectivity by increasing the budget beyond $ c^* $. Figure \ref{fig:Primal} shows uniform and nonuniform optimal weight distributions in two budget regimes  for a small multiplex.

\begin{figure}[!htb]
	\centering
	\subfloat[]{%
		\includegraphics[clip,width=.8\columnwidth]{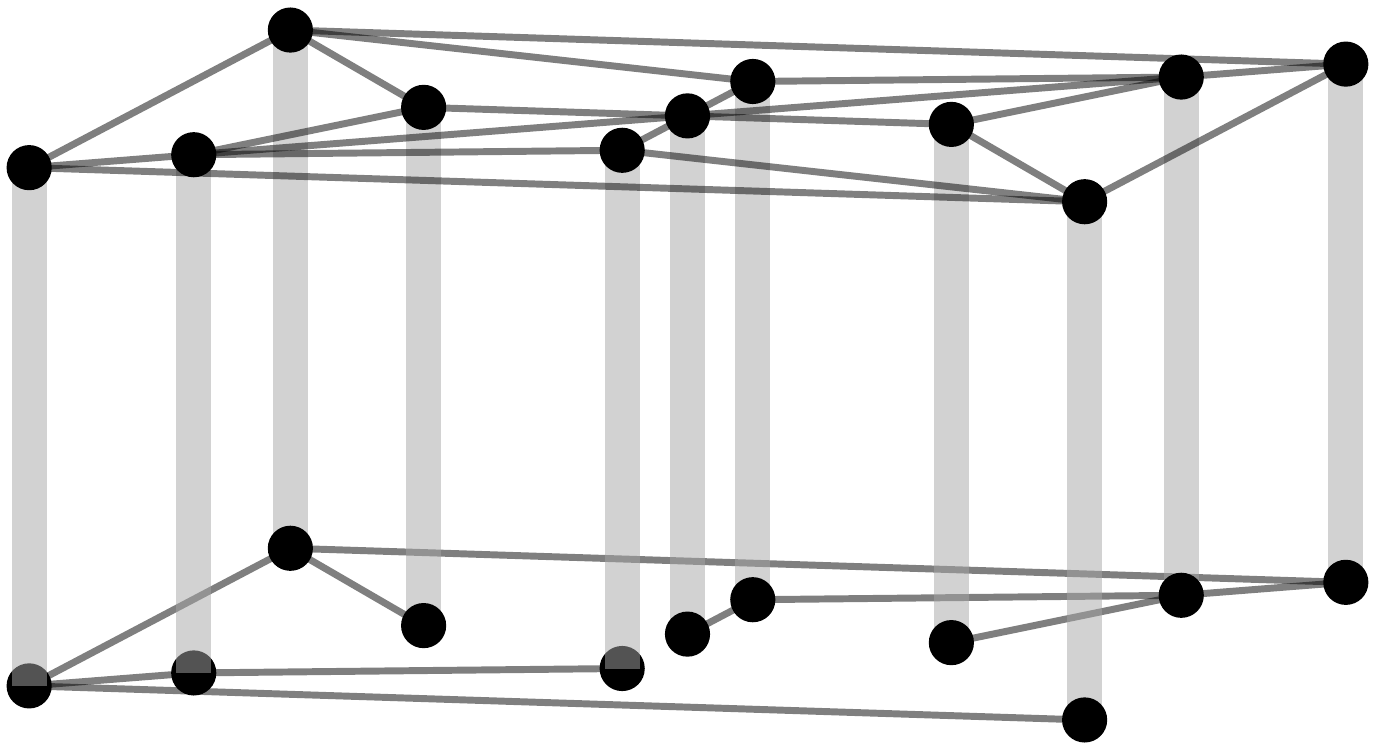}%
	}\\
	\subfloat[]{%
		\includegraphics[clip,width=.8\columnwidth]{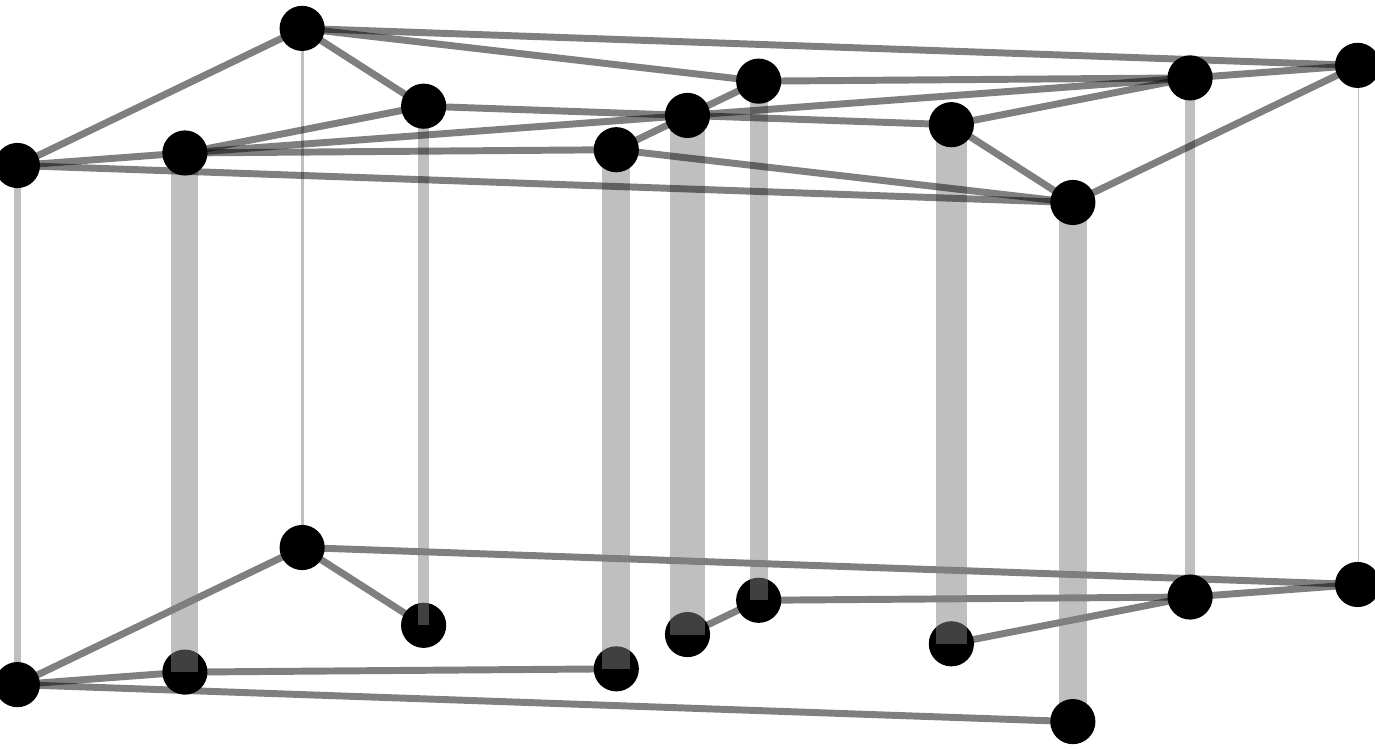}%
	}%
	\caption{Weight distribution for optimal algebraic connectivity: (a) uniform weights for $c\leq c^*$, (b) nonuniform weights for $c>c^*$.}
	\label{fig:Primal}
\end{figure}

More specifically, before the threshold, $\underset{\sum{w_{ij}}= c}{\text{max}}\lambda_2(w_{ij})=\frac{4c}{n}$, $\forall c\leq c^*$,  and beyond the threshold it remains under $4c/n$ for all $c>c^*$. Lacking the knowledge of these regimes casts serious challenges on proposing heauristics similar to \citet{ghosh2006growing}.
%\todo{add discussion on heuristics}

The threshold budget is calculated in \citep{sahneh2014exact} (and in \citet{shakeri2015PRL} with a different proof) to be
\begin{equation}
\begin{aligned}
c^*=\frac{n}{2}\lambda_2\left[\left(L_1^\dagger+L_2^\dagger\right)^\dagger\right]
\end {aligned}
\end{equation}
where $ ^{\dagger}$ denotes the Moore-Penrose pseudoinverse. An upper-bound for the algebraic connectivity is also given in \citet{Radicchi2013}
\begin{equation}\label{ConnecBound} \lambda_2\left[L\left(w^*\right)\right] \le \lambda_2(L_{ave}) 
\end{equation}
where $L_{ave} = \frac{L_1+L_2}{2}$.
%\begin{align*}
%M = -\lambda_2 - \langle X, \sum_{ij\in E_3} w_{ij}L_{ij} +L_0 + \mu ee^T-\lambda_2 I\rangle\\
%-\xi \left( \sum_{ij\in E_3}w_{ij} - c\right)
%-\langle \theta , w \rangle
%\end{align*}
%
%\begin{align*}
%M = \lambda_2\left(-1+\langle X, I\rangle \right)
%+\sum w_{ij} \left( -\langle X, L_{ij}\rangle -\xi - \theta_{ij} \right)\\
%-\mu \langle X, ee^T\rangle  +c\xi  - \langle X, L_0\rangle
%\end{align*}
Figure \ref{fig:lambda2} compares the optimal value of algebraic connectivity to the one obtained by the uniform distribution as the budget $c$ varies past the threshold, for four different
random layer structures (see SM \ref{section:NetModels} for network models). In all cases, the optimal distribution gives a higher algebraic connectivity after the threshold. We also observe that for $c>c^*$, $\lambda_2$ increases at a slower rate than before the threshold. \\

In Figure \ref{fig:lambda2ER} , we consider two Erd\H{o}s-R\'enyi layers and vary the difference of the algebraic connectivity of the layers, i.e. we look at small and large values of $\left|\lambda_2(L_1)-\lambda_2(L_2)\right|$. In the case when the layers have similar algebraic connectivities, the threshold $c^*$ has a larger value. This enlarges the linear part of the diagram, thus postponing the nonlinear region and its slow growth. As a result, nodes in multiplex graphs with individual layers sharing near connectivity properties are synchronized more easily (with less budget) and superdiffusion occurs \citep{GomezDiffusion2013}. To achieve the same degree of interdependent connectivity when layers have much different connectivity properties, a larger budget is required. In the extreme case of identical layers, the upperbound \eqref{ConnecBound} is achieved at the threshold. Similar behaviors are observed for the other network models. 
\begin{figure}[!htb]
	%	\subfloat[]{
	%			\includegraphics[clip,width=\columnwidth]{BA_BA30_1reg.pdf}
	%		} 	
	%	\subfloat[]{\includegraphics[clip,width=\columnwidth]{ER_ER_30.pdf}          	
	%        }  
	\subfloat[]{\includegraphics[clip,width=.5\columnwidth]{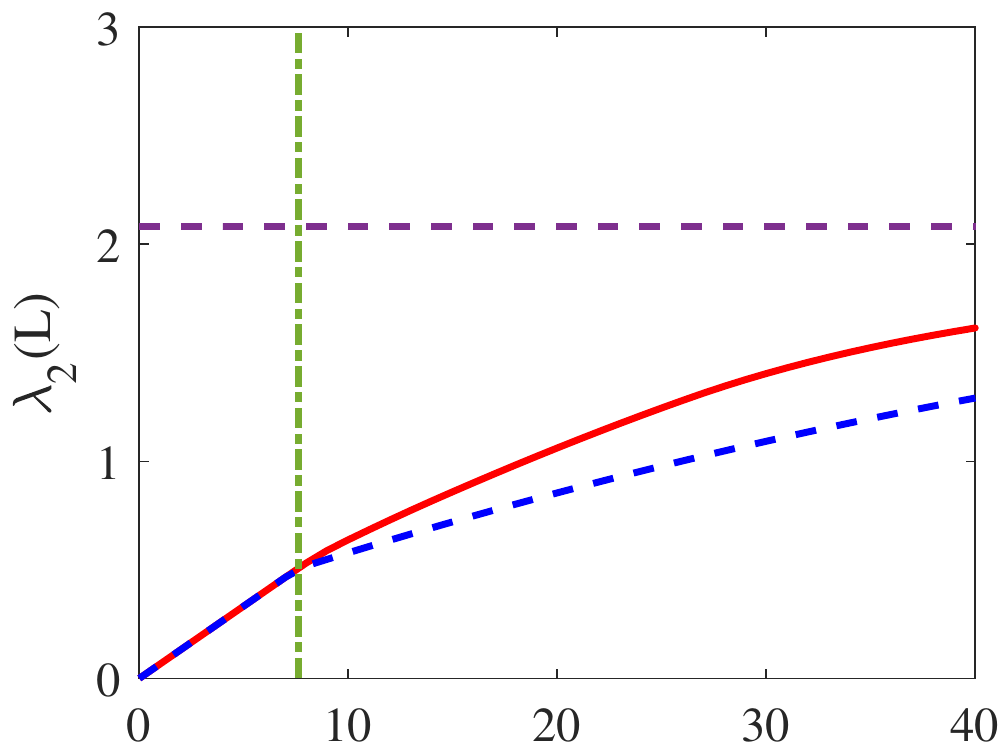}          	
	} 
	\subfloat[]{\includegraphics[clip,width=.5\columnwidth]{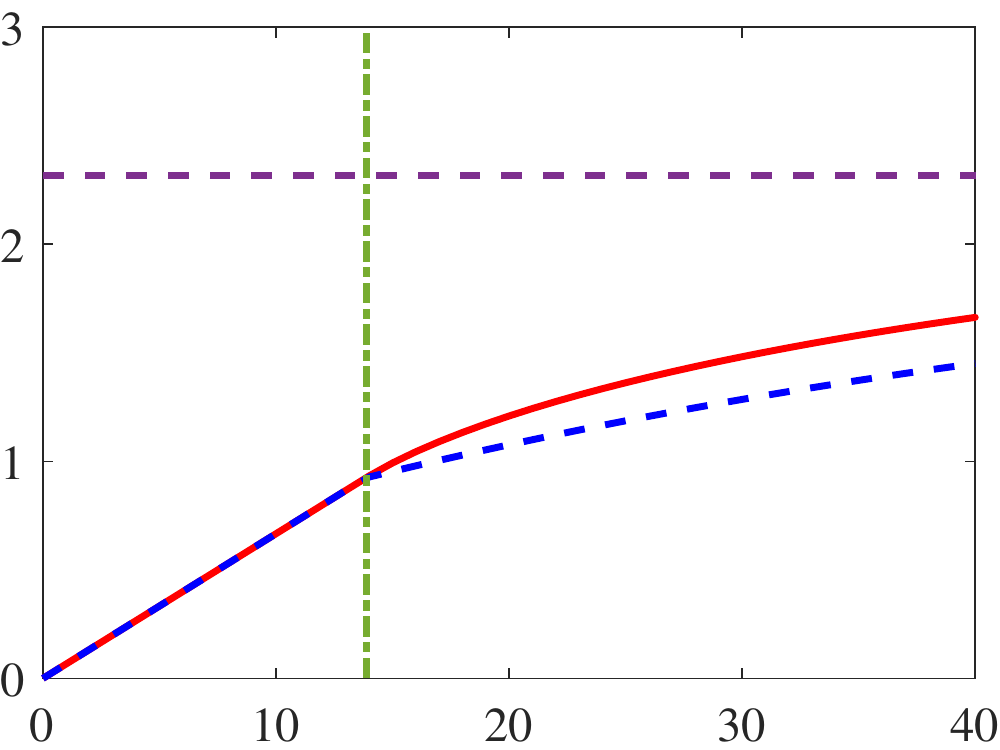}          	
	}
	\\
	\subfloat[]{\includegraphics[clip,width=.5\columnwidth]{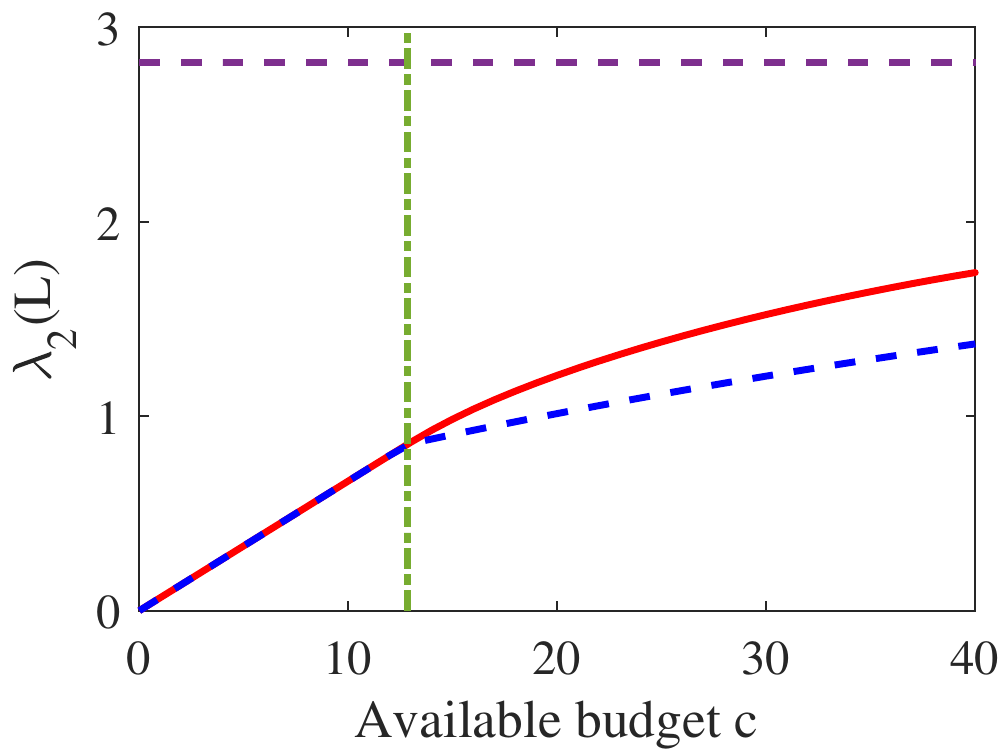}          	
	} 
	\subfloat[]{\includegraphics[clip,width=.5\columnwidth]{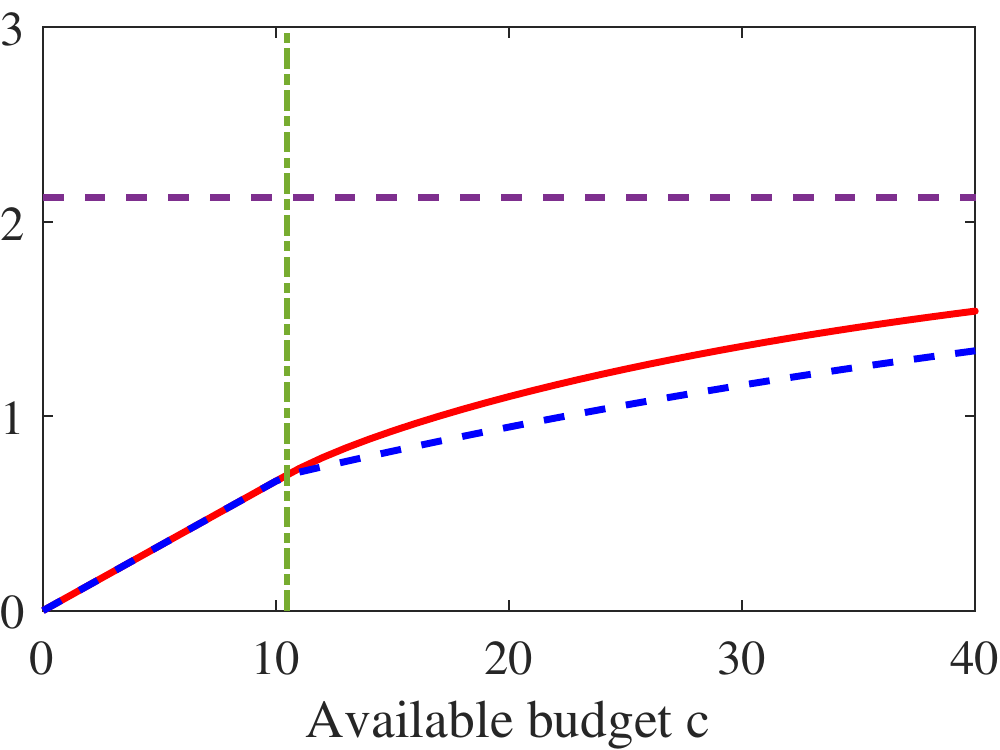}          	
	} 
    \\
    \subfloat[]{\includegraphics[clip,width=.6\columnwidth]{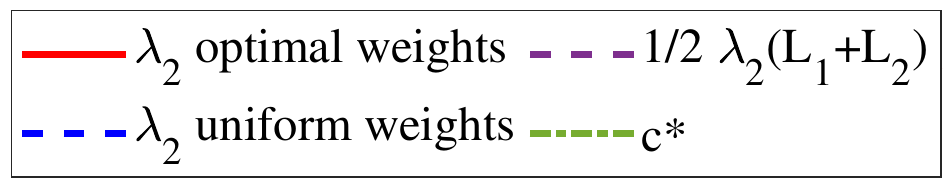}          	
    } 
	\caption{Algebraic connectivity for two different (a) Barab\'asi-Albert scale free, (b) Erd\H{o}s-R\'enyi, (c) random geometric, and (d) Watts-Strogatz networks with individual networks of 30 nodes (see SM \ref{section:NetModels} for network models).}
	\label{fig:lambda2}
\end{figure}

\begin{figure}[!htb]
	\subfloat[]{\includegraphics[clip,width=.5\columnwidth]{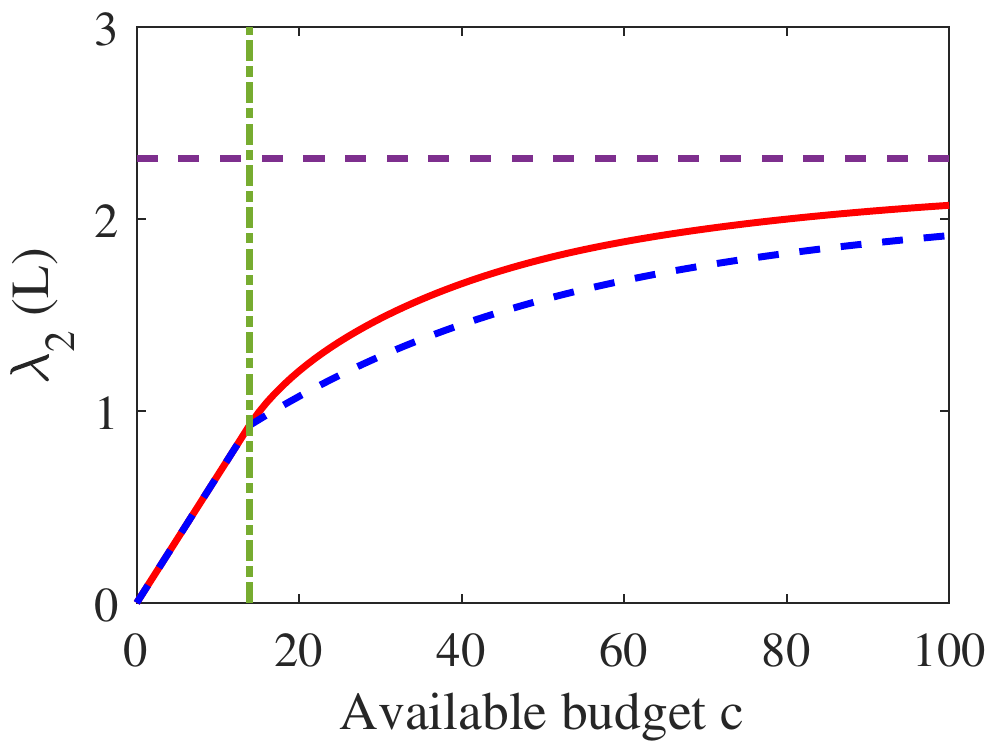}          	
	} 
	\subfloat[]{\includegraphics[clip,width=.5\columnwidth]{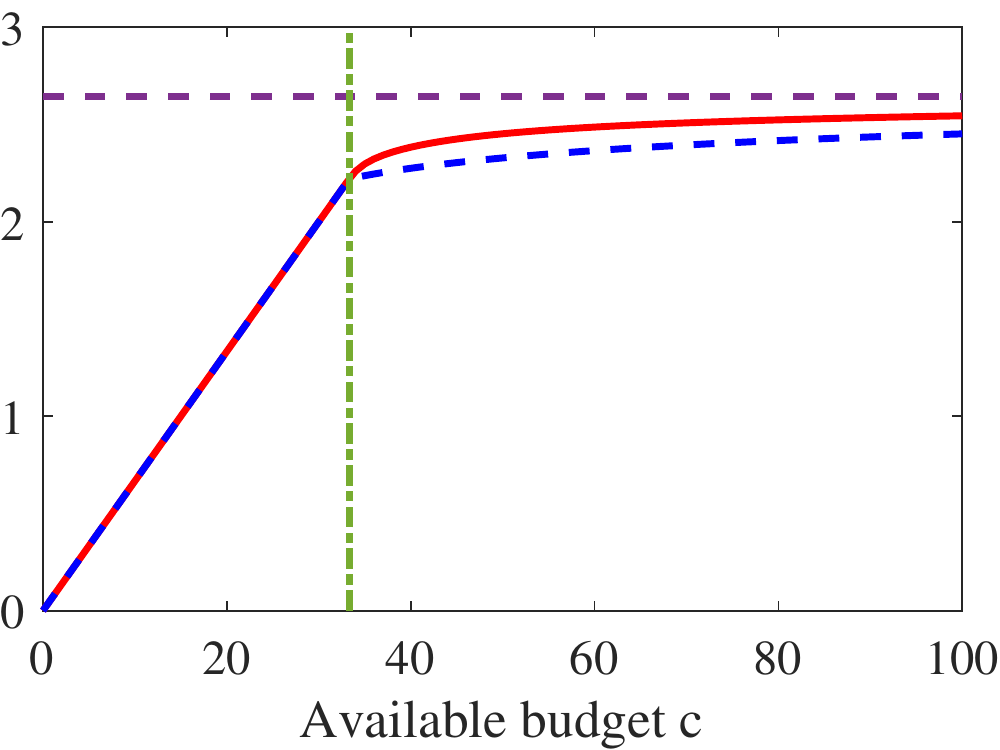}          	
	}\\
    \subfloat[]{\includegraphics[clip,width=.6\columnwidth]{Legend1.pdf}          	
    }	
	\caption{Algebraic connectivity for network with interconnected Erd\H{o}s-R\'enyi layers with (a) large  difference between the algebraic connectivities of individual networks ($\lambda_2(L_1)=0.5105$ and $\lambda_2(L_2)=2.3485$), and (b) small difference between the algebraic connectivities of individual networks ($\lambda_2(L_1)=1.4752$ and $\lambda_2(L_2)=1.6723$).}
	\label{fig:lambda2ER}
\end{figure}
\subsection{Dual and embedding problems for $\lambda_2$}\label{subsec:Lam2Dual}
The dual problem for \eqref{lambda_2Primal} is obtained by the Lagrangian approach \citep{boyd2004convex},  (see SM \ref{app_lagrangian}), and can be written as follow:
\begin{equation}\label{lambda_2Dual}
\begin{aligned}
& \underset{\xi\in\mathbb{R}, X\in\mathbb{R}^{n\times n}}{\text{maximize}}
& & c\xi -\langle X, L_0\rangle  \\
& \text{subject to}
& & \langle X,I\rangle = 1 \\
& & & \langle X,ee^T\rangle = 0\\
& & & \langle X,L_{ij}\rangle \leq -\xi ~~\quad\forall\lbrace i,j\rbrace\in E_3\\
& & & X\succeq 0 
\end{aligned}
\end{equation}
where  $\langle X, L_0\rangle =\Tr(L_0^TX) = \sum_{\lbrace i,j\rbrace \in E_1\cup E_2}x_{ii}+x_{jj}-2x_{ij}$.

\begin{proposition}\label{lem:FeasibleDualSolution}
	The feasible set of the dual problem is not empty.
\end{proposition}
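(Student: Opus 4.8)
The plan is to exhibit a single explicit point $(\xi, X)$ satisfying all four constraints, since nonemptiness only requires one feasible solution. The first observation I would make is that, because $X\succeq 0$, the equality constraint $\langle X, ee^T\rangle = e^T X e = 0$ is in fact equivalent to $Xe=0$: writing $e^T X e = \lVert X^{1/2} e\rVert^2$, the vanishing of this quantity forces $X^{1/2}e = 0$ and hence $Xe=0$. Thus the genuine requirement on $X$ is that it be a positive semidefinite matrix of unit trace annihilating the all-ones vector. The free scalar $\xi$ imposes no real restriction: once $X$ is fixed, the family of inequalities $\langle X, L_{ij}\rangle \le -\xi$ over $\{i,j\}\in E_3$ is satisfied by any $\xi \le \min_{\{i,j\}\in E_3}\bigl(-\langle X, L_{ij}\rangle\bigr)$, and such a $\xi$ always exists (take it sufficiently negative).

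Next I would construct $X$ directly. Since $n = |V_1| + |V_2| = 2|V_1| \ge 2$, the orthogonal complement of $e$ in $\mathbb{R}^n$ is nontrivial, so I may choose a unit vector $v$ with $v\perp e$ and set $X \defeq vv^T$. Then $X\succeq 0$ as a rank-one outer product, $\langle X, I\rangle = \Tr(vv^T) = \lVert v\rVert^2 = 1$, and $\langle X, ee^T\rangle = (v^T e)^2 = 0$. Selecting $\xi$ as above then yields a point satisfying every constraint of \eqref{lambda_2Dual}, which establishes the claim.

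There is essentially no hard step here; the only point requiring a moment's care is recognizing that positive semidefiniteness upgrades the scalar condition $e^T X e = 0$ to the vector condition $Xe=0$, and that the edge inequalities pose no obstruction because $\xi$ is unbounded below within the feasible region (its only role is to be maximized in the objective). Consequently the feasible set is not merely nonempty but in fact contains an entire ray in the $\xi$-direction attached to each admissible $X$.
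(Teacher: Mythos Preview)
Your proof is correct and follows essentially the same approach as the paper: both construct a rank-one matrix $X = vv^{T}$ with $v$ a unit vector orthogonal to $\boldsymbol{e}$ (the paper's Gram construction with all $u_i$ proportional to a fixed $h$ collapses to exactly this, and their explicit choice $\alpha_i=-\beta_i=1/\sqrt{n}$ is one particular such $v$), and then take $\xi$ sufficiently negative to satisfy the edge inequalities. Your observation that $e^{T}Xe=0$ with $X\succeq 0$ forces $Xe=0$ is a nice aside but not needed for the construction itself.
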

\begin{proof}
	See SM \ref{proof:FeasibleDualSolution}.
\end{proof}
\begin{proposition}\label{lem:StrongDuality}
	Strong duality holds for the primal and dual problems \eqref{lambda_2Primal} and \eqref{lambda_2Dual}, and the dual problem attains its optimal solution.
\end{proposition}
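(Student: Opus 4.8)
The plan is to establish the claim through Slater's constraint qualification applied to the primal semidefinite program \eqref{lambda_2Primal}. Since \eqref{lambda_2Primal} is a convex (indeed semidefinite) program with a finite optimal value, standard conic duality theory (see, e.g., \citet{boyd2004convex}) guarantees that as soon as one exhibits a strictly feasible primal point, both assertions follow at once: the duality gap vanishes and the dual maximum \eqref{lambda_2Dual} is attained. So the real work is the explicit construction of a Slater point, together with a boundedness check.

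First I would construct such a strictly feasible triple $(w_{ij},\lambda_2,\mu)$. Take the uniform allocation $w_{ij}=c/|E_3|>0$ for every $\{i,j\}\in E_3$; this satisfies the budget equality $\sum_{ij\in E_3} w_{ij}=c$ and makes each nonnegativity constraint $w_{ij}\ge 0$ strict. For this choice, $\sum_{ij\in E_3} w_{ij}L_{ij}+L_0=L(w)$ is the full Laplacian of the connected multiplex, hence positive semidefinite with one-dimensional kernel $\ker L(w)=\mathrm{span}\{\boldsymbol e\}$. Choosing any $\mu>0$ then lifts the single zero eigenvalue: on $\boldsymbol e$ the matrix $L(w)+\mu\boldsymbol e\boldsymbol e^T$ has eigenvalue $\mu n>0$, while on $\boldsymbol e^{\perp}$ it agrees with $L(w)$, whose restriction there is positive definite because $\lambda_2(L(w))>0$ for a connected graph. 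Thus $L(w)+\mu\boldsymbol e\boldsymbol e^T\succ 0$, and picking any $\lambda_2$ with $0<\lambda_2<\lambda_{\min}\!\left(L(w)+\mu\boldsymbol e\boldsymbol e^T\right)$ yields $\sum_{ij\in E_3}w_{ij}L_{ij}+L_0+\mu\boldsymbol e\boldsymbol e^T-\lambda_2 I\succ 0$, so the semidefinite constraint holds strictly as well.

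Next I would confirm that the primal value is finite, so that attainment of the dual optimum is not vacuous. Taking traces in \eqref{eq:Laplacian} and using $\Tr L_{ij}=2$ gives $\Tr L(w)=\Tr L_0+2c$; since $\lambda_2\le \frac{1}{n-1}\sum_{k\ge 2}\lambda_k(L(w))=\frac{\Tr L(w)}{n-1}$, the primal objective is bounded above by $(\Tr L_0+2c)/(n-1)$. With strict feasibility and finiteness both in hand, the conic Slater theorem applied to the primal (equivalently, to the minimization of $-\lambda_2$) closes both parts of the proposition simultaneously: $p^*=d^*$ and the dual supremum is achieved.

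The main obstacle is not one of depth but of getting the spectral bookkeeping exactly right. One must use the connectivity hypothesis to pin down $\ker L(w)=\mathrm{span}\{\boldsymbol e\}$ precisely, so that positive definiteness genuinely holds on all of $\mathbb{R}^n$ after the rank-one shift, and one must check that the free variable $\mu$ enters only through $\mu\boldsymbol e\boldsymbol e^T$, which leaves the spectrum on $\boldsymbol e^{\perp}$ untouched and therefore cannot spoil strict feasibility there. Once the decomposition along $\boldsymbol e$ versus $\boldsymbol e^{\perp}$ is made explicit, the remaining invocation of Slater's condition is routine.
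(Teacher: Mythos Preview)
Your proposal is correct and follows essentially the same route as the paper: both verify Slater's condition for the primal \eqref{lambda_2Primal} by exhibiting a strictly feasible triple, then invoke standard conic duality to conclude strong duality and dual attainment. The only cosmetic difference is the choice of Slater point---the paper simply takes any $\hat w>0$ with $\sum\hat w_{ij}=c$, any $\hat\mu\ge 0$, and $\hat\lambda<0$ (so that $L(\hat w)+\hat\mu\boldsymbol e\boldsymbol e^T-\hat\lambda I\succ 0$ trivially), whereas you work a bit harder to produce a Slater point with $\lambda_2>0$ via the connectivity assumption; both constructions are valid.
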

\begin{proof}
	See SM \ref{proof:StrongDuality}.
\end{proof}

Since $X$ is a positive semi-definite matrix,  we can use its Gram representation $X = U^TU$, where $U\in \mathbb{R}^{n\times n}$ \citep{Boyd2006FastestMP}, and rewrite \eqref{lambda_2Dual} as:
\begin{equation}\label{Lambda2Embedding}
\begin{aligned}
& \underset{\xi\in \mathbb{R}, u_i\in \mathbb{R}^n}{\text{maximize}}
& & c\xi -\sum_{\lbrace i,j\rbrace\in E_1\cup E_2}\| u_i-u_j \|^2 \\
& \text{subject to}
& & \sum_{i\in V}\|u_i\|^2 = 1\\
& & & \sum_{i\in V}u_i = 0\\
& & &  \| u_i - u_j\|^2\leq -\xi \quad \ \forall \lbrace i,j\rbrace\in E_3\\
\end{aligned}
\end{equation}
%From KKT-stationarity:
%\begin{equation}\label{eq:KKT_L}
%\lambda v_i - \sum_{ij\in E_1\cup E_2} (v_i -v_j) -\sum_{ij\in E_3} w_{ij}(v_i-v_j) = 0~\text{for}~ i\in V
%\end{equation}
%The rank of the embedding is the same as the rank of Lapacian matrix.
%
%From KKT-slackness:
%\begin{equation}\label{eq:KKT_dist_n}
%\|v_i-v_j\|^2 = -\xi ~\text{for}~ij\in E_3
%\end{equation}

\begin{proposition}\label{lem:ProjectionOfEmbedding}
	The projections of optimal embedding onto one-dimensional subspaces yield eigenvectors for the algebraic connectivity.
\end{proposition}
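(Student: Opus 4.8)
The plan is to run the standard primal--dual (KKT) machinery: use strong duality, recorded in Proposition~\ref{lem:StrongDuality}, to extract the complementary slackness condition that ties the primal optimum $(w^*,\lambda_2^*,\mu^*)$ to the dual optimum $X^*$, and then read off the geometry of the embedding from $\operatorname{range}(X^*)$. Writing the primal semidefinite constraint as $M(w,\mu)-\lambda_2 I\succeq 0$ with $M(w,\mu)=L(w)+\mu\,\boldsymbol{e}\boldsymbol{e}^T$, the matrix $X$ is exactly the Lagrange multiplier attached to this constraint. Since both problems attain their optima and the duality gap vanishes, complementary slackness forces $\langle X^*,\,M(w^*,\mu^*)-\lambda_2^* I\rangle=\Tr\big((M(w^*,\mu^*)-\lambda_2^* I)X^*\big)=0$.

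First I would upgrade this scalar identity to a matrix identity. Because $X^*\succeq 0$ and $M(w^*,\mu^*)-\lambda_2^* I\succeq 0$ are both positive semidefinite, a vanishing trace inner product of two PSD matrices forces their product to vanish, i.e. $X^*\big(M(w^*,\mu^*)-\lambda_2^* I\big)=0$. Transposing (both factors are symmetric) gives $\big(M(w^*,\mu^*)-\lambda_2^* I\big)X^*=0$, so that $\operatorname{range}(X^*)\subseteq\ker\big(M(w^*,\mu^*)-\lambda_2^* I\big)$.

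Next I would identify this kernel explicitly as the algebraic-connectivity eigenspace. A vector $v$ lies in it iff $\big(L(w^*)+\mu^*\boldsymbol{e}\boldsymbol{e}^T\big)v=\lambda_2^* v$. For $v\perp\boldsymbol{e}$ the rank-one term drops out and this reads $L(w^*)v=\lambda_2^* v$; the all-ones direction $\boldsymbol{e}$ is excluded because the role of $\mu^*$ is precisely to shift the zero eigenvalue of $L(w^*)$ up to $\mu^* n\neq\lambda_2^*$. Hence $\ker\big(M(w^*,\mu^*)-\lambda_2^* I\big)$ is exactly the $\lambda_2^*$-eigenspace of $L(w^*)$. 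Finally I would pass to the embedding picture through the Gram factorization $X^*=U^TU$ with $U=[\,u_1\mid\cdots\mid u_n\,]$: the projection of the optimal embedding onto the line spanned by a unit vector $\theta$ is the coordinate vector $v_\theta=(\langle u_i,\theta\rangle)_i=U^T\theta$. As $\theta$ ranges over $\mathbb{R}^n$, $v_\theta$ fills out $\operatorname{range}(U^T)=\operatorname{range}(X^*)$, which by the previous step sits inside the $\lambda_2^*$-eigenspace. Thus every such projection is a $\lambda_2^*$-eigenvector of $L(w^*)$ (or the zero vector, when $\theta$ is orthogonal to the embedding). The centering constraint $\sum_i u_i=0$, i.e. $U\boldsymbol{e}=0$, guarantees $v_\theta\perp\boldsymbol{e}$, consistent with its being a genuine algebraic-connectivity eigenvector.

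I expect the main obstacle to be the two linear-algebra facts on which the whole argument pivots: that a zero trace inner product of PSD matrices forces the matrix product to be zero (this is what converts complementary slackness into the crucial range containment), and that $\operatorname{range}(U^TU)=\operatorname{range}(U^T)$ (this is what guarantees the projections genuinely sweep out all of $\operatorname{range}(X^*)$ rather than a proper subspace). Both are standard, but they carry the proposition, so I would state and justify them carefully rather than invoke them in passing.
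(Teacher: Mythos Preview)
Your proof is correct and follows essentially the same route as the paper: both invoke complementary slackness (via strong duality) to obtain $\Tr\big(X^*(L(w^*)-\lambda_2^* I)\big)=0$ and then use positive semidefiniteness to place $\operatorname{range}(X^*)=\operatorname{range}(U^T)$ inside the $\lambda_2^*$-eigenspace, so every projection $U^T\theta$ is an eigenvector. The only cosmetic difference is that the paper drops the $\mu\,\boldsymbol{e}\boldsymbol{e}^T$ term up front via the dual constraint $\langle X,\boldsymbol{e}\boldsymbol{e}^T\rangle=0$ and argues term-by-term on the trace sum, whereas you retain $\mu$ and appeal to the matrix fact that $\Tr(AB)=0$ with $A,B\succeq 0$ forces $AB=0$.
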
 
\begin{proof}
	See SM \ref{proof:ProjectionOfEmbedding}.%\todo{rephrase the proof to be more clear.}
\end{proof}

%\textbf{Strong duality}
%
%This condition is same as before in \eqref{eq:DualLength}. The dual embedding tries to minimize the spread of the nodes with the limitation of the length. The intermediate edges have the same length $\xi$---which comes from the slack of budget constraint in the primal problem \eqref{NewFrame0_n}---in the embedding but they endure different forces, which are the optimal weight in \eqref{NewFrame0_n}. 
%
%\subsection{Scaled version}

%\subsection{Scaled embedding and structural properties}\label{subsec:Lam2Scale}
For connected single-layer networks, \citet{GoringShadowSeperator}   scale the weights by $c\lambda_2\neq0$, i.e., they consider $\frac{w_{ij}}{c\lambda_2} = \hat{w}_{ij}$, where $c$ is the budget.  We can use the same procedure here, 
%%HSH: why this:
%if we assume $G_3$ is connected,
and obtain a scaled version of the primal-dual problem for multiplex networks. The scaled primal problem of \eqref{lambda_2Dual} is 
\begin{equation}\label{eq:PrimalScaled}
\begin{aligned}
& \underset{\hat{w}_{i,j}\in\mathbb{R}^{E_3}}{\text{minimize}}
& & \sum_{\lbrace i,j\rbrace\in E_3}\hat{w}_{ij} \\
& \text{subject to}
& & c\sum_{i,j\in E_3}\hat{w}_{ij}L_{ij}+(\sum_{\lbrace i,j\rbrace\in E_3}\hat{w}_{ij})L_0\\
& & & \ \  \  \ +\hat{\mu} \boldsymbol{e} \boldsymbol{e}^T - I\succeq 0 \\
& & & \hat{w}_{ij}\geq 0 \ \ \forall\lbrace i,j\rbrace\in E_3
\end{aligned}
\end{equation}
and the scaled dual (embedding) problem \eqref{Lambda2Embedding} is written as (see SM \ref{app_lagrangian})
\begin{equation}\label{eq:EmbeddingScaled}
\begin{aligned}
& \underset{\hat{u}_i\in\mathbb{R}^n}{\text{maximize}}
& & \sum_{i\in V}\| \hat{u}_i \|^2 \\
& \text{subject to}
& & c\| \hat{u}_i - \hat{u}_j\|^2+\sum_{\lbrace k,l\rbrace \in E_1}\| \hat{u}_k - \hat{u}_l\|^2+\\
& & &  \sum_{\lbrace k,l\rbrace \in E_2}\| \hat{u}_k - \hat{u}_l\|^2\leq 1 \ \ \forall\lbrace i,j\rbrace\in E_3\\
& & & \sum_{i\in V}\hat{u}_i= 0
%& & & \hat{U}\succeq 0
\end{aligned}
\end{equation}
It is known that there are transformations that map optimal solutions of the primal and dual problems to the scaled ones \citep{SusannaThesis}. Together with Proposition \ref{lem:StrongDuality}, this shows that strong duality holds also for the scaled problem. 

\begin{proposition}\label{lem:clump}
	For budget values up to the threshold $c^*$, the optimal solution of the embedding problem \eqref{eq:EmbeddingScaled} is given as
	\begin{equation}\label{eq:UniformEmbed}
	\begin{aligned}
	\hat{u}_i^*=\begin{cases}
	h & \text{if}\ i \in V_1 \\ 
	-h, & \text{if}\ i \in V_2
	\end{cases}
	\end{aligned}
	\end{equation} 
	where $\boldsymbol{h}=\langle h\rangle$ is a one-dimensional subspace.
\end{proposition}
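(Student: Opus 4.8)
The plan is to exhibit the candidate in \eqref{eq:UniformEmbed}, check that it is feasible for \eqref{eq:EmbeddingScaled}, compute its objective value, and then argue that this value already equals the optimal value of the program, which by strong duality equals $1/\lambda_2^*$. Concretely, I would first evaluate the constraints on the candidate. Since all nodes of a layer are sent to a single point, every intralayer term $\|\hat u_k-\hat u_l\|^2$ vanishes, while every interlayer edge $\{i,j\}\in E_3$ contributes $\|\hat u_i-\hat u_j\|^2=\|h-(-h)\|^2=4\|h\|^2$. The constraint in \eqref{eq:EmbeddingScaled} therefore collapses to $4c\|h\|^2\le 1$; the centering condition $\sum_{i\in V}\hat u_i=0$ holds because $|V_1|=|V_2|$; and the objective becomes $\sum_{i\in V}\|\hat u_i\|^2=n\|h\|^2$. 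Maximizing over the admissible range gives $\|h\|^2=1/(4c)$ and objective value $n/(4c)$.

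Next I would identify the optimal value of \eqref{eq:EmbeddingScaled}. Tracking the scaling $\hat w_{ij}=w_{ij}/(c\lambda_2)$ that produced \eqref{eq:PrimalScaled}, the budget constraint $\sum w_{ij}=c$ becomes $\sum\hat w_{ij}=1/\lambda_2$, so the minimal objective of the scaled primal \eqref{eq:PrimalScaled} is $1/\lambda_2^*$. By Proposition \ref{lem:StrongDuality}, together with the stated fact that the scaling transformation preserves strong duality, the optimal value of the scaled embedding \eqref{eq:EmbeddingScaled} is also $1/\lambda_2^*$. I would then invoke the threshold regime: for $c\le c^*$ the optimal algebraic connectivity is $\lambda_2^*=4c/n$, attained by the uniform weight distribution, whose Fiedler mode is the layer-contrast vector $v=[\ones^T,-\ones^T]^T$ (one checks $L(w)v=2wv$ with uniform weight $w=2c/n$). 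Hence the optimal value of \eqref{eq:EmbeddingScaled} equals $1/\lambda_2^*=n/(4c)$, which is exactly the value achieved by the candidate, so the candidate is an optimizer.

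To see that \emph{every} optimizer has the stated one-dimensional form, I would apply Proposition \ref{lem:ProjectionOfEmbedding}: each one-dimensional projection of an optimal embedding is a $\lambda_2$-eigenvector. Below the threshold this eigenspace is simple and spanned by $v$, which forces any optimal embedding to be rank one with coordinates that are constant on $V_1$ and of opposite sign on $V_2$, i.e. of the form \eqref{eq:UniformEmbed}. This is where the one-dimensional subspace $\boldsymbol{h}=\langle h\rangle$ of the statement appears.

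The main obstacle is bookkeeping rather than conceptual. I must verify carefully that the scaling relating \eqref{eq:PrimalScaled} to \eqref{lambda_2Primal} sends the optimal value to $1/\lambda_2^*$, and that the threshold value $\lambda_2^*=4c/n$ (established in \citet{shakeri2015PRL,sahneh2014exact}) genuinely holds for \emph{all} $c\le c^*$, including the simplicity of the $\lambda_2$-eigenspace there. That simplicity is precisely what upgrades ``the candidate is optimal'' to ``every optimum has this form,'' and it is exactly the property that fails once $c$ crosses $c^*$ and the relevant eigenvalues coalesce, which is why the embeddings unfold into higher dimensions beyond the threshold.
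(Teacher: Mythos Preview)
Your argument is correct, and its final step---invoking Proposition~\ref{lem:ProjectionOfEmbedding} together with the simplicity of $\lambda_2$ below $c^*$ and the explicit Fiedler vector $v=\frac{1}{\sqrt{n}}[\boldsymbol{e}^T,-\boldsymbol{e}^T]^T$---is exactly the paper's proof in its entirety. The paper does not carry out your steps 1--3 at all: it simply notes that for $c<c^*$ the eigenspace of $\lambda_2$ is one-dimensional and spanned by $v$ (citing \citet{shakeri2015PRL}), so by Proposition~\ref{lem:ProjectionOfEmbedding} every coordinate projection of an optimal embedding is a scalar multiple of $v$, which forces the rank-one form \eqref{eq:UniformEmbed}. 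Your explicit feasibility check, the duality computation pinning the optimal value to $n/(4c)=1/\lambda_2^*$, and the identification $\|h\|^2=1/(4c)$ are additional and give a self-contained verification that the candidate is optimal (plus the actual scale of $h$, which the paper leaves implicit), but they are not needed once one grants that an optimal embedding exists (Proposition~\ref{lem:StrongDuality}) and applies the projection property directly.
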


\begin{proof}
	See SM \ref{Proof:clump}.
\end{proof}
The embedding \eqref{eq:UniformEmbed} implies each layer clumps together at opposite ends, in the optimal realization (Figure \ref{fig:clumped}). In this case, the Fiedler cut distinguishes the individual layers \citep{van2010graph}. 

%For $c>c^*$, due to the multiplicity of $\lambda_2(L)$, there is a corresponding Fiedler eigenspace, and the two layers are not as easily recognizable. %Figure \ref{fig:embedd15} shows \textbf{...}

%\begin{figure}[!htb]
%	\subfloat[]{%
%		\includegraphics[clip,width=.5\columnwidth]{SG15_DualEmbedd5.pdf}%
%	}%\label{fig:Weigh1}
%	\subfloat[]{%
%		\includegraphics[clip,width=.5\columnwidth]{SG15_DualEmbedd10.pdf}%
%	}%\label{fig:Weigh2}
%	
%	\subfloat[]{%
%		\includegraphics[clip,width=.5\columnwidth]{SG15_DualEmbedd30.pdf}%
%	}%\label{fig:Weigh1}
%	\subfloat[]{%
%		\includegraphics[clip,width=.5\columnwidth]{DualRankGS15.pdf}%
%	}%\label{fig:Weigh2}
%	\caption{Plots of embedding for different costs.  A structure of a Star network and a 3-D network each with $15$ nodes with $1$-regular bipartite network. }\label{fig:embedd15}
%\end{figure}

\begin{remark} \label{rem:EmbDim}
	The algebraic multiplicity of $\lambda_2(L)$ sets an upper-bound on the dimension of realization \citep{Helmberg2010}. This can be understood from Proposition \ref{lem:ProjectionOfEmbedding} by recalling that the dimension of the eigenspace corresponding to $\lambda_2$ is at most the multiplicity of $\lambda_2$. 
	%This situation has been depicted for two BA networks in Figure \ref{fig:EmbDim}.
\end{remark}

%\begin{figure}[!htb]
%	\subfloat[]{%
%		\includegraphics[clip,width=0.7\columnwidth]{BA30_DualEmbedd20.pdf}%
%	}%\label{fig:Weigh1}
%	
%	\subfloat[]{%
%		\includegraphics[clip,width=0.7\columnwidth]{eigsBABA30trend.pdf}%
%	}%\label{fig:Weigh2}
%	
%	\subfloat[]{%
%		\includegraphics[clip,width=0.7\columnwidth]{DualRankBA30.pdf}%
%	}
%	\caption{Embedding of two BA networks with $30$ nodes: (a) 2-D embedding for $c=20>c^*=17.15$, (b) Laplacian spectra, (c) embedding dimension is: one below the threshold where $\lambda_2$ is single, two for mid-values of budget where $\lambda_2$ has multiplicity two, and three for larger values of $c$ where $\lambda_2$ appears with multiplicity of three.}
%	\label{fig:EmbDim}
%\end{figure}

%-------------------------------------------------------
\subsection{Interpretation of the embedding problem}

Here we expand on the interpretation of problem \eqref{eq:EmbeddingScaled}. The goal is to maximize the spread (variance) of the vectors $\hat{u}_i$ subject to a constraint involving the neighbor relations (and a fixed barycenter).

Note that the budget $c$ appears uniquely in the inequality constraint.  For small values of $c$, this inequality puts less limitations on the embedding distances between connected nodes via the interlayer links than the total embedding distances between nodes within the layers.
As the budget increases, the interlayer distances pay a higher and higher toll and the intralayer ones gain more freedom (we say that they ``unfold''). In particular, the layer with lower algebraic connectivity enjoys  more flexibility and will unfold faster.

%A physical interpretation of the equivalent embedding problem for single-layer networks is given in \citet{GoringShadowSeperator} as follows.

Another interpretation of problem \eqref{eq:EmbeddingScaled} consists of thinking of the vectors $\hat{u}_i$ as the positions of some unit masses subject to repelling and attracting forces.

%interlayer connections between different layers as attractive forces that increase with $c$. 
% Then, the KKT conditions show increasing budget amplifies the interlayer forces and makes interconnected nodes approach each other. For larger budgets, the interlayer forces are so high as to further draw the interconnected nodes to be embedded in 1-D space.

%The interlinks between different layers may be thought of as elastic springs (interlayer springs) connecting the corresponding interconnected nodes, which are also connected through some springs (intralayer springs) to their adjacent nodes within the layer to which they belong.

Here we follow the mechanical interpretation given in  \citet[Sec. 4.6]{Boyd2006FastestMP} in the case of one layer. Similarly, we define a potential energy  $\mathcal{U}$ for the $n$-point,  
$$\mathcal{U} = -\frac{1}{2}\sum_{i<j\in V_1\cup V_2}\Vert \hat{u}_i-\hat{u}_j\Vert^2.$$
By rotation and translation invariance,  minimizing $\mathcal{U}$ is seen to be equivalent to maximizing the objective function in \eqref{eq:EmbeddingScaled}. 
The constraints in \eqref{eq:EmbeddingScaled} can be seen as bounds on the elastic potential energy along the edges.

%\todo{matrix slackness, related to Prop \ref{lem:ProjectionOfEmbedding}}

The total force on each node can be found by differentiating
this energy function and the resultant force is toward the origin with a magnitude proportional to the distance of the point. Consequently, we can write the static equilibrium condition as
\begin{equation}\label{eq_equilibrium}
\sum_{j\sim i}T_{ij}\frac{\hat{u}_i -\hat{u}_j}{\Vert \hat{u}_i -\hat{u}_j\Vert} = \hat{u}_i \quad \forall i\in V
\end{equation}
where $T_{ij}$ is the tension in the imaginary link (similar to a spring) between the two nodes $i$ and $j$. Since the interlayer connections are one to one, assume without loss of generality that $i\in G_1$, and  $k\in G_2$ is the matched neighbor of $i$ in the other layer. Then \eqref{eq_equilibrium} becomes
\begin{equation}
T_{ik}\frac{\hat{u}_i -\hat{u}_k}{\Vert \hat{u}_i -\hat{u}_j\Vert} +\sum_{\substack{j\sim i\\j\in G_1}}T_{ij}\frac{\hat{u}_i -\hat{u}_j}{\Vert \hat{u}_i -\hat{u}_j\Vert}=\hat{u}_i,
\end{equation}
Therefore, 
at each node, the sum of forces due to interlayer and intralayer springs is balanced by a repulsive force  that is proportional to the distance of the node to the origin.
We can use this to explain the embedding configurations, by combining the inequality constraints with the orientation of the force field.
For example, by Proposition \ref{lem:clump}, for budgets $c<c^*$ below the threshold, 
we have 
$$\sum_{j\in G_1}a_{ij}T_{ij}\frac{\hat{u}_i -\hat{u}_j}{\Vert \hat{u}_i -\hat{u}_j\Vert}=0$$ 
hence $T_{ik}=w_{ik}$, in this case.

In Figure \ref{fig:Emb1}, we consider two layers (of Watts-Strogatz type)  whose algebraic connectivities are very close. In Figure \ref{fig:Emb1a}, we plot the behavior of $\lambda_2$ for the resulting multiplex, and observe three different regimes as the  budget increases. For $c<c^*$, $\lambda_2$ is simple and grows linearly, after the threshold, the multiplicity increases to two. Finally, there is another phase shift $c^{**}>c^*$, so that for large budgets $\lambda_2$ becomes simple once again.

Recall that  by Remark \ref{rem:EmbDim}, the multiplicity of $\lambda_2$ provides an upperbound for the 
 the embedding dimension in each regime. This can be seen is in  Figures \ref{fig:Emb1b}-\ref{fig:Emb1f} shows. For low budgets the embedding looks like Figure \ref{fig:clumped}, i.e., the nodes are clumped in each layer. In the second regime,  when $c^*<c<c^{**}$, due to the nonuniform interlayer forces interacting with intralayer forces, the embedding unfolds two-dimensionally, in Figures \ref{fig:Emb1b} and \ref{fig:Emb1c}. Notice that the different layers are still distinguishable in two-dimensional embeddings for budgets just above the threshold $c^*$. 
By further increasing $c$ towards $c^{**}$, the stiffened interlayer springs draw the two layers towards each other, thus decreasing the distance between them (see \ref{fig:Emb1d} and Figure \ref{fig:Emb1e}); and eventually, after $c^{**}$, the two layers combine and operate as a whole one-dimensional embedding in Figure \ref{fig:Emb1f}. 

The difference between the one-dimensional embeddings in Figures \ref{fig:Emb1a} and \ref{fig:Emb1f} is that, for small budgets each layer collapses to a pair of distinct points, while for large budgets, each matched pair from different layers collapses to a point.
%  
%By increasing $c$, the algebraic connectivity becomes larger than the algebric connectivity of both stand-alone layers, thus the diffusion in the multiplex is faster than the both layer--this is known as superdiffusion  \citep{GomezDiffusion2013}.
%%%%%%%%%%%%%%%%%%%%%%%%%%%%%%%%%%%%%%%%%%%%%%%%%%%%%%%%%%%%%%%%%%%%%%%%%%%%%%%%
\begin{figure}[tbh!]
	\centering
	\includegraphics[clip,width=.45\columnwidth]{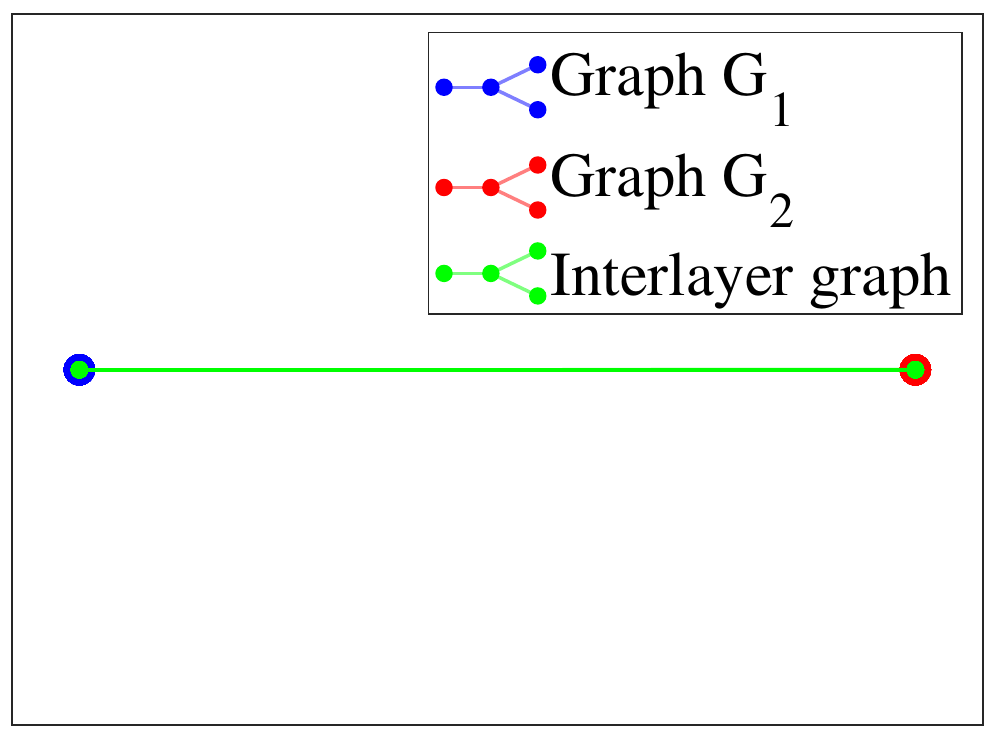}
	\caption{Optimal realization for $c<c^*$: nodes in each layer clumped together}
	\label{fig:clumped}
\end{figure}
\begin{figure}[!htb]
	\centering
	\subfloat[\label{fig:Emb1a}]{\includegraphics[clip,width=.6\columnwidth]{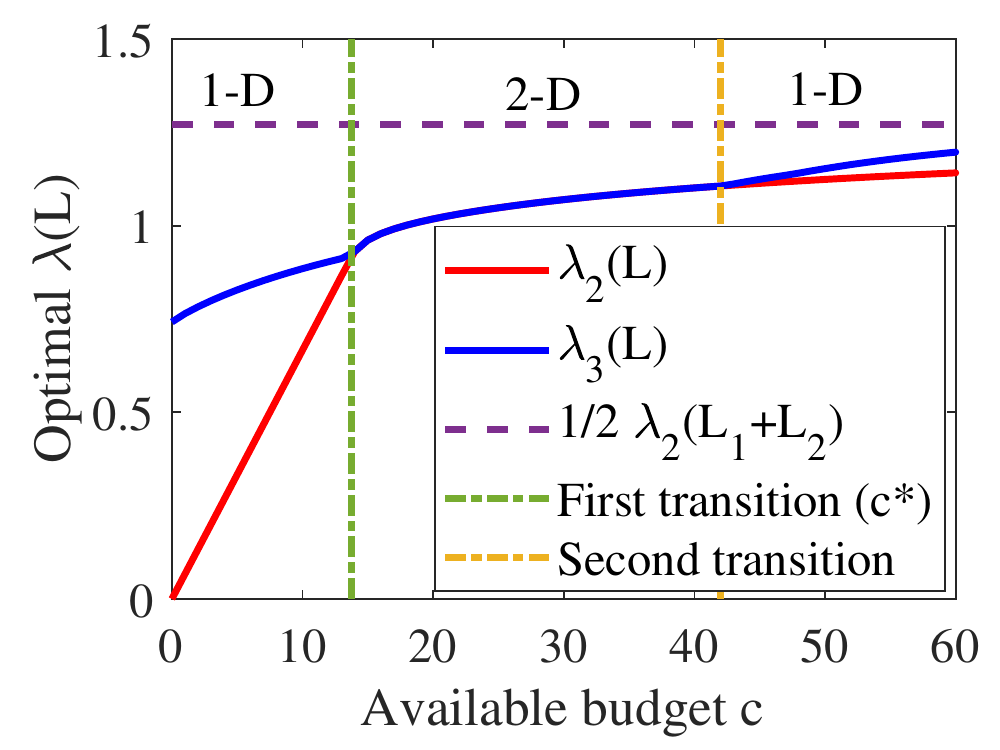}  } \\
	%	\subfloat[]{\includegraphics[clip,width=.65\columnwidth]{Emb4CloseCon1Emb0.pdf}  }
	%
	\subfloat[\label{fig:Emb1b}]{\includegraphics[clip,width=.3\columnwidth]{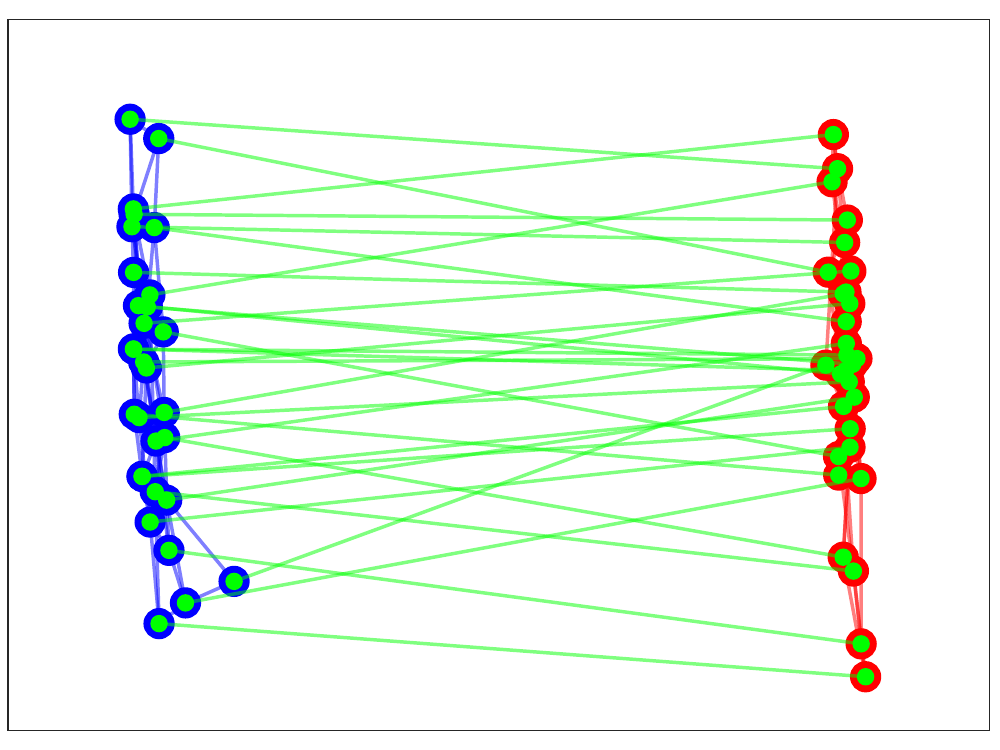}
	}%\qquad
	\subfloat[\label{fig:Emb1c}]{\includegraphics[clip,width=.3\columnwidth]{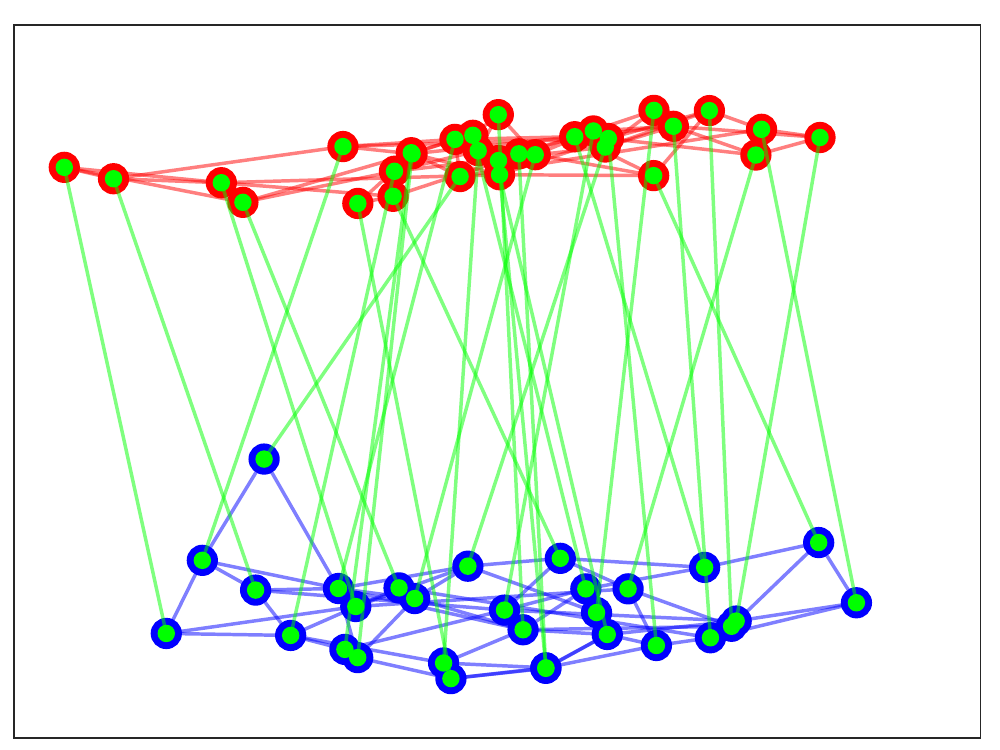}
	}
	\subfloat[\label{fig:Emb1d}]{\includegraphics[clip,width=.3\columnwidth]{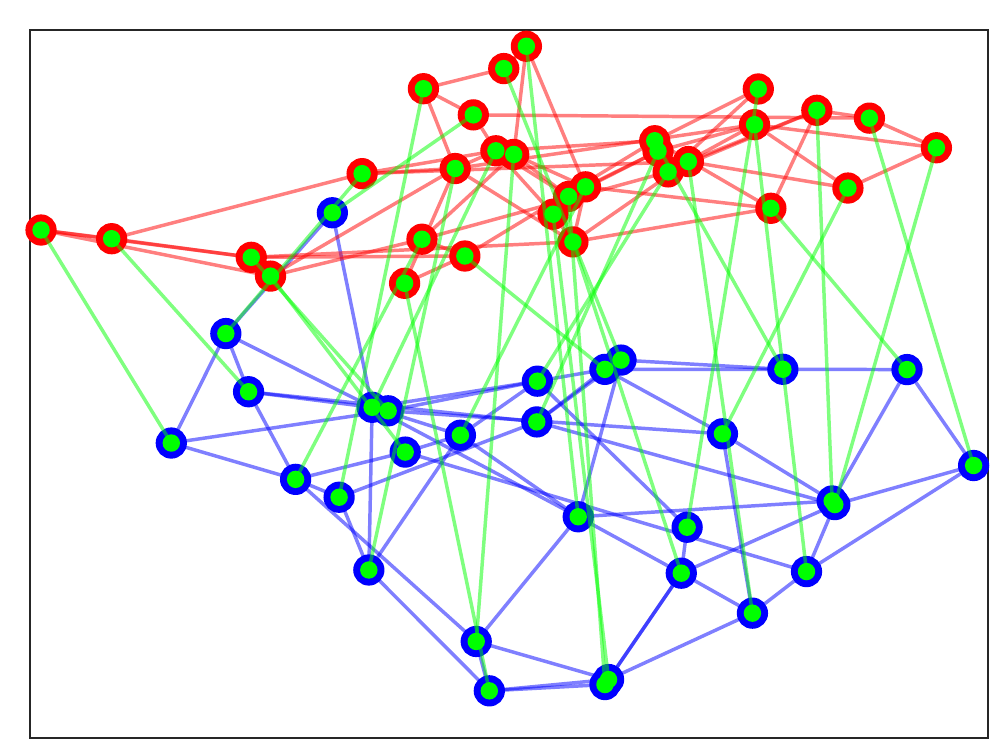}
	}\\ %\qquad
	\subfloat[\label{fig:Emb1e}]{\includegraphics[clip,width=.3\columnwidth]{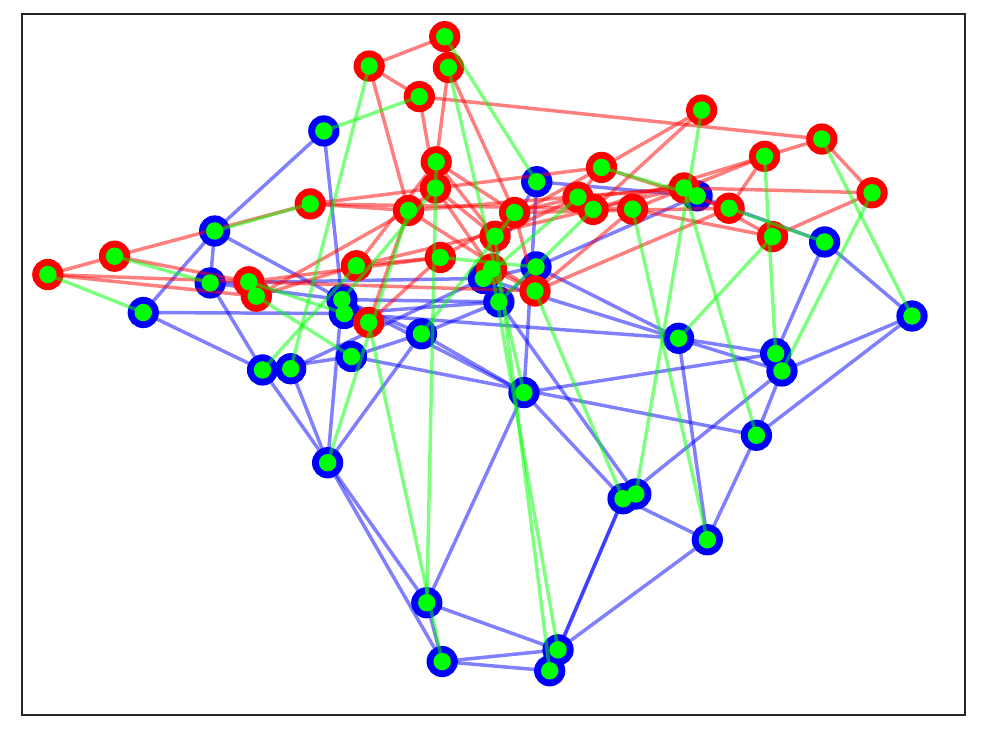}
	}%\qquad
	%    \subfloat[]{\includegraphics[clip,width=.7\columnwidth]{Emb4CloseCon1EmbC40.pdf}
	%    }
	\subfloat[\label{fig:Emb1f}]{\includegraphics[clip,width=.3\columnwidth]{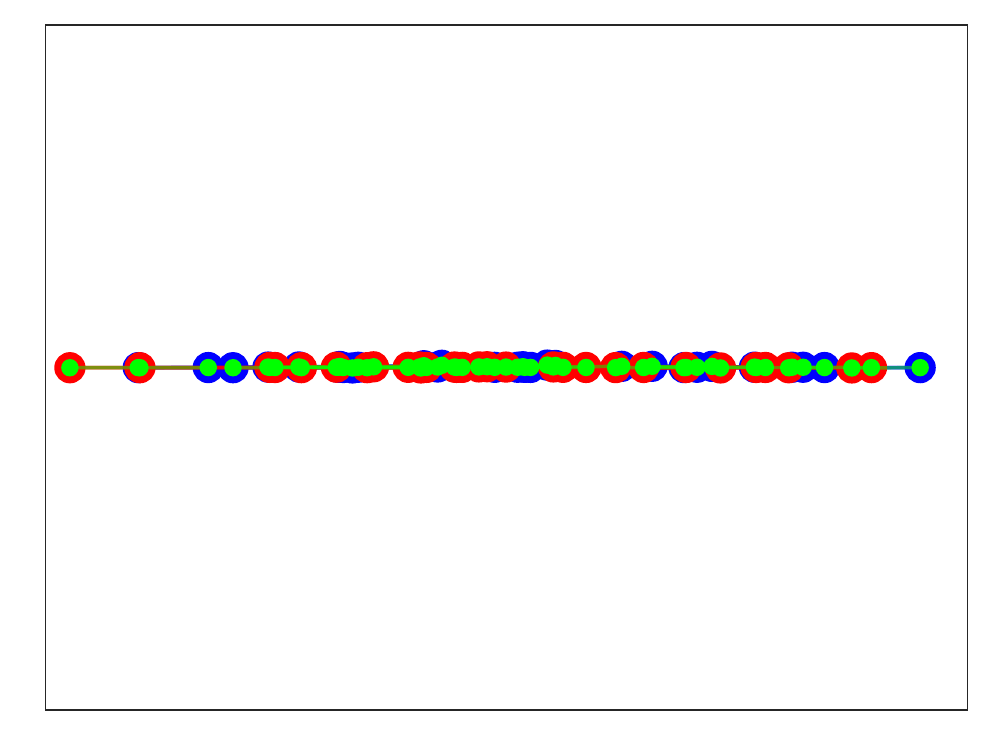}
	} \\
    \subfloat[]{\includegraphics[clip,width=.7\columnwidth]{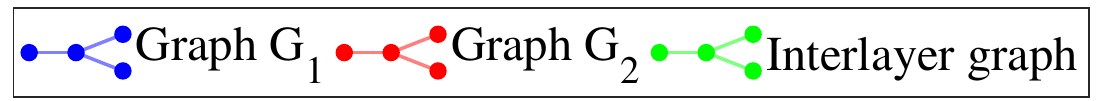}
    } 
	\caption{Embeddings for different budgets of two 30-node Watts-Strogatz networks with algebraic connectivities $\lambda_2\left(L_1\right)=0.7462$ and $\lambda_2\left(L_2\right)=0.7419$ ($c^*=13.7763$).  (a) algebraic connectivity versus budget $c$ with three regimes: $c<c^*$, $c^*<c<c^{**}$ and $c>c^{**}$ ($c^{**}$ is the next threshold budget that multiplicty changes) corresponding to 1-D, 2-D, and 3-D embedding respectively.  Embedding  are plotted  for (b) $c=14$, (c) $c=15$, (d) $c=20$, (e) $c=30$, (f) $c=50$.}
	\label{fig:Emb1}
\end{figure}
In Figure \ref{fig:Emb3}, we consider two layers with very different algebraic connectivity. We observe that the layer with weaker connectivity unfolds sooner. Notice also, in this case the embedding becomes three dimensional.
\begin{figure}[!htb]
	
	\subfloat[\label{fig:Emb3a}]{\includegraphics[clip,width=.7\columnwidth]{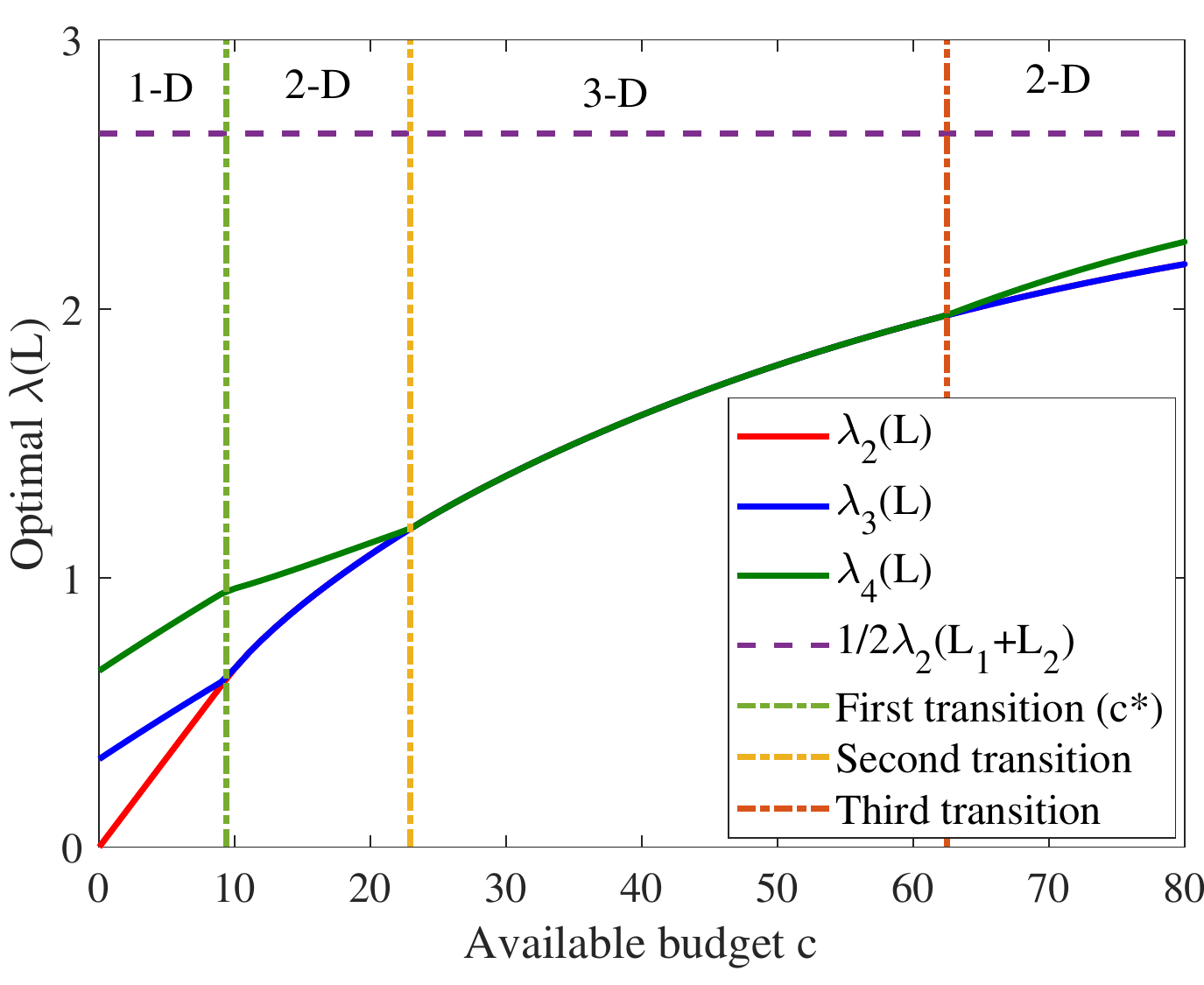}  
	}\\
	%	\subfloat[]{\includegraphics[clip,width=.65\columnwidth]{Emb4FarCon1Emb0.pdf}  
	%	}
	\subfloat[\label{fig:Emb3b}]{\includegraphics[clip,width=.3\columnwidth]{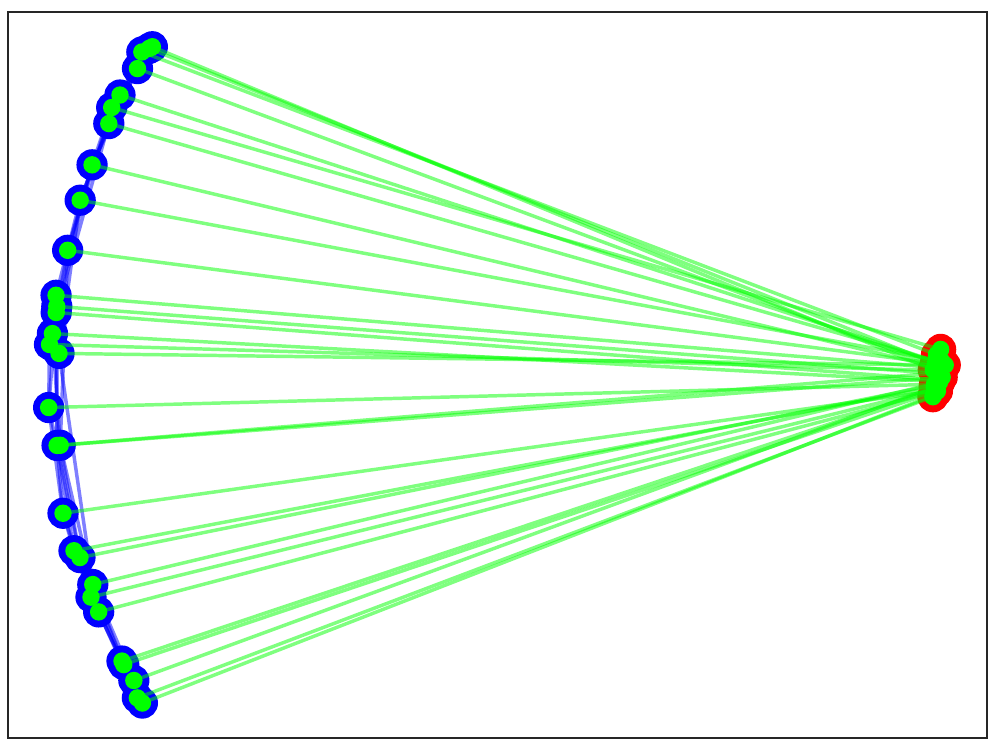}
	}
	\subfloat[\label{fig:Emb3c}]{\includegraphics[clip,width=.45\columnwidth]{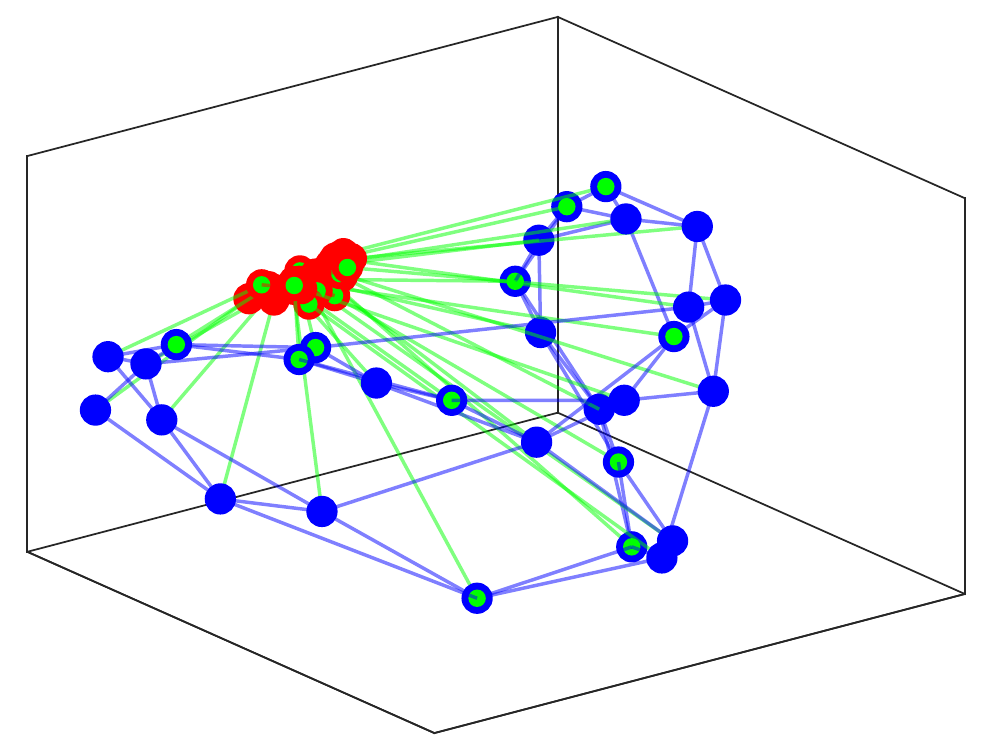}
	}
	\subfloat[\label{fig:Emb3d}]{\includegraphics[clip,width=.3\columnwidth]{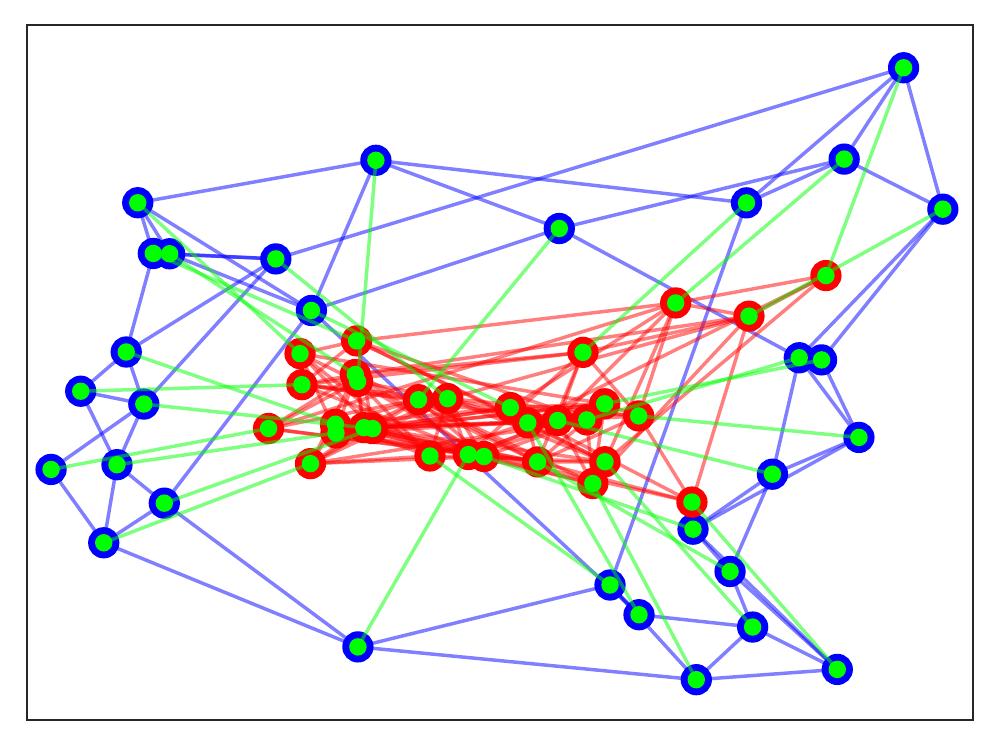}
	}\\
    \subfloat[]{\includegraphics[clip,width=.7\columnwidth]{Legend2.pdf}
    } 
	
	%	\subfloat[]{\includegraphics[clip,width=.5\columnwidth]{Emb4FarCon1EmbC800.pdf}
	%	} 
	\caption{Embeddings for different budgets of two 30-node Watts-Strogatz networks with  algebraic connectivities $\lambda_2\left(L_1\right)=0.3270$ and $\lambda_2\left(L_2\right)=2.8733$ ($c^*=9.378$). (a) algebraic connectivity versus budget $c$, and embeddings for (b) $c=10$, (c) $c=25$, (d) $c=70$.}
	\label{fig:Emb3}
\end{figure}
%%%%%%%%%%%%%%%%%%%%%%%%%%%%%%%%%%%%%%%%%%%%%%%%%%%%%%%%%%%%%%%%%%%%%%%%%%%%%%%%%%%%

The orientation of the embeddings discussed above bring forward an optimal criterion on the structure of the embeddings known as \textit{separator-shadow} established by  \citet{GoringShadowSeperator}. The result is that, the structural properties of optimal embeddings are governed by the separators of the graph. Let separator $S$ split the graph into at least two separate components between which no edges exist (the union of $S$ and the components constitute the whole graph). Then, all but one component
have the property that, the straight line segment between the origin and each node of the component intersects the convex hull of $S$--reminiscent of the forces that are aligned toward the origin. In other words, if we look at origin as a light source, nodes of all but one separated component is embedded in the shadow of the convex hull of the
separator. We investigate this in multiplex networks. First, let $G$ be connected, and $w_{ij}\geq0$ be a feasible solution of \eqref{eq:PrimalScaled}, i.e. it embodies the inequality constraints and let $G_w=\left(V,E_w:=\{ij\in E: w_{ij}>0\}\right)$. The embedding topology follows the seperator-shadow theorem stated in \citet{GoringShadowSeperator} (see SM \ref{Proof:Shadow-Separator}). 
% be the \textit{strictly active subgraph}. \\

An important feature of geometric dual formulation in this paper is that, by Proposition \ref{lem:ProjectionOfEmbedding}, different embedding patterns are informative of different optimal diffusion phases.  Particularly, Proposition \ref{lem:clump} shows that for $c\le c^*$, the embedding is one dimensional   and nodes in the same layer are clumped; consequently, the interlayer edges are not strong enough to unfold the embedding structures. In such conditions, diffusion within each layer is through intralinks, and interlinks do not contribute much to diffusion between nodes in the same layer. For $c>c^*$, we observe in Figure \ref{fig:Emb1} that stronger interlayer links start to unfold the layers and with increasing $c$, the distances between nodes in the same layer increase and thus interconnected nodes in different layers approach each other. This is indicative of transition from the intralayer phase of diffusion to the interlayer phase. Furthermore, in some circumstances,  for instance, when one individual network has  much larger algebraic connectivity than the other one,  increasing the budget past $c^*$ unfolds the layer with smaller algebraic connectivity. We observe an obvious example of this situation in  Figures \ref{fig:Emb3b} and \ref{fig:Emb3c}. This is due to higher robustness level of the layer with larger algebraic connectivity, and is indicative of interlayer optimal diffusion mode in the more weakly connected layer, versus intralayer optimal diffusion mode in the more connected one.

%\begin{remark}
%	Rotational dimension and tree width...
%\end{remark}
%%%%%%%%%%%%%%%%%%%%%%%%%%%Scaled version
%\subsubsection{Scaled version}
%We can find the realization for the scaled version of the above problem by $\frac{\bw}{C\lambda} = \hat{\bw}$.
%\begin{equation}\label{DualFrame}
%\begin{aligned}
%& \underset{v_i}{\text{maximize}}
%& & \sum_{i\in N}\| v_i \|^2 \\
%& \text{subject to}
%& & C\| v_i - v_j\|^2+\sum_{k,l\in E_1}\| v_k - v_l\|^2+\\
%& & &  \sum_{k,l\in E_2}\| v_k - v_l\|^2\leq 1 \ \ \forall ij\in E_3\\
%& & & (\sum_{i\in N}v_i)^2 = 0
%%& & & V\succeq 0
%\end{aligned}
%\end{equation}
%See SM for more details on how to find the above framework.
%
%This is the problem of finding the best embedding of nodes to maximize the sum of squared norms such that the center of math is origin and the weighted distance of adjacent nodes are bounded by one. The weights are with respect to which groups they are in.
%
%\begin{align}\label{eq:DualLength}
%\| v^\ast_i - v^\ast_j\|^2= \frac{1}{C}\left(1-\sum_{k,l\in E_1\cup E_2}\| v^\ast_k - v^\ast_l\|^2\right)
%\end{align}
%%%%%%%%%%%%%%%%%%%%%%%%%%%%%%%%%%%%%%%%%%%%%%%%%%%%%%
\section{Minimizing $\lambda_{n}$}\label{sec:lambdan}
%\subsection{Primal problem for minimizing $\lambda_{\max}$}
In this section, we study the problem of minimizing the largest eigenvalue, $\lambda_n$, of the Laplacian in multiplex networks. For the total cost $c$ of interlayer links, the primal problem is defined as
\begin{equation}\label{Primal_n}
\begin{aligned}
& \underset{w_{ij}}{\text{minimize}}
& & \lambda_n \\
& \text{subject to}
& & \sum_{\lbrace i,j\rbrace\in E_3}w_{ij}E_{ij}+L_0 - \lambda_n I \preceq 0 \\
& & & \sum_{\lbrace i,j\rbrace\in E_3}w_{ij}\ge  c \\
& & & w_{ij}\geq 0 
\end{aligned}
\end{equation} 
To get some insight into the solution of the primal problem \eqref{Primal_n}, consider the characterization of the largest eigenvalue in terms of Rayleigh quotients, as
\begin{equation}\label{lambda_max}
\begin{aligned}
\lambda_n[L(w)]=\underset{\Vert v\Vert \neq 0}{\text{max}} \ \ \frac{v^TL(w)v}{\Vert v\Vert^2}
\end{aligned}
\end{equation} 
The optimal weight problem \eqref{Primal_n} is then  
\begin{equation}\label{F_c}
\begin{aligned}
\lambda_n^*(c) \ := \ \underset{\underset{w^T\boldsymbol{1}=c}{w\geq 0}}{\text{min}} \ \ \lambda_n[L(w)]
\end{aligned}
\end{equation}
for a given budget $c\geqslant 0$. Since $L$ is an affine function of $w$, and $\lambda_n$ is a convex function of $L$, it follows that \eqref{F_c} is a convex optimization problem. Indeed, \eqref{F_c} can be recast as the semidefinite programming \eqref{Primal_n} and may be solved using standard numerical methods. 

In \citet{Ribalta2013Spectral}, the weight distribution is assumed to be uniform. In that case, for small values of $c$, $\lambda_n$ grows approximately linearly with $c$, namely,  $\lambda_n=\lambda_{N}^0+2c/N$, where $\lambda_{N}^0:=\text{max}(\lambda_{max}(L_1),\lambda_{max}(L_2))$, and $N$ is the number of nodes in each layer. 

Here we will show that the optimal $\lambda_n^*$ in \eqref{Primal_n}, may actually be constant for small budgets. 
\begin{theorem}\label{thm_nodallines}
	Assume that $\lambda_{max}(L_1)>\lambda_{max}(L_2)$ and that $\lambda_N^1:=\lambda_{max}(L_1)$ is simple for $L_1$ with eigenvector $v_N^1$. Suppose that the nodal set $\{x\in V(G_1): v_N^1(x)=0\}$ is non-empty, and define $c_1^*>0$ to be the largest budget such that for $0\le c<c_1^*$, the optimal $\lambda_n^*(c)$ is simple. Then, 
	\begin{equation}\label{eq_constant}
	\lambda_n^*(c)\equiv \lambda_N^1\qquad \text{for $0\le c\le c_1^*$.}
	\end{equation}
	Moreover, in this case, a weight $w$ with $w^T\boldsymbol{1}=c$ is optimal for \eqref{Primal_n} if and only if
	\begin{equation}\label{opt_cond}
	Wv_N^1=0
	\end{equation}
	where $W=\diag(w)$ and $0$ is a zero vector.
\end{theorem}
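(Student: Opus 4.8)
The plan is to build everything from the single lifted vector $\hat{v}=\big(v_N^1;\ \boldsymbol{0}\big)$, i.e.\ the layer-$1$ eigenvector $v_N^1$ padded by zeros on $V_2$. Using the block form \eqref{eq:LaplacMatrix} with $W=\diag(w)$, a one-line computation gives
\begin{equation*}
\hat{v}^T L(w)\,\hat{v} = (v_N^1)^T(L_1+W)v_N^1 = \lambda_N^1\|v_N^1\|^2 + \sum_{i} w_i\,(v_N^1)_i^2 .
\end{equation*}
Since every $w_i\ge 0$, the Rayleigh characterization \eqref{lambda_max} yields the budget-independent lower bound $\lambda_n[L(w)]\ge \lambda_N^1$ for all feasible $w$, hence $\lambda_n^*(c)\ge\lambda_N^1$ for every $c\ge 0$. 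I would also record the complementary algebraic fact read off the same block computation: if $Wv_N^1=0$ then $L(w)\hat{v}=\lambda_N^1\hat{v}$, so $\hat{v}$ is an exact eigenvector and $\lambda_N^1$ belongs to the spectrum of $L(w)$.

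Granting the equality \eqref{eq_constant} (established in the next step), the forward implication of the characterization drops out of the displayed identity for free: if $w$ is optimal then $\hat{v}^T L(w)\hat{v}\le \lambda_n[L(w)]\,\|v_N^1\|^2=\lambda_N^1\|v_N^1\|^2$, so $\sum_i w_i (v_N^1)_i^2=0$; as each summand is nonnegative, $w_i(v_N^1)_i=0$ for every $i$, i.e.\ $Wv_N^1=0$. For the equality itself I would combine the lower bound with achievability: distributing the budget on the nodal set $\{x: v_N^1(x)=0\}$ — nonempty by hypothesis, which is exactly where this assumption is needed — produces a feasible $w$ with $Wv_N^1=0$, for which $\lambda_N^1$ is an eigenvalue; it then remains to certify that it is the \emph{top} eigenvalue for $c\le c_1^*$.

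To close that gap I would study the value function $\lambda_n^*(\cdot)$, which is continuous, nondecreasing (the feasible set shrinks as $c$ grows and all eigenvalues are nondecreasing in $w$ by Remark~\ref{remark:posPert}), with $\lambda_n^*(0)=\lambda_N^1$ because $L(0)=L_1\oplus L_2$ has a \emph{strict} top gap — here the simplicity of $\lambda_N^1$ in $L_1$ and $\lambda_{max}(L_1)>\lambda_{max}(L_2)$ enter. Setting $\bar c:=\sup\{c:\lambda_n^*(c)=\lambda_N^1\}$, I claim $\bar c\ge c_1^*$: were $\bar c<c_1^*$, the optimal $\lambda_N^1$ at $c=\bar c$ would be simple, giving a strict spectral gap below it; adding a small weight increment on the nodal set keeps $\hat{v}$ an eigenvector at $\lambda_N^1$ while moving the remaining eigenvalues only slightly, so they stay below $\lambda_N^1$ and $\lambda_n$ remains $\lambda_N^1$ at a strictly larger budget, contradicting the definition of $\bar c$. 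This proves \eqref{eq_constant} for $c\le c_1^*$. The reverse implication of the characterization follows by the same gap mechanism applied along the ray $w(t)=tw$: if a nodal-set weight gave $\lambda_n[L(w)]>\lambda_N^1$, the last crossing $t^*<1$ would yield an optimal solution at budget $t^*c<c_1^*$ with $\lambda_N^1$ non-simple, again contradicting the definition of $c_1^*$.

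The genuinely delicate point is not the eigenvector bookkeeping, which is exact and elementary, but controlling the remaining $2N-1$ eigenvalues so that $\lambda_N^1$ stays the \emph{largest} rather than being overtaken as weight accumulates; this is precisely what the threshold $c_1^*$ is meant to quarantine, and the argument leans on the strict spectral gap available whenever the optimal $\lambda_N^1$ is simple together with the monotonicity of Remark~\ref{remark:posPert}. In writing this up I would pin down the intended reading of ``the optimal $\lambda_n^*(c)$ is simple'' — simplicity at \emph{every} optimal $w$, versus at \emph{some} optimal $w$ — since the two contradiction steps above invoke that an optimum attained at a sub-threshold budget must have $\lambda_N^1$ simple; under the natural (``every optimal solution'') reading the argument closes, with the nonemptiness of the nodal set guaranteeing the room to keep spending the budget without perturbing $\hat{v}$.
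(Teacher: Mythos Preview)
Your proof is correct and rests on the same core ingredients as the paper's: the lifted vector $\hat v=(v_N^1,0)$, the Rayleigh lower bound, achievability by placing weight on the nodal set, and a perturbation argument pinned to the simplicity hypothesis below $c_1^*$. The organization differs in two places worth noting. For necessity (optimal $\Rightarrow Wv_N^1=0$) the paper decomposes a general test vector as $v_1=\alpha v_N^1+u_1$ and lets $\alpha$ range over $\mathbb{R}$ to kill the quadratic term $(v_N^1)^T W v_N^1$; your single substitution of $\hat v$ into the Rayleigh inequality reaches the same conclusion more directly. For sufficiency and the identification of the constant interval, the paper first produces via Weyl's theorem a subinterval $[0,c_1)$ on which $\lambda_{n-1}<\lambda_N^1$ and then argues $c_1=c_1^*$, whereas you define $\bar c=\sup\{c:\lambda_n^*(c)=\lambda_N^1\}$, extend past any $\bar c<c_1^*$ by a gap-plus-small-perturbation step, and handle the converse by a last-crossing argument along $t\mapsto tw$. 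Both routes are equivalent in substance; yours avoids the auxiliary constant $c_1$ and sidesteps the paper's slightly informal move of applying the bound $\lambda_{n-1}^*(c)<\lambda_N^1$ (stated for optimal weights) to an arbitrary feasible $w$ satisfying \eqref{opt_cond}. Your closing caveat about the intended reading of ``$\lambda_n^*(c)$ is simple'' is apt and matches what the paper's own argument implicitly requires.
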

\begin{proof}
	First note that, by Remark \eqref{remark:posPert}, the optimal largest eigenvalue $\lambda_n^*$ for \eqref{Primal_n} is a nondecreasing function of $c$. So 
	\begin{equation}\label{eq_lowerbound}
	\lambda_n^*(c)\ge \lambda_N^1\qquad\forall c\ge 0.
	\end{equation}
	By Weyl's Theorem (see \citet[Chapter 3]{roger1994topics}), since $L(w)$ is a small perturbation of $L(0)$, there is $c_1>0$, so that for $0\le c<c_1$, 
	\begin{enumerate}
		\item the corresponding $\lambda_n^*(c)$ is also simple; 
		\item and the second largest eigenvalue $\lambda_{n-1}^*(c)$ satisfies
		\begin{equation}\label{eq:upperbd}
		\lambda_{n-1}^*(c)<\lambda_N^1.
		\end{equation} 
	\end{enumerate}
	Note that $c_1\le c_1^*$, because condition 2. is in principle adding an extra restriction.

	Fix a budget $0\le c<c_1$, and let $w$ be a feasible weight with $w^T\boldsymbol{1}=c$. 
Assume first that $w$ satisfies \eqref{opt_cond}. Then
\begin{equation} \label{eq:LaplacMatrix2}
\begin{aligned}
\begin{bmatrix}
L_1+W& -W \\
-W & L_2+W
\end{bmatrix}
\begin{bmatrix}
v_N^1 \\
0
\end{bmatrix}
=\lambda_N^1\begin{bmatrix}
v_N^1 \\
0
\end{bmatrix}.
\end{aligned}
\end{equation}
Therefore, $\lambda_N^1$ is an eigenvalue of $L(w)$ with corresponding eigenvector $v:=\left( {v_N^1}^T,0\right) ^T$,
and by \eqref{eq:upperbd}, it must be the largest eigenvalue of $L(w)$. In particular, by \eqref{eq_lowerbound}, $w$ is optimal and $\lambda_n^*(c)=\lambda_N^1$. 

This shows that \eqref{opt_cond} is sufficient for optimality of $w$. In addition, since the nodal set is non-empty, there are many weights that satisfy \eqref{opt_cond}, indeed any weight supported on the nodal set will be optimal. So we have also established \eqref{eq_constant} for $0\le c\le c_1$, by continuity up to $c_1$. To conclude, note that for $c=c_1$, either $\lambda_n^*(c_1)$ is not simple, or $\lambda_{n-1}^*(c_1)=\lambda_N^1$, in which case $\lambda_{n-1}^*(c_1)=\lambda_n^*(c_1)$, and thus $\lambda_n^*(c_1)$ is again not simple. This shows that $c_1=c_1^*$.

The only direction left, is to show that \eqref{opt_cond} is also necessary for optimality in this case.

So, assume that $w$ is optimal. Using the layer structure, the vector $v$ in \eqref{lambda_max} can be written as $v=(v_1^T,v_2^T)^T$, where $v_i$ is the restriction of $v$ to layer $i$. Then, \eqref{lambda_max} for $\lambda_n[L(w)]=\lambda_N^1$ implies that
\begin{equation}\label{lam_max_comp}
\begin{aligned}
v_1^TL_1v_1+v_2^TL_2v_2+(v_1-v_2)^TW(v_1-v_2) \\
-\lambda_N^1(\Vert v_1\Vert^2+\Vert v_2\Vert^2)\leqslant0 
\end{aligned}
\end{equation}
%Setting $v_1=v_N^1$ and $v_2=0$, we see from \eqref{lam_max_comp} that $$\lambda_m^1\Vert v_N^1\Vert^2+{v_N^1}^TWv_N^1-\lambda_n(L(w))\Vert v_N^1\Vert^2\leqslant0.$$ Then, since $W$ is positive semidefinite, it follows $\lambda_n(L(w))\geqslant \lambda_m^1$.

 If we decompose $v_1$  into two orthogonal components as $v_1=\alpha v_N^1+u_1$,  for a scalar $\alpha$, where $u_1^Tv_N^1=0$, then $\|v_1\|^2=\alpha^2+\|u_1\|^2$  and 
 
% Equality in \eqref{F_c_smal} implies that we can substitute $\lambda_n(L)$ with $\lambda_N^1$ in \eqref{lam_max_comp} and get
 
 \begin{equation}\label{lam_max_decom}
 \begin{split}
 u_1^TL_1u_1&+v_2^TL_2v_2+\left( u_1-v_2\right) ^TW\left( u_1-v_2\right) 
 +\\
 &\alpha^2 (v_N^1)^TWv_N^1 +2\alpha \left( u_1-v_2\right)^TWv_N^1   
 \\
 &-\lambda_N^1\left( \Vert u_1\Vert^2+\Vert v_2\Vert^2\right)\leqslant0, 
\\
&\forall \alpha\in \mathbb{R}, u_1,v_2\in\mathbb{R}^N, \ \ u_1^Tv_N^1=0 
 \end{split}
 \end{equation}
 Since \eqref{lam_max_decom} holds for every $\alpha$, we must have $(v_N^1)^TWv_N^1=0$.  This concludes the `only if' direction.
\end{proof}

\begin{proposition}
At the threshold $c_1^*$, the matrix $Q+2W^\star$ has a zero eigenvalue, where $Q=\bar{L}-\tilde{L}\bar{L}^\dagger\tilde{L}$ and $\bar{L}=\frac{L_1+L_2}{2}-\lambda_N^1I$, $\tilde{L}=\frac{L_1-L_2}{2}$.
\end{proposition}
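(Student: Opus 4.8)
The plan is to reduce the eigenvalue equation $L(w^\star)x=\lambda_N^1 x$ at the threshold to a singularity statement about a symmetric $2\times2$ block matrix whose Schur complement is exactly $Q+2W^\star$. First I would write a candidate eigenvector in layer form, $x=(x_1^T,x_2^T)^T$, and pass to sum and difference coordinates $s=x_1+x_2$, $d=x_1-x_2$. Using $L_1-\lambda_N^1 I=\bar L+\tilde L$ and $L_2-\lambda_N^1 I=\bar L-\tilde L$, the two block rows of $(L(w^\star)-\lambda_N^1 I)x=0$ become, after adding and subtracting them,
\begin{equation}
\begin{bmatrix}\bar L & \tilde L\\ \tilde L & \bar L+2W^\star\end{bmatrix}\begin{bmatrix}s\\ d\end{bmatrix}=0 .
\end{equation}
Since the map $(x_1,x_2)\mapsto(s,d)$ is invertible, the multiplicity of $\lambda_N^1$ as an eigenvalue of $L(w^\star)$ equals the kernel dimension of this block matrix $M$. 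By Theorem \ref{thm_nodallines} we have $\lambda_n^*(c_1^*)=\lambda_N^1$, so $\lambda_N^1$ is an eigenvalue of $L(w^\star)$ and hence $M$ is singular.

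Next I would justify forming the Schur complement of the $(1,1)$ block. The load-bearing observation is that $\bar L=L_{ave}-\lambda_N^1 I$ is negative definite: by Weyl's inequality $\lambda_{max}(L_{ave})\le\tfrac12(\lambda_{max}(L_1)+\lambda_{max}(L_2))<\lambda_{max}(L_1)=\lambda_N^1$, so every eigenvalue of $\bar L$ is strictly negative. Therefore $\bar L$ is invertible and $\bar L^\dagger=\bar L^{-1}$, which makes the definition of $Q$ in the statement unambiguous. The Schur complement of $\bar L$ in $M$ is then precisely $(\bar L+2W^\star)-\tilde L\,\bar L^{-1}\tilde L=Q+2W^\star$, and because $\bar L$ is invertible, $M$ is singular if and only if $Q+2W^\star$ is singular. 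Combining this with the previous paragraph proves that $Q+2W^\star$ has a zero eigenvalue at $c_1^*$.

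The computation itself is routine linear algebra; the genuinely essential steps are (i) recognizing the sum/difference change of variables that recasts $L(w^\star)-\lambda_N^1 I$ as the symmetric block matrix $M$ whose Schur complement is $Q+2W^\star$, and (ii) verifying that $\bar L$ is honestly invertible under the standing hypothesis $\lambda_{max}(L_1)>\lambda_{max}(L_2)$, so that the pseudoinverse in $Q$ coincides with the inverse and the Schur criterion applies with no range condition. I expect (ii) to be the step most easily overlooked, precisely because the statement is written with $\bar L^\dagger$ and thereby invites worry about a singular $\bar L$. For completeness I would also record the threshold-specific refinement explaining the phrase ``at the threshold'': for $c<c_1^*$ the kernel of $Q+2W^\star$ is one-dimensional, spanned by $v_N^1$ (this follows directly from $W^\star v_N^1=0$ and $\tilde L v_N^1=-\bar L v_N^1$, whence $Qv_N^1=0$), whereas at $c_1^*$ the eigenvalue $\lambda_N^1$ ceases to be simple, so $\ker(Q+2W^\star)$ gains a second dimension, and it is this emerging zero eigenvalue that actually characterizes $c_1^*$.
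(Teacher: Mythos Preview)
Your proof is correct and, if anything, cleaner than the paper's. Both arguments terminate in the same two moves---the sum/difference change of variables $(x_1,x_2)\mapsto(s,d)$ and the Schur complement eliminating $s$---but the paper reaches that point by a longer route: it differentiates the eigenvalue problem in $c$, forms the $(2N{+}1)\times(2N{+}1)$ bordered matrix governing the eigenpair derivatives, argues that this bordered matrix becomes singular precisely when the top eigenvalue loses simplicity, then shows the extra bordering component vanishes (using $W^\star v_N^1=0$) and finally applies the same $(y_1,y_2)=(x_1-x_2,x_1+x_2)$ reduction. Your approach bypasses the sensitivity machinery entirely by working directly with $\ker\bigl(L(w^\star)-\lambda_N^1 I\bigr)$, and because you verify that $\bar L$ is negative definite you get the sharper statement $\dim\ker(Q+2W^\star)=\mathrm{mult}_{L(w^\star)}(\lambda_N^1)$, from which the threshold characterization (kernel dimension jumping from $1$ to at least $2$) is immediate. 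The paper, by contrast, writes $\bar L^\dagger$ without ever checking that $\bar L$ is invertible, and its bordered-matrix derivation does not transparently yield a null vector of $Q+2W^\star$ independent of $v_N^1$. What the paper's route does buy is a direct link to eigenvalue bifurcation theory (the bordered matrix is exactly the Jacobian whose singularity signals the breakdown of a simple eigenbranch), which may be conceptually useful elsewhere; your route buys brevity, a self-contained justification of the Schur step, and the explicit identification of $v_N^1$ as the persistent kernel direction below threshold.
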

\begin{proof}
	Using the approach in \citet{sahneh2014exact}, we write the eigenvalue problem $Lv=\lambda v$ as
	\begin{equation}\label{ExtEigProb}
	\begin{aligned}
	\begin{bmatrix}
	L_1+W^* && -W^* \\
	-W^* && L_2+W^*
	\end{bmatrix}
	\begin{bmatrix}
	v_1 \\
	v_2
	\end{bmatrix}
	=\lambda 
	\begin{bmatrix}
	v_1 \\
	v_2
	\end{bmatrix}
	\end{aligned}
	\end{equation} 
	where $v=[v_1^T, v_2^T]^T\in\mathbb{R}^{2N}$ satisfies the following eigenvector normalization:
	\begin{equation}\label{NormalCond}
	v_1^Tv_1+v_2^Tv_2=2N
	\end{equation} 
	Differentiating \eqref{ExtEigProb} and \eqref{NormalCond} with respect to $c$ yields the governing equations for the eigenderivatives $\frac{dv_1}{dc}$, $\frac{dv_2}{dc}$, and $\frac{d\lambda}{dc}$ 
	\begin{equation}
	{\renewcommand{\arraystretch}{1.8}
	\begin{aligned}
	\begin{bmatrix}
	L_1+W^*-\lambda I && -W^* && -v_1 \\
	-W^* && L_2+W^*-\lambda I && -v_2 \\
	-v_1^T && -v_2^T && 0
	\end{bmatrix}
	\begin{bmatrix}
	\frac{dv_1}{dc} \\ \frac{dv_2}{dc} \\ \frac{d\lambda}{dc}
	\end{bmatrix} 
	=\\
	 \begin{bmatrix}
	-\frac{dW^*}{dc}\left(v_1-v_2\right) \\ \frac{dW^*}{dc}\left(v_1-v_2\right) \\ 0
	\end{bmatrix}
	\end{aligned}
    }
	\end{equation}
	For $0\le c<c_1^*$, $\lambda=\lambda_{N}^1$, $v_1=v_N^1$, $v_2=0$, therefore
	\begin{equation}
	{\renewcommand{\arraystretch}{1.8}
	\begin{aligned}
	\begin{bmatrix}
	L_1+W^*-\lambda_{N}^1 I && -W^* && -v_N^1 \\
	-W^* && L_2+W^*-\lambda_{N}^1 I && 0 \\
	-{v_N^1}^T && 0 && 0
	\end{bmatrix}
	\begin{bmatrix}
	\frac{dv_1}{dc} \\ \frac{dv_2}{dc} \\ \frac{d\lambda}{dc}
	\end{bmatrix} 
	=\\
	 \begin{bmatrix}
	-\frac{dW^*}{dc}v_N^1 \\ \frac{dW^*}{dc}v_N^1 \\ 0
	\end{bmatrix}
	\end{aligned}
}
	\end{equation}
	
	Consider 
	\[
	M=\begin{bmatrix}
	L_1+W^*-\lambda_{N}^1 I && -W^* && -v_N^1 \\
	-W^* && L_2+W^*-\lambda_{N}^1 I && 0 \\
	-{v_N^1}^T && 0 && 0
	\end{bmatrix}
	\]
	The matrix $L$ can have repeated eigenvalue only if $M$ is singular. Hence, at the threshold $c_1$ there exists $x\neq 0$ such that  $M(W^*)x=0$, thus
	\begin{equation}\label{eigen1}
	\begin{aligned}
	\begin{bmatrix}
	L_1-\lambda_N^1 I && 0 && -v_N^1 \\
	0 && L_2-\lambda_N^1 I && 0 \\
	-{v_N^1}^T && 0 && 0
	\end{bmatrix}x 
	=\\
	\begin{bmatrix}
	-W^* && W^* && 0 \\ W^* && -W^* && 0 \\ 0 && 0 && 0
	\end{bmatrix}x.
	\end{aligned}
	\end{equation}
	Using Theorem \ref{thm_nodallines} and \eqref{opt_cond}, the inner product of the first row on both sides of \eqref{eigen1} with $v_N^1$ shows that $x$ is of the form $x=\left[x_1^T \ \ x_2^T \ \ 0\right]^T$. Then, a linear transformation $y_1=x_1-x_2$, $y_2=x_1+x_2$, in \eqref{eigen1} implies that 
	\begin{equation}\label{eigen2}
	\begin{aligned}
	\bar{L}y_1+\tilde{L}y_2=-2W^* y_1 \\
	\tilde{L}y_1+\bar{L}y_2=0 \ \ \ \ \ \ \ \ \
	\end{aligned}
	\end{equation}
Eliminating $y_2$ in \eqref{eigen2} yields 
	\begin{equation}\label{eigen3}
	\begin{aligned}
	(\bar{L}-\tilde{L}\bar{L}^\dagger\tilde{L})y_1=-2W^* y_1
	\end{aligned}
	\end{equation}
	where $\bar{L}^\dagger$ is the Moore Penrose pseudo-inverse of $\bar{L}$. 
\end{proof}

%\begin{corollary}
%	According to Theorem \ref{thm_nodallines}, a conservative upper-bound for $c^\star$ is achieved, by setting $u_1=0, v_2=\boldsymbol{e}$ as
%	\begin{equation}\label{lowbound_1}
%	\begin{aligned}
%	c^\star\leqslant N\lambda_N^1
%	\end{aligned}
%	\end{equation}
%%	\textbf{However, that is very conservative!}
%\end{corollary}
\begin{figure}[!htb]
	\centering
%	\subfloat[\label{fig:lambda_n30}]{
		\includegraphics[clip,width=.9\columnwidth]{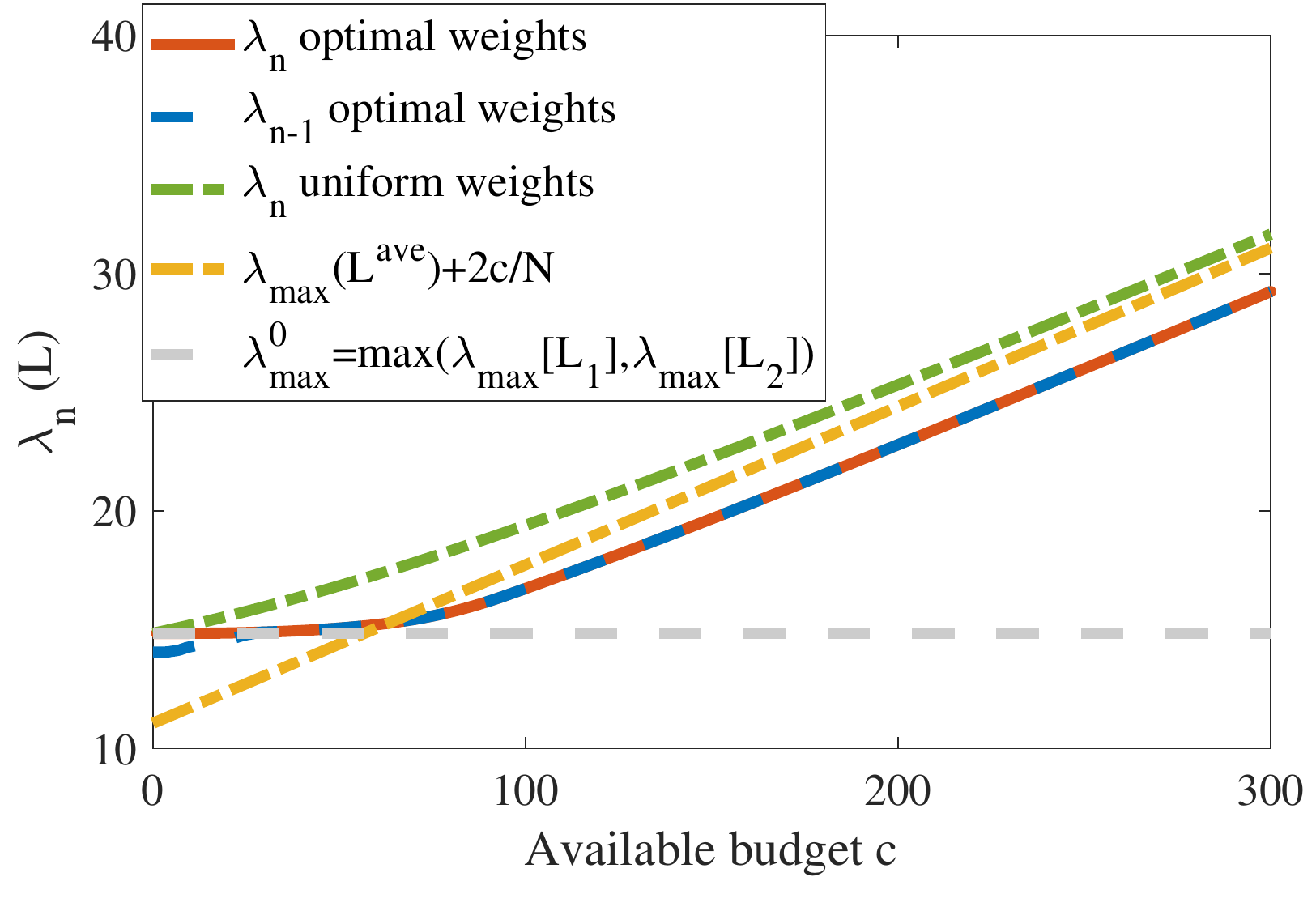} 
		
%		\ \ \ \ \ \ \ \includegraphics[clip,width=.75\columnwidth]{}
%	\subfloat[\label{fig:lambda_n50}]{\includegraphics[clip,width=.5\columnwidth]{lambda_n50.pdf}}%
	\caption{The largest eigenvalue $\lambda_n$ in a multiplex of two random geometric networks with 30 nodes}
	\label{fig:lambda_n30}
\end{figure}
Figure \ref{fig:lambda_n30}  compares the optimal value of $\lambda_n(L)$ to the one obtained by the uniform distribution, as the budget $c$ varies for a multiplex with two different random geometric network layers with 30  nodes. The optimal distribution gives smaller $\lambda_n$ for all budgets. We observe that after coalescing with $\lambda_{n-1}$, i.e. after the threshaold $c_1^*$, the multiplicity of $\lambda_{n}^*$ never decreases back to one. This phenomenon seems to persist in all other examples we have computed. Moreover,  for $c>c_1^*$, it becomes impossible for a condition such as  \eqref{opt_cond} to hold, since there will be generally no diagonal matrix $W$ satisfying multiple conditions corresponding to different eigenvectors of $L$. Therefore, $\lambda_N^1$ will be no longer attainable, and the optimal solution $\lambda^*_n$ enters a nonlinear regime for $c>c_1^\star$. 

In Figure \ref{fig:OptWei_n30}, we illustrate the behavior of the optimal weight distribution for this example, when varying the budget. For $c<c_1^*$, the optimal weight distribution is highly nonuniform in that the total budget is assigned to one node, i.e. to node numbered 11, while the others experiencing zero weight. Inspecting the eigenvector $v_N^1$ corresponding to the largest eigenvalue $\lambda_N^1$ of the individual layers reveals that $v_N^1(11)=0.0013$, and this is the smallest absolute value among the entries of $v_N^1$. Therefore we see that even though \eqref{opt_cond} is not exactly satisfied, i.e. the nodal set of $v_N^1$ is empty,  $\lambda_n^*$ is still close to  $\lambda_N^1$.

\begin{figure}[!htb]
	\subfloat[]{\includegraphics[clip,width=.5\columnwidth]{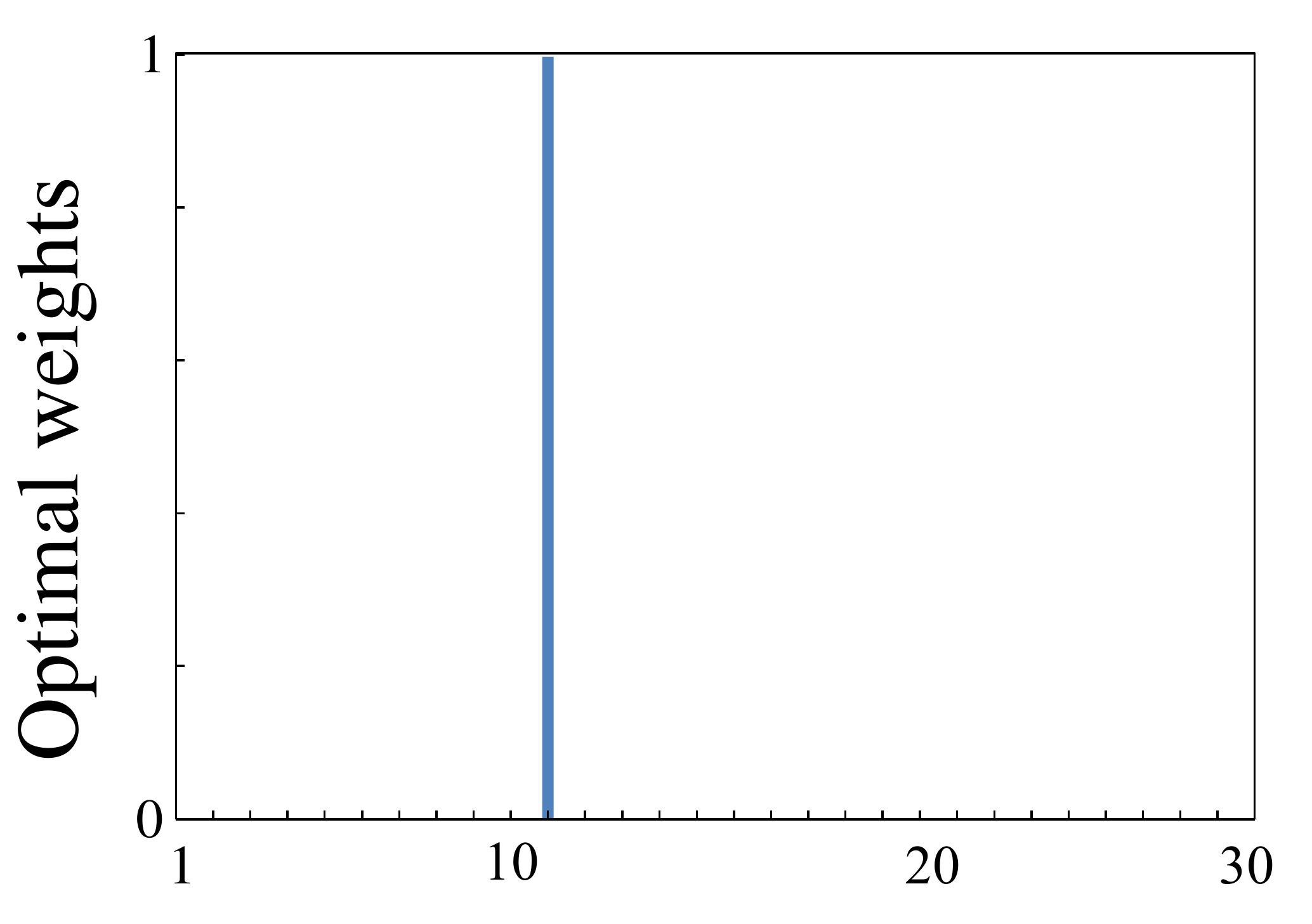}}
	\subfloat[]{\includegraphics[clip,width=.5\columnwidth]{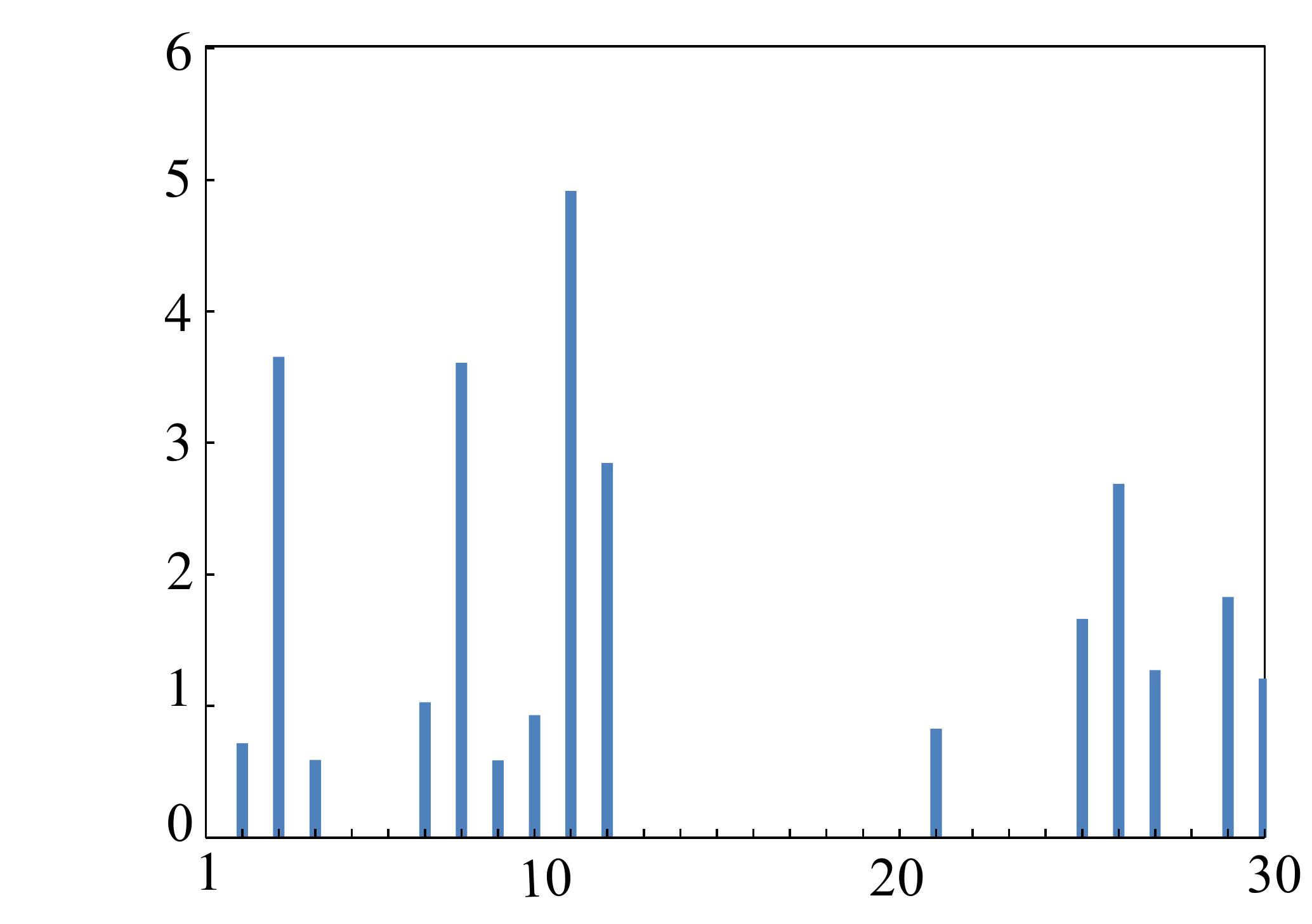}} \\
	\subfloat[]{\includegraphics[clip,width=.5\columnwidth]{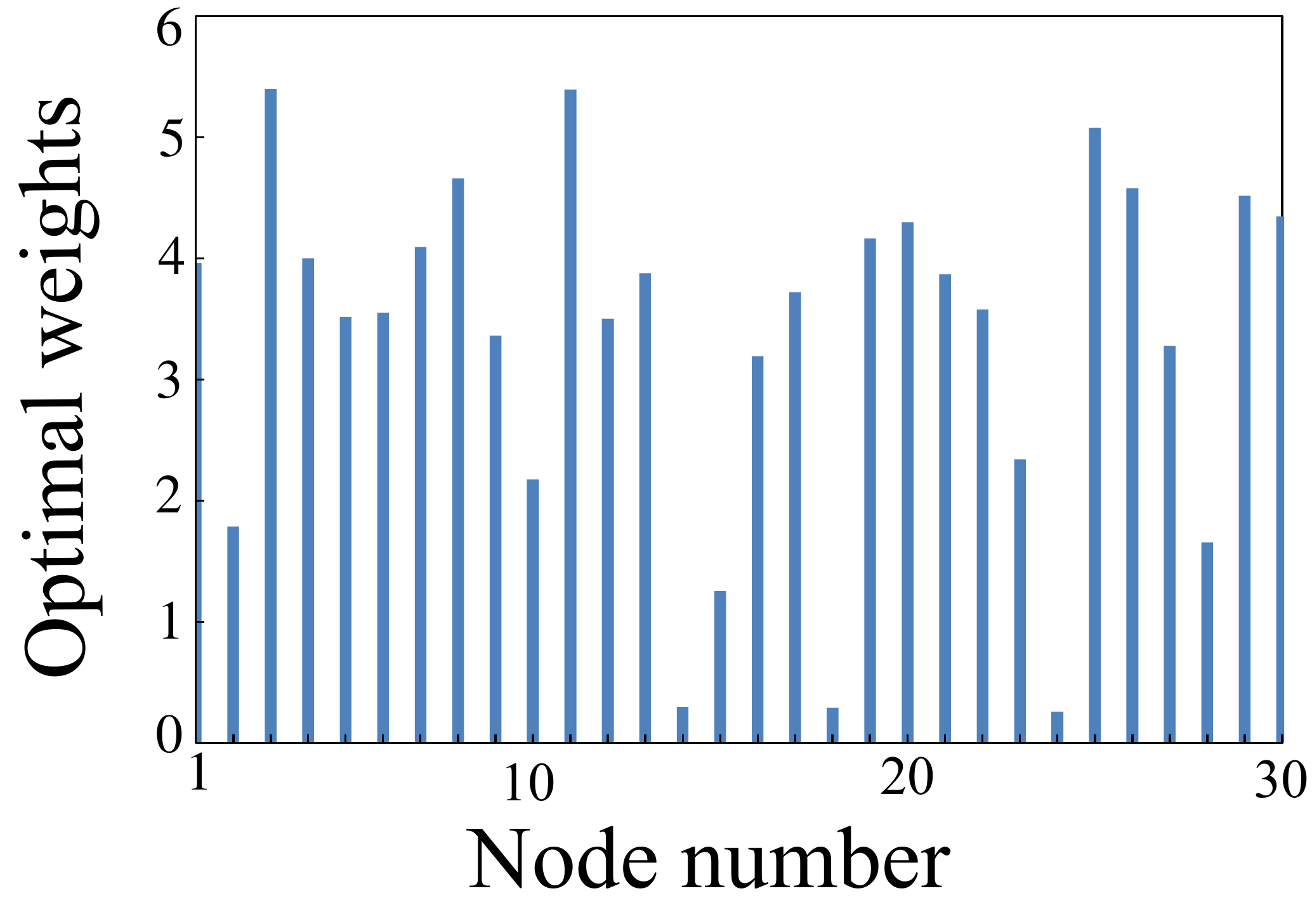}}
	\subfloat[]{\includegraphics[clip,width=.5\columnwidth]{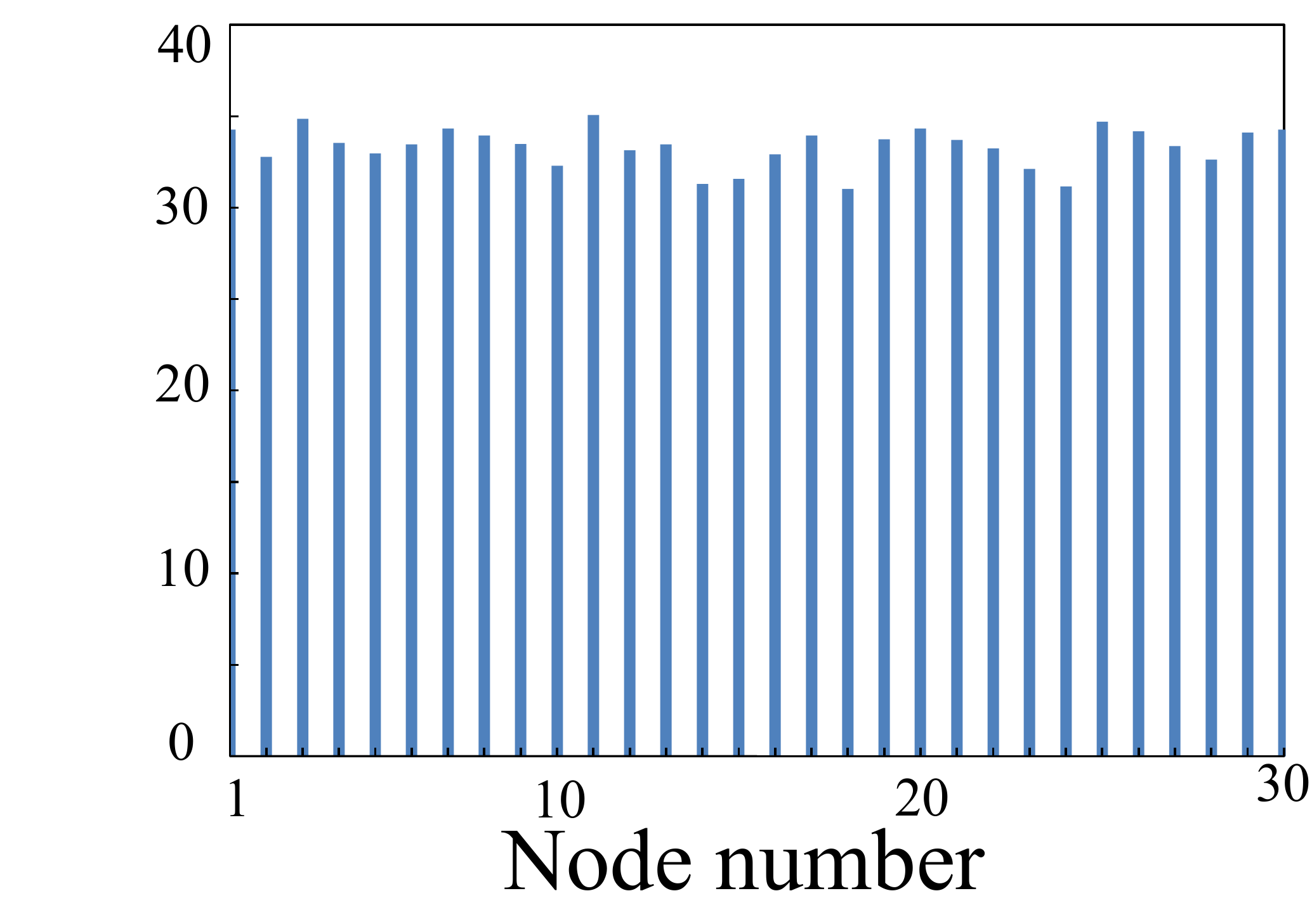}} \\
	\centering
	\subfloat[]{\includegraphics[clip,width=.5\columnwidth]{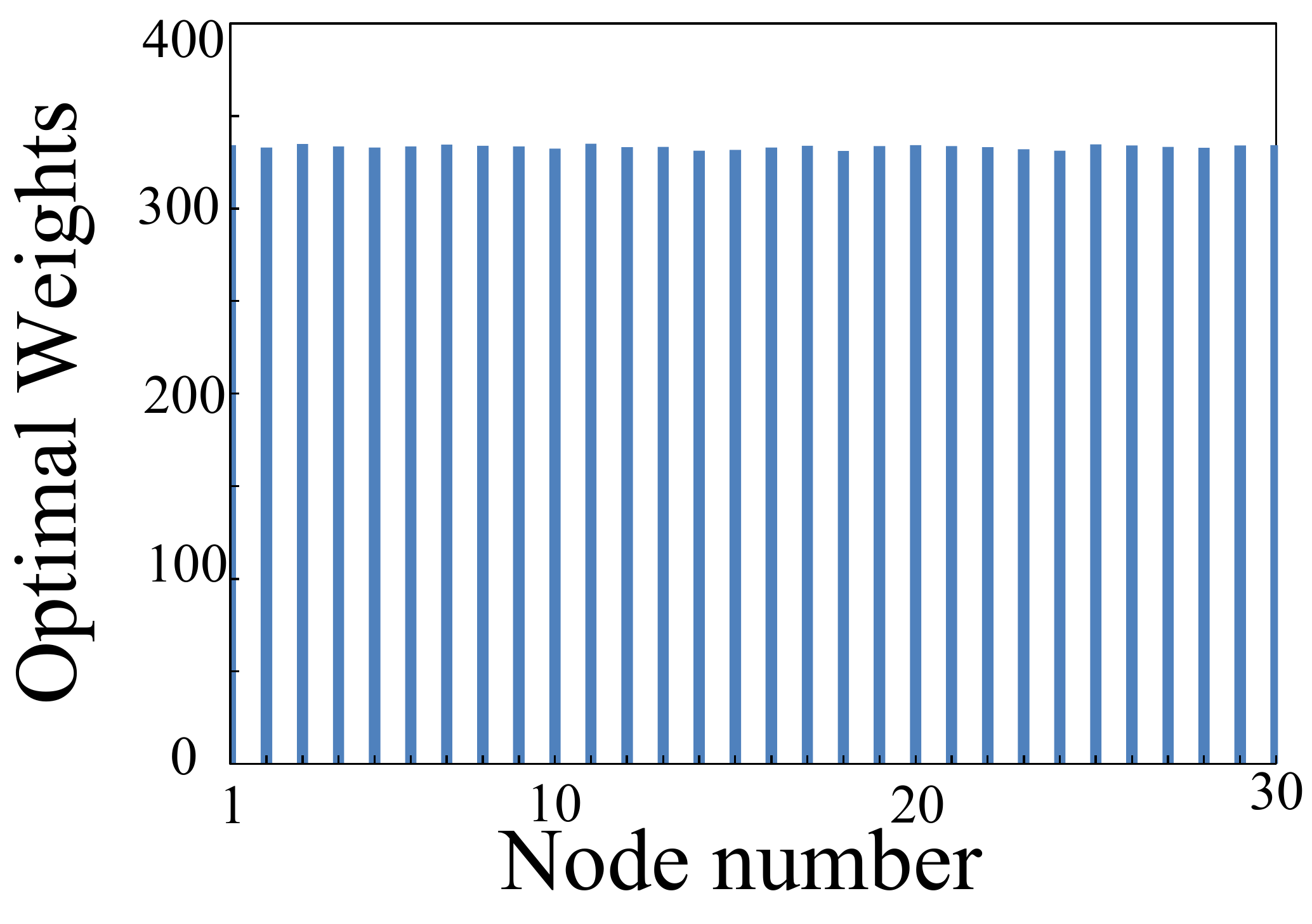}}
	\caption{Optimal weight distribution in the multiplex of two random geometric networks with 30 nodes for the budget (a) $c=1$, (b) $c=30$, (c) $c=100$, (d) $c=1000$, (e) $c=10000$.}
	\label{fig:OptWei_n30}
\end{figure}

%\begin{figure*}[!htb]
%	\subfloat[]{\includegraphics[clip,width=.5\columnwidth]{OptimalWeights50-1.pdf}}
%	\subfloat[]{\includegraphics[clip,width=.5\columnwidth]{OptimalWeights50-100.pdf}} \\
%	\subfloat[]{\includegraphics[clip,width=.5\columnwidth]{OptimalWeights50-1000.pdf}}
%	\subfloat[]{\includegraphics[clip,width=.5\columnwidth]{OptimalWeights50-10000.pdf}}
%	\caption{Optimal weight distribution in the multiplex of two random geometric networks with 50 nodes for the budget (a) $c=1$, (b) $c=100$, (c) $c=1000$, (d) $c=10000$.}
%	\label{fig:OptWei_n50}
%\end{figure*}

Figure \ref{fig:OptWei_n30} 
%and \ref{fig:OptWei_n50}
 also indicates that the optimal weight distribution wants to become uniform with increasing $c$. This seems to be a general phenomenon and we provide an explanation next. First, the uniform weight distribution imposes an upper-bound on $\lambda_n^*$, i.e.,  $\lambda_n^*\leqslant\lambda_{max}(L_{ave})+2c/N$ for large $c$. Second, $\lambda= 2c/N$ is always a lowerbound obtained from substituting  $v_1=-v_2\equiv 1$ in \eqref{lambda_max}. Therefore, for large values of $c$, $\lambda_n^*$ is restricted as follows
\begin{equation}\label{F_c_rest}
\begin{aligned}
\frac{2c}{N}\leqslant \lambda_n^*\leqslant\lambda_{max}(L_{ave})+\frac{2c}{N}, \ \ \forall \ c \ \text{large.}
\end{aligned}
\end{equation}
%Since diagonal elements of $W$ grow with $c$, for very large values of budget $c$ the second term on the right of \eqref{lam_n_large} dominates the first term and \eqref{lam_n_large} is approximately equivalent to $\lambda_n\Vert a\Vert^2\geq 2a^TWa$. Therefore, for large $c$, minimizing $\lambda_n$ is approximately equivalent to minimizing $\lambda_{max}(W)$ the largest eigenvalue of $W$. 
%Since $W$ is diagonal, its largest eigenvalue is minimized by a uniform weight distribution for a given $c$. 
%This implies that for large budget $c$ the optimal weight distribution may be approximated by uniform one. 
Scaling the elements of the Laplacian matrix by $1/c$, scales the eigenvalues accordingly. Moreover, 
as $c$ goes to infinity, the scaled multiplex behaves more and more like a perfect matching bipartite graph, hence $\lambda_n$ tends to be twice the largest weight of the interlayer edges. Consequently, in the limit $\lambda_n^*$ approaches the uniform weight distribution and the lower bound in \eqref{F_c_rest}.

However, for large $c$ the largest eigenvalue $\lambda_n$ has multiplicity,
%\textbf{(intuitively this follows from optimality conditions; however it needs proof!)}, 
while the uniform weight distribution leads to single largest eigenvalue for large $c$ \citep{Ribalta2013Spectral}. Therefore, there is a gap, albeit vanishing, between the optimal and the uniform weight distribution for large budgets shown in Figure \ref{fig:OptWei_n30}(d-e) (for more results see SM \ref{app_radius}).  This becomes clearer in the next section when we examine the dual formulation.

\subsection{Dual formulation and embedding of minimizing $\lambda_{n}$}
%To derive the dual problem to \eqref{Primal_n}, the corresponding Lagrangian is written as:
%\begin{align*}
%M = \lambda_n + \langle Y , \sum_{i,j\in E_3}w_{ij}E_{ij}+L_0 - \lambda_n I\rangle \\
%-\xi(\sum_{ij \in E_3} w_{ij} - c) - \langle \theta , w\rangle
%\end{align*}
%or
%\begin{align*}
%M = \lambda_n\left(1 - \langle Y , I \rangle\right)+ \sum_{ij\in E_3}w_{ij}\left(\langle Y, E_{ij}\rangle- \xi-\theta)\right) \\
%+ c\xi +\langle Y, L_0\rangle 
%\end{align*}
The dual  problem of \eqref{Primal_n} is:
\begin{equation}\label{DualLam_n}
\begin{aligned}
& \underset{Y, \xi}{\text{maximize}}
& & c\xi + \langle Y, L_0\rangle \\
& \text{subject to}
& & \langle Y, E_{ij}\rangle\geq \xi\ \ \text{for}\ \lbrace i,j\rbrace\in E_3 \\
& & & \langle Y, I\rangle = 1\\
& & & Y\succeq 0,~\xi\in \mathbb{R}
\end{aligned}
\end{equation}
Propositions \ref{lem:FeasibleDualSolution} and \ref{lem:StrongDuality} also hold for \eqref{DualLam_n}, thus strong duality is satisfied by the primal and dual problems \eqref{Primal_n} and \eqref{DualLam_n}. 
By considering $Y$ as a Gram matrix $Y = V^TV$, where $V\in \mathbb{R}^{n\times n}$, then the problem \eqref{DualLam_n} is equivalent to the following embedding problem:
\begin{equation}\label{Embedding_n}
\begin{aligned}
& \underset{v_i, \xi}{\text{maximize}}
& & c\xi + \sum_{\lbrace i,j\rbrace\in E_1\cup E_2}\| v_i - v_j \|^2 \\
& \text{subject to}
& & \| v_i - v_j\|^2\geq \xi \ \ \forall \lbrace i,j\rbrace\in E_3\\
& & & \sum_{i\in N}\| v_i \|^2 = 1\\
& & & v_i\in \mathbb{R}^n,~\xi\in \mathbb{R}
\end{aligned}
\end{equation}
We interpret the column vectors $v_i^*$ of the matrix $V^*$ as coordinates for the position of each node in the corresponding embedding into $\mathbb{R}^n$.
 
The following result is similar to Proposition \ref{lem:ProjectionOfEmbedding}.
\begin{proposition}\label{lem:ProjLam_n}
	Projecting an optimal embedding $v_1^*,...,v_n^*$, solving \eqref{Embedding_n}, onto a one-dimensional subspace, yields an eigenvector for the largest eigenvalue $\lambda_n^*$.
\end{proposition}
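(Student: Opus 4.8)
The plan is to mirror the proof of Proposition~\ref{lem:ProjectionOfEmbedding} for $\lambda_2$, exploiting the zero duality gap supplied by the analogues of Propositions~\ref{lem:FeasibleDualSolution} and \ref{lem:StrongDuality} for the pair \eqref{Primal_n}--\eqref{DualLam_n} (which the text asserts hold). First I would write out the Karush--Kuhn--Tucker optimality conditions at a jointly optimal primal--dual pair, with primal optimum $(w^\star,\lambda_n^\star)$ and dual optimum $(Y^\star,\xi^\star)$, where $Y^\star$ is the Lagrange multiplier attached to the matrix inequality $L(w)-\lambda_n I\preceq 0$. The single ingredient I really need is the semidefinite complementary slackness between the two positive semidefinite matrices that pair in the Lagrangian, namely $\langle Y^\star,\ \lambda_n^\star I - L(w^\star)\rangle = 0$. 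I would also note at this stage that minimizing $\lambda_n$ pins the optimal value to the top eigenvalue, i.e. $\lambda_n^\star$ is exactly the largest eigenvalue of $L(w^\star)$, so that the relevant kernel below is genuinely the top eigenspace of $L(w^\star)$ rather than some lower one.

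The second step is the elementary fact that for positive semidefinite matrices $A,B$ with $\langle A,B\rangle=\Tr(AB)=0$ one has $AB=0$ (via $\Tr(AB)=\Tr(A^{1/2}BA^{1/2})=0$ with $A^{1/2}BA^{1/2}\succeq 0$). Applying this to $A=\lambda_n^\star I - L(w^\star)$ and $B=Y^\star$ yields $(\lambda_n^\star I - L(w^\star))Y^\star=0$, equivalently $L(w^\star)Y^\star=\lambda_n^\star Y^\star$. Hence $\mathrm{range}(Y^\star)$ is contained in the eigenspace of $L(w^\star)$ belonging to $\lambda_n^\star$.

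Finally I would translate this into the embedding language of \eqref{Embedding_n}. Writing the Gram factorization $Y^\star=(V^\star)^{T}V^\star$ with column vectors $v_i^\star$, the projection of the optimal embedding onto the line spanned by a unit vector $p\in\mathbb{R}^n$ is the vector $z$ with entries $z_i=\langle p, v_i^\star\rangle$, that is $z=(V^\star)^{T}p$. Using the standard identity $\mathrm{range}\big((V^\star)^{T}V^\star\big)=\mathrm{range}\big((V^\star)^{T}\big)$, the vector $z$ lies in $\mathrm{range}(Y^\star)$, hence in the $\lambda_n^\star$-eigenspace, so $L(w^\star)z=\lambda_n^\star z$. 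Thus every one-dimensional projection of the optimal embedding is an eigenvector for the largest eigenvalue $\lambda_n^\star$ (degenerating to the zero vector precisely when $p\perp\mathrm{range}(V^\star)$).

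I expect the main obstacle to be bookkeeping rather than a new idea: one must make sure the Lagrange multiplier for the matrix inequality is exactly the dual variable $Y^\star$ reconstructed from the embedding, and that the passage from $\langle Y^\star,\ \lambda_n^\star I - L(w^\star)\rangle=0$ to the product being zero is properly justified for the reader. A secondary point deserving a line of care is confirming that the minimization indeed forces $\lambda_n^\star=\lambda_{max}(L(w^\star))$, so that the eigenspace appearing in the complementary-slackness argument is the top one; once these two points are pinned down, the Gram-range identification delivers the conclusion immediately.
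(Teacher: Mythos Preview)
Your proposal is correct and follows essentially the same approach as the paper, which simply defers to the proof of Proposition~\ref{lem:ProjectionOfEmbedding}: complementary slackness gives $\langle Y^\star,\lambda_n^\star I-L(w^\star)\rangle=0$, and since both factors are positive semidefinite this forces the rows of the Gram factor (hence all projections $(V^\star)^{T}p$) into the $\lambda_n^\star$-eigenspace. Your use of the clean ``$\Tr(AB)=0\Rightarrow AB=0$'' lemma is slightly more streamlined than the paper's term-by-term argument in SM~\ref{proof:ProjectionOfEmbedding}, but the content is the same.
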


%From KKT-stationarity:
%\begin{equation}\label{eq:lapKKT_n}
%\lambda v_i - \sum_{ij\in E_1\cup E_2} (v_i -v_j) -\sum_{ij\in E_3} w_{ij}(v_i-v_j) = 0~\text{for}~ i\in N
%\end{equation}
%%The rank of the embedding is the same as the rank of Lapacian matrix.
%
%From KKT-slackness, :
%\[
%(\|v_i-v_j\| - \xi )w_{ij} = 0~~\text{for}~ij\in E_3
%\]
%If $w_{ij} >0$, active sets and subgraph with this edges, active subgraph:
%\[
%\|v_i-v_j\| = \xi ~\text{for}~ij\in E_3
%\]

%One of the embedding is lumped and the other spreads. 
%We have constant inter-link lengths and we try to maximize this length with weight $c$ and on the other hand, we want the nodes in each layer as far as possible from each other. 
%When we increase the weight we give more freedom to the first term and the second term is more important to take care of.
%

	In the embedding problem \eqref{Embedding_n}, although not an explicit constraint, the barycenter is forced to stay at the origin. This follows from Proposition \ref{lem:ProjLam_n} and the fact that the eigenvector coresponding to $\lambda_n(L)$ is perpendicular to the constant ones vector $\boldsymbol{1}$. 
	
	Moreover, we have the following observation.
\begin{remark} \label{rem:EmbDim_n}
	Proposition \ref{lem:ProjLam_n} implies that the multiplicity of $\lambda_n(L)$ is an upper-bound on the dimension of the embedding problem \eqref{Embedding_n}.  
	%This situation has been depicted for two BA networks in Figure \ref{fig:EmbDim}.
\end{remark}

Using Proposition \ref{lem:ProjLam_n}  and  Theorem \ref{thm_nodallines}, for $c\leqslant c_1^*$ the eigenvector corresponding to $\lambda_n^*$ is $v_n=\left( {v_N^1}^T,0\right) ^T$, and therefore the following result holds.

\begin{proposition}\label{Emb_n_Small}
	For budget values up to the threshold $c^\star$, as defined in Theorem \ref{thm_nodallines}, the optimal solution of the embedding problem \eqref{Embedding_n} is one-dimensional and, up to a rotation, is given by
	\begin{equation}\label{eq:SmalEmb}
	\begin{aligned}
	v_i^*=\begin{cases}
	\gamma v_N^1(i)e_1, & \text{if}\ i \in V_1 \\ 
	0, & \text{if}\ i \in V_2
	\end{cases},\ \ \ \ \forall c<c^\star
	\end{aligned}
	\end{equation} 
	where $\gamma$ is a constant and $e_1\in\mathbb{R}^n$ is the first standard basis vector.
\end{proposition}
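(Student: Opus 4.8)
The plan is to prove optimality in two steps: first exhibit the embedding \eqref{eq:SmalEmb} as feasible with objective value $\lambda_n^*(c)$, and then show that optimality forces exactly this form. For the first step I would verify feasibility directly. The normalization $\sum_{i\in V}\|v_i^*\|^2=\gamma^2\|v_N^1\|^2=1$ fixes $\gamma=1/\|v_N^1\|$. Because $E_3$ is a perfect matching and every node of $V_2$ sits at the origin, the interlayer gap for a matched pair $(i,j)$ is $\|v_i^*-v_j^*\|^2=\gamma^2 v_N^1(i)^2$, whose minimum over $E_3$ equals $0$ since the nodal set of $v_N^1$ is nonempty; hence $\xi=0$ is the largest feasible choice and is optimal for this embedding. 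The objective then collapses to $\gamma^2\sum_{\{i,j\}\in E_1}(v_N^1(i)-v_N^1(j))^2=\gamma^2\,(v_N^1)^T L_1 v_N^1=\gamma^2\lambda_N^1\|v_N^1\|^2=\lambda_N^1$, which by Theorem \ref{thm_nodallines} equals $\lambda_n^*(c)$ for $c<c^\star$. Strong duality for \eqref{Primal_n}--\eqref{DualLam_n} then certifies that \eqref{eq:SmalEmb} is optimal.

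For the second step I would show that every optimal embedding has this shape, up to rotation. By Proposition \ref{lem:ProjLam_n}, projecting an optimal embedding onto any one-dimensional subspace yields an eigenvector (or the zero vector) for $\lambda_n^*$. For $c<c^\star$, Theorem \ref{thm_nodallines} guarantees that $\lambda_n^*=\lambda_N^1$ is simple, so the corresponding eigenspace of $L(w^*)$ is exactly the line $\mathrm{span}(v_n)$ with $v_n=({v_N^1}^T,0)^T$. Writing $V$ for the matrix whose $i$th column is $v_i^*$, every projection $V^T p$ therefore lies in $\mathrm{span}(v_n)$; as this holds for all directions $p$, the row space of $V$ is contained in $\mathrm{span}(v_n)$, forcing $\mathrm{rank}(V)\le 1$ and $v_i^*=b\,v_n(i)$ for a single vector $b\in\mathbb{R}^n$. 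By the definition of $v_n$ this reads $v_i^*=b\,v_N^1(i)$ on $V_1$ and $v_i^*=0$ on $V_2$; a rotation aligns $b$ with $e_1$ and the normalization fixes $\|b\|=1/\|v_N^1\|=\gamma$, recovering \eqref{eq:SmalEmb}. The same conclusion follows more directly from semidefinite complementary slackness, which gives $\mathrm{range}(Y)=\mathrm{range}(V^T)\subseteq\ker(\lambda_N^1 I-L(w^*))=\mathrm{span}(v_n)$.

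The hard part will be the rank-one reduction in the second step, namely ruling out that an optimal embedding spreads into more than one dimension or places any mass on $V_2$. This hinges entirely on the simplicity of $\lambda_n^*$ below the threshold, supplied by Theorem \ref{thm_nodallines}. Two points require care: applying Proposition \ref{lem:ProjLam_n} uniformly over all projection directions, including those where the projection vanishes (handled by noting $0\in\mathrm{span}(v_n)$), and ensuring the eigenspace identification refers to the same optimal primal weight $w^*$ that is dual to the embedding. The latter is not an obstacle, because Theorem \ref{thm_nodallines} provides both the simplicity and the explicit eigenvector $v_n$ independently of which optimal $w^*$ is selected; everything else is routine linear algebra.
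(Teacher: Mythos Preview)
Your proposal is correct and follows essentially the same route as the paper, which derives the result in a single sentence by invoking Proposition~\ref{lem:ProjLam_n} together with the simplicity of $\lambda_n^*$ and the explicit eigenvector $v_n=({v_N^1}^T,0)^T$ from Theorem~\ref{thm_nodallines}. Your second step is exactly this argument spelled out in full, and your first step (the direct feasibility and objective-value computation certifying optimality via strong duality) is a helpful addition that the paper omits.
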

An illustration of \eqref{eq:SmalEmb} is given in Figure \ref{fig:WS_ln_Embb}.

Similarly, for very large values of $c$, we have the following statement.

\begin{proposition}\label{Emb_n_Large}
	For very large values of $c$, if $v_1^*,...,v_n^*$ is the optimal solution of the embedding problem \eqref{Embedding_n}, then  $v_i^*\approx -v_{N+i}^*$, for $i=1,...,N$.
\end{proposition}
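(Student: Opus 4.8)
The plan is to exploit the fact that, for large $c$, the objective of the embedding problem \eqref{Embedding_n} is dominated by the term $c\xi$, so that the optimizer is forced to drive $\xi$ as close as possible to its largest feasible value, and I will show that this in turn forces each matched pair to become antipodal. Throughout I write $n=2N$, let the interlayer matching pair node $i\in V_1$ with node $N+i\in V_2$, and set $s_i:=\|v_i\|^2+\|v_{N+i}\|^2$ so that the normalization constraint reads $\sum_{i=1}^{N}s_i=1$.

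First I would record two elementary facts that hold for \emph{every} feasible embedding. The parallelogram-type bound $\|v_i-v_{N+i}\|^2\le 2s_i$ together with the interlayer constraint $\xi\le\|v_i-v_{N+i}\|^2$ gives $\xi\le 2s_i$ for each $i$; summing over $i$ and using $\sum_i s_i=1$ yields the universal ceiling $\xi\le 2/N$. The second fact is the identity $\|v_i+v_{N+i}\|^2=2s_i-\|v_i-v_{N+i}\|^2$, which combined with the interlayer constraint gives $\|v_i+v_{N+i}\|^2\le 2s_i-\xi$. These two observations already encode the whole mechanism: if $\xi$ is pushed up to $2/N$, then necessarily each $s_i\to 1/N$ and each $\|v_i+v_{N+i}\|^2\to 0$.

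The key analytic step is therefore to prove that the optimal $\xi^\star(c)\to 2/N$ as $c\to\infty$. Here I would invoke strong duality (Proposition \ref{lem:StrongDuality}), so that the optimal embedding value equals $\lambda_n^\star(c)$, and write $\lambda_n^\star(c)=c\,\xi^\star(c)+\langle Y^\star,L_0\rangle$. The intralayer term is uniformly bounded, since $Y^\star\succeq 0$ with $\langle Y^\star,I\rangle=1$ forces $\langle Y^\star,L_0\rangle=\Tr(L_0 Y^\star)\le \lambda_{max}(L_0)$. Dividing by $c$ and using the sandwich \eqref{F_c_rest}, namely $2c/N\le \lambda_n^\star(c)\le \lambda_{max}(L_{ave})+2c/N$, gives $\xi^\star(c)=\lambda_n^\star(c)/c-\langle Y^\star,L_0\rangle/c\to 2/N$, while the ceiling from the previous paragraph keeps $\xi^\star(c)\le 2/N$ for every $c$. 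Feeding $\xi^\star(c)\to 2/N$ back into $\xi^\star\le 2s_i$ with $\sum_i s_i=1$ forces each $s_i\to 1/N$, and then $\|v_i^\star+v_{N+i}^\star\|^2\le 2s_i-\xi^\star\to 0$, which is precisely the assertion $v_i^\star\approx -v_{N+i}^\star$.

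The main obstacle is the convergence $\xi^\star(c)\to 2/N$; everything else is bookkeeping with the two elementary inequalities above. In particular the delicate point is justifying that the intralayer contribution stays $O(1)$ uniformly in $c$, which is handled by the trace bound $\langle Y^\star,L_0\rangle\le\lambda_{max}(L_0)$, together with making precise the meaning of ``$\approx$''. The argument yields $\|v_i^\star+v_{N+i}^\star\|\to 0$ as $c\to\infty$ for each $i$, with an explicit rate: since $\xi^\star=2/N-O(1/c)$ forces $2s_i-\xi^\star=O(1/c)$ uniformly in $i$, one gets $\|v_i^\star+v_{N+i}^\star\|=O\!\left(\sqrt{\lambda_{max}(L_0)/c}\,\right)$, so the statement should be read as this asymptotic collapse of each matched pair onto antipodal positions.
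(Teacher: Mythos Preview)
The paper does not actually supply a proof of this proposition: it is stated after the sentence ``Similarly, for very large values of $c$, we have the following statement,'' where ``similarly'' points back to Proposition~\ref{Emb_n_Small}. The implicit argument there combines Proposition~\ref{lem:ProjLam_n} (every one-dimensional projection of the optimal embedding is an eigenvector for $\lambda_n^\star$) with the informal discussion preceding \eqref{F_c_rest}: once $c$ is large, $L(w^\star)/c$ is close to the Laplacian of a uniformly weighted perfect matching, whose top eigenspace consists of vectors of the form $(u,-u)^T$, and this antipodal shape is then inherited by the embedding via Proposition~\ref{lem:ProjLam_n}.

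Your route is genuinely different and more self-contained: you never touch the eigenvector structure and instead squeeze $\xi^\star$ between the elementary parallelogram ceiling $\xi\le 2/N$ and the lower bound coming from strong duality together with $\lambda_n^\star\ge 2c/N$; after that, the parallelogram identity $\|v_i+v_{N+i}\|^2=2s_i-\|v_i-v_{N+i}\|^2$ forces $\|v_i^\star+v_{N+i}^\star\|^2\to 0$. This buys you an explicit rate $\|v_i^\star+v_{N+i}^\star\|=O(c^{-1/2})$ that the paper's heuristic does not deliver, and it avoids any perturbation analysis of eigenspaces. One minor remark: you cite the full sandwich \eqref{F_c_rest}, but you only use its lower half $\lambda_n^\star\ge 2c/N$, which in fact holds for every $c\ge 0$ (it is the Rayleigh quotient at $v=(\boldsymbol{1}^T,-\boldsymbol{1}^T)^T$); the upper bound on $\xi^\star$ is already supplied by your parallelogram inequality, so the ``for large $c$'' caveat that the paper attaches to \eqref{F_c_rest} does not restrict your argument.
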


Figure \ref{fig:WS_ln_Emb}
% and \ref{fig:BA_ln_Emb}
  shows the optimal embedding for a multiplex with two Watts-Strogatz layers. According to Remark \ref{rem:EmbDim_n},  the embedding dimension remains bounded by the multiplicity of $\lambda_n^*$ in Figure \ref{fig:WS_ln_Emba}. The embedding for $c=1<c^\star$, in \ref{fig:WS_ln_Embb} shows a 1-dimensional embedding where the nodes of the layer with the largest $\lambda_N$, in this case $G_1$, are distributed along a line centered on the origin, while the layer $G_2$ is concentrated at the origin.  Consequently,  the optimal embeddings for small budgets $c<c^\star$ are based on \eqref{eq:SmalEmb} and are driven by $\lambda_N^1$ and its corresponding eigenvector $v_N^1$.
  
For budgets slightly or moderately above $c_1^*$, 
the embedding of the individual layers starts to expand in the plane, see Figures \ref{fig:WS_ln_Embc} and \ref{fig:WS_ln_Embd}. However, the optimal embedding is still influenced by $\lambda_N^1$ and $v_N^1$ so the nodes of $G_2$ with smaller largest eigenvalue unfold slowly around the origin, while the nodes of $G_1$ are still related to the components of $v_N^1$. 

For larger budgets $c$, the optimal embedding expands into 3-D space. Figure \ref{fig:WS_ln_Embe} shows the expansion of the nodes of $G_2$  in 3-D space. %Figure \ref{fig:BA_ln_Embe} shows, for BA networks, the nodes of $G_2$ are still concentrated around the origin and are surrounded by $G_1$. \\
%until, for larger values of $c$ 
In Figure \ref{fig:WS_ln_Embf}, we display a 2-D projection of the embedding and see that the
nodes of the two layers are mixed and become more homogeneous, in the sense that each interlayer link in the embedding crosses the origin at the midpoint, thus confirming the claim in Proposition \ref{Emb_n_Large}.

\begin{figure}[!htb]
	\subfloat[\label{fig:WS_ln_Emba}]{\includegraphics[clip,width=.6\columnwidth]{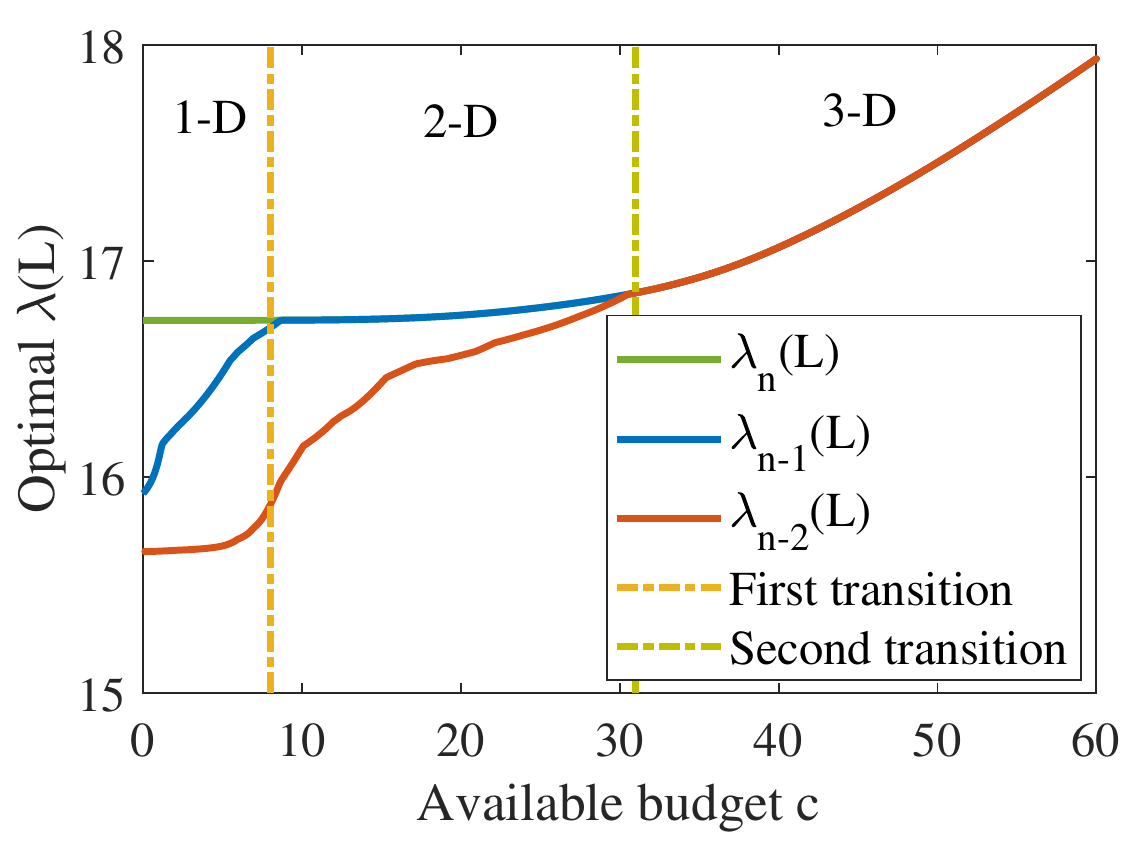}}
	
	\subfloat[\label{fig:WS_ln_Embb}]{\includegraphics[clip,width=.3\columnwidth]{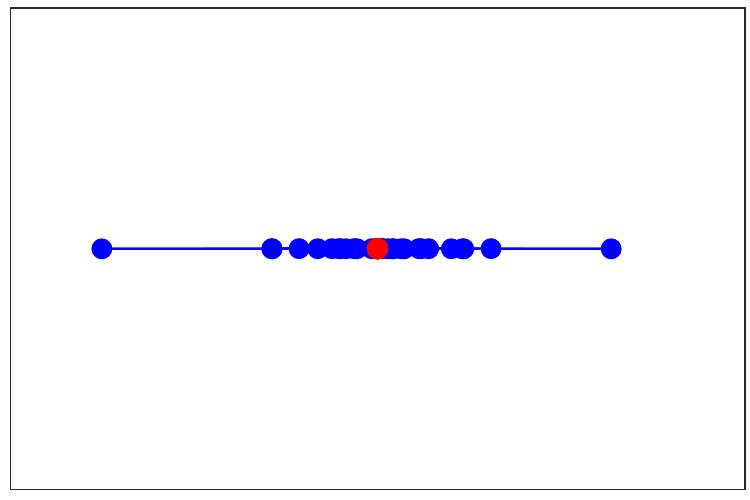}} \ \ \ 	
	\subfloat[\label{fig:WS_ln_Embc}]{\includegraphics[clip,width=.3\columnwidth]{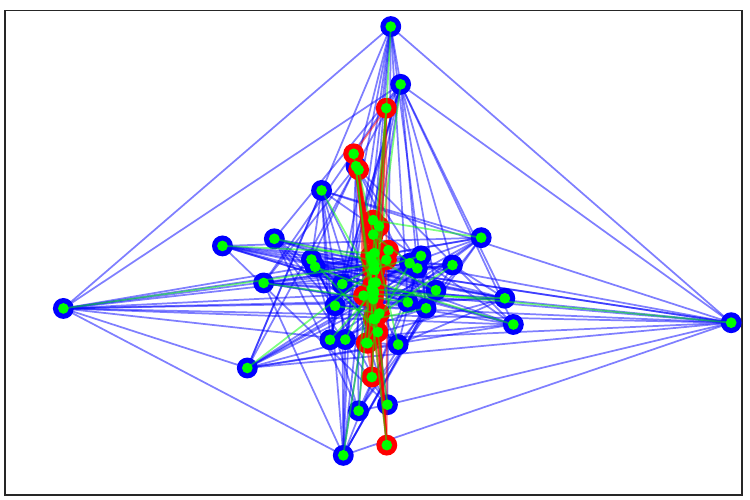}} \ \ \
	\subfloat[\label{fig:WS_ln_Embd}]{\includegraphics[clip,width=.3\columnwidth]{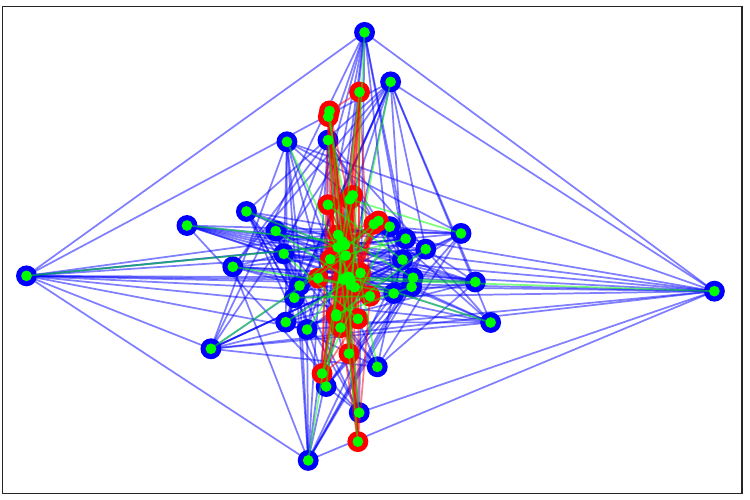}}
	
	\subfloat[\label{fig:WS_ln_Embe}]{\includegraphics[clip,width=.4\columnwidth]{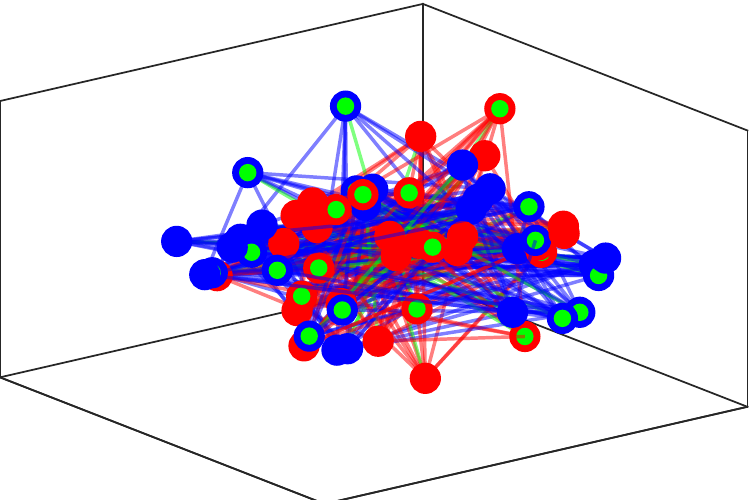}} \ \ \
	\subfloat[\label{fig:WS_ln_Embf}]{\includegraphics[clip,width=.3\columnwidth]{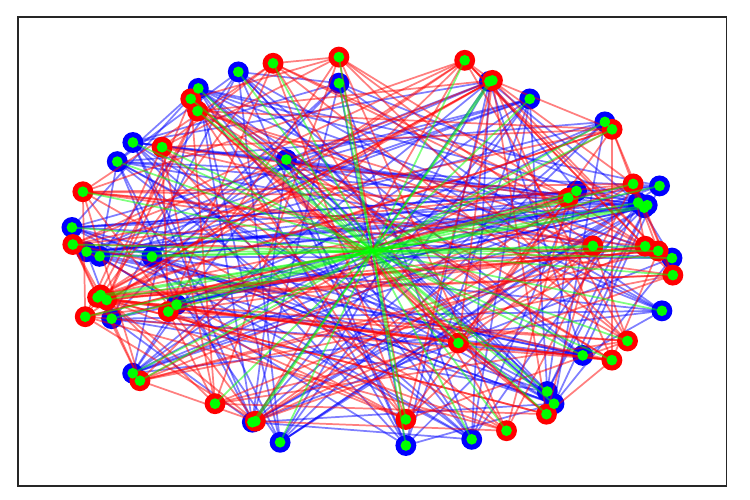}}
	
	\subfloat[]{\includegraphics[clip,width=.6\columnwidth]{Legend2.pdf}}
	
	\caption{Optimal embeddings corresponding to minimum $\lambda_n$ in a multiplex of two different Watts-Strogatz networks with 30 nodes. The largest eigenvalues of individual networks are $16.7251=\lambda_{\text{max}}(L_1)>\lambda_{\text{max}}(L_2)=14.4187$. (a) The three largest eigenvalues of supra-Laplacian versus available budget $c$. The first transition occurs at $c^\star=8.9$ where $\lambda_{n-1}(L)$ coalesces $\lambda_n(L)$, and the second transition occurs at $c^{\star\star}=30.5$ where $\lambda_{n-2}(L)$ coalesces $\lambda_{n-1}(L)$ and $\lambda_n(L)$. (b) 1-D embedding for $c=1$. (c) 2-D embedding for $c=20$. (d) 2-D embedding for $c=30$. (e) 3-D embedding for $c=100$. (f) Projection of embedding on 2-D for extremely large values of $c$ (weights are close to uniform, however, multiplicity is $>1$).}
	\label{fig:WS_ln_Emb} 
\end{figure}

\section{Minimizing spectral width $\lambda_n-\lambda_2$}\label{app:gap}
%\subsection{Primal problem for minimizing spectral width $\lambda_n-\lambda_2$}
The problem of distributing the total budget $c$ on the inter-layer links to minimize the spectral width of the Laplacian matrix is defined as 
\begin{equation}\label{PrimalGap}
\begin{aligned}
& \underset{w_{ij}, \lambda_n, \lambda_2, \mu}{\text{minimize}}
& & \lambda_n-\lambda_2 \\
& \text{subject to}
& & \sum_{\lbrace i,j\rbrace\in E_3}w_{ij}L_{ij}+L_0 - \lambda_n I \preceq 0 \\
& & & \sum_{\lbrace i,j\rbrace\in E_3}w_{ij}L_{ij}+L_0+\mu ee^T - \lambda_2 I \succeq 0 \\
& & & \sum_{\lbrace i,j\rbrace\in E_3}w_{ij}= c \\
& & & w_{ij}\geq 0 
\end{aligned}
\end{equation}
where $L_0 = \sum_{\lbrace i,j\rbrace\in E_1\cup E_2}L_{ij}$. In terms of Rayleigh quotients,  \eqref{PrimalGap} becomes
\begin{equation}\label{GapCharac}
\begin{split}
H(c) & := \underset{\underset{w^T\boldsymbol{e}=c}{w\geq 0}}{\text{min}} \ \ \lambda_n[L(w)]-\lambda_2[L(w)] \\
\lambda_2[L(w)]&=\underset{\underset{\Vert u\Vert \neq 0}{u^T\boldsymbol{e}=0}}{\text{min}}, \ \ \frac{u^TL(w)u}{\Vert u\Vert^2}, \\
 \lambda_n[L(w)]&=\underset{\Vert v\Vert \neq 0}{\text{max}} \ \ \frac{v^TL(w)v}{\Vert v\Vert^2}
\end{split}
\end{equation} 

\begin{lemma}\label{lem:GapReach}
	Assume that $\lambda_N^1$ is the greatest largest Laplacian eigenvalue of the layers. Then, $H(c)$ is bounded below as follow
	\begin{equation}\label{GapBound}
	\begin{aligned}
	H(c)\geq\lambda_N^1-\frac{2c}{N}
	\end{aligned}
	\end{equation}
	 and  \eqref{GapBound} is a strict inequality for $c\neq 0$.
\end{lemma}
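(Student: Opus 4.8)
The plan is to bound the two eigenvalues appearing in \eqref{GapCharac} separately and then subtract, reserving a short compactness argument for the strictness claim. For the largest eigenvalue I would avoid re-optimizing and simply invoke monotonicity: every feasible $w$ has $w\geq 0$, so the interlayer term is positive semidefinite and $L(w)\succeq L_0$. By Remark~\ref{remark:posPert} this gives $\lambda_n[L(w)]\geq \lambda_n[L_0]=\max(\lambda_{max}(L_1),\lambda_{max}(L_2))=\lambda_N^1$ for \emph{every} feasible $w$, not merely the minimizer.

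For the second smallest eigenvalue I would exhibit one explicit admissible test vector. Let $\ones$ denote the all-ones vector of length $N$ and set $u=(\ones^T,-\ones^T)^T$; since the two layers have equal size, $u$ is orthogonal to the all-ones vector on $V$ and is thus admissible in the variational characterization of $\lambda_2$ in \eqref{GapCharac}. Using $L_1\ones=L_2\ones=0$ and $W=\diag(w)$, a direct computation gives
\begin{equation}
L(w)u=\begin{bmatrix} 2W\ones \\ -2W\ones\end{bmatrix},\qquad u^TL(w)u=4\,\ones^TW\ones=4\sum_{\{i,j\}\in E_3}w_{ij}=4c,
\end{equation}
while $\|u\|^2=2N$. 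Hence $\lambda_2[L(w)]\leq u^TL(w)u/\|u\|^2=2c/N$ for every feasible $w$. Combining the two bounds yields $\lambda_n[L(w)]-\lambda_2[L(w)]\geq \lambda_N^1-2c/N$ for all feasible $w$, and taking the minimum over $w$ establishes \eqref{GapBound}.

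For the strict inequality when $c\neq 0$, I would argue as follows. The feasible region $\{w\geq 0:\ w^T\boldsymbol{e}=c\}$ is a compact simplex and $\lambda_n-\lambda_2$ is continuous, so $H(c)$ is attained at some $w^\star$. If equality held in \eqref{GapBound}, then both separate bounds must be tight at $w^\star$, i.e. $\lambda_n[L(w^\star)]=\lambda_N^1$ and $\lambda_2[L(w^\star)]=2c/N$. Tightness of the second forces $u$ to realize the minimum in the Rayleigh characterization, so $u$ is a $\lambda_2$-eigenvector: $L(w^\star)u=(2c/N)u$. By the block computation above this means $2W^\star\ones=(2c/N)\ones$, forcing the uniform distribution $W^\star=(c/N)I$. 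But then the vector $(\,(v_N^1)^T,0)^T$ has Rayleigh quotient $\lambda_N^1+c/N$, so $\lambda_n[L(w^\star)]\geq \lambda_N^1+c/N>\lambda_N^1$ whenever $c>0$, contradicting the tightness of the first bound. Therefore $H(c)>\lambda_N^1-2c/N$ for $c\neq 0$.

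I expect the main obstacle to be the strictness step rather than the bound itself: one must notice that equality in the combined estimate forces \emph{simultaneous} tightness of the two one-sided bounds (handled by attainment on the compact feasible set), and that tightness of the $\lambda_2$-bound rigidly pins the weights to be uniform, after which a second, different test vector $(\,(v_N^1)^T,0)^T$ is used to break the $\lambda_n$-bound. The separate bounds, by contrast, are immediate from monotonicity and a single Rayleigh-quotient evaluation.
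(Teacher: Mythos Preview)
Your argument is correct and follows the same strategy as the paper: bound $\lambda_n\geq\lambda_N^1$ via monotonicity and $\lambda_2\leq 2c/N$ via the test vector $(\ones^T,-\ones^T)^T$, then show that simultaneous equality forces the uniform distribution $W^\star=(c/N)I$, which is incompatible with $\lambda_n=\lambda_N^1$. The only difference is in how this last incompatibility is obtained: the paper invokes the necessary condition \eqref{opt_cond} from Theorem~\ref{thm_nodallines} and observes that a uniform $W$ cannot satisfy $Wv_N^1=0$, whereas you bypass that theorem entirely and directly evaluate the Rayleigh quotient of $((v_N^1)^T,0)^T$ under uniform weights to get $\lambda_n\geq\lambda_N^1+c/N>\lambda_N^1$. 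Your route is slightly more self-contained, since it does not lean on the additional hypotheses (simplicity of $\lambda_N^1$, non-empty nodal set, $c<c_1^*$) under which \eqref{opt_cond} was derived.
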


\begin{proof}
	We proved $\lambda_2\leq 2c/N$ in \citep{shakeri2015PRL} and showed in  \eqref{eq_lowerbound}  that $\lambda_n\geq\lambda_N^1$. Thus the lowerbound \eqref{GapBound} is trivial; furthermore, the equality $H(c)=\lambda_N^1-2c/N$ is possible only if $\lambda_n=\lambda_N^1$ and $\lambda_2=2c/N$. Therefore, the weight distribution must  satisfy the weight distribution conditions corresponding to the problems of maximizing $\lambda_2$ and minimizing $\lambda_n$ below the thresholds $c^*$ and $c^*_1$ respectively. However, a uniform weight distribution can not satisfy \eqref{opt_cond} for $v_N^1\neq0$ and $c\neq0$.

\end{proof}

	Lemma \ref{lem:GapReach} implies that, the lower bound \eqref{GapBound} is not reachable for $c>0$. However, since the lower bound is reachable for the extreme case of $c=0$, it follows from continuity, that the solution of \eqref{PrimalGap} approaches  the lower bound \eqref{GapBound}  for small budgets. That is
	\begin{equation}\label{GapAsymptote}
	\begin{aligned}
	H(c)\rightarrow\lambda_N^1-\frac{2c}{N} \ \ \text{as c} \rightarrow0
	\end{aligned}
	\end{equation}

\begin{remark}\label{rema:GapLarge}
	Since the algebraic connectivity $\lambda_2$ remains bounded for large budgets, according to \eqref{ConnecBound}, while the largest eigenvalue $\lambda_n$ grows unbounded after the threshold $c^\star$, it follows minimizing the gap $\lambda_n-\lambda_2$ is equivalent to minimizing $\lambda_n$ in such condition. Therefore, for large $c$, the solution to \eqref{PrimalGap}  approaches a uniform weight distribution. Moreover, according to \eqref{ConnecBound} and \eqref{F_c_rest}, the minimum gap is bounded as follows
	\begin{equation}\label{HBound}
	\begin{aligned}
	\frac{2c}{N}-\lambda_2\left(L^{\text{ave}}\right)\leqslant H(c)\leqslant\lambda_{\text{max}}(L^{ave})-
	\lambda_2\left(L^{\text{ave}}\right)+\frac{2c}{N}
	\end{aligned}
	\end{equation}  
\end{remark}

In Figure \ref{fig:GapPrimal}, we show the spectral width in a multiplex of two random geometric networks with 30 nodes. We see that the minimized gap is smaller than that of the uniform weight distribution in all budget regimes. The optimal gap can be investigated in three regimes of small, moderate, and large $c$. For small $c$, $c<c^*$, the gap approaches the linear asymptote \eqref{GapAsymptote} so that it begins from $\lambda_N^1$ at $c=0$ and decreases with slope $2c/N$. Then, it remains (almost) constant for moderate $c$, $c^*<c<c^\star$, and it increases (approximately) with slope $2c/N$ for large $c$, $c>c^\star$.  \\
\begin{figure}[!htb]
	\centering
	\includegraphics[clip,width=.75\columnwidth]{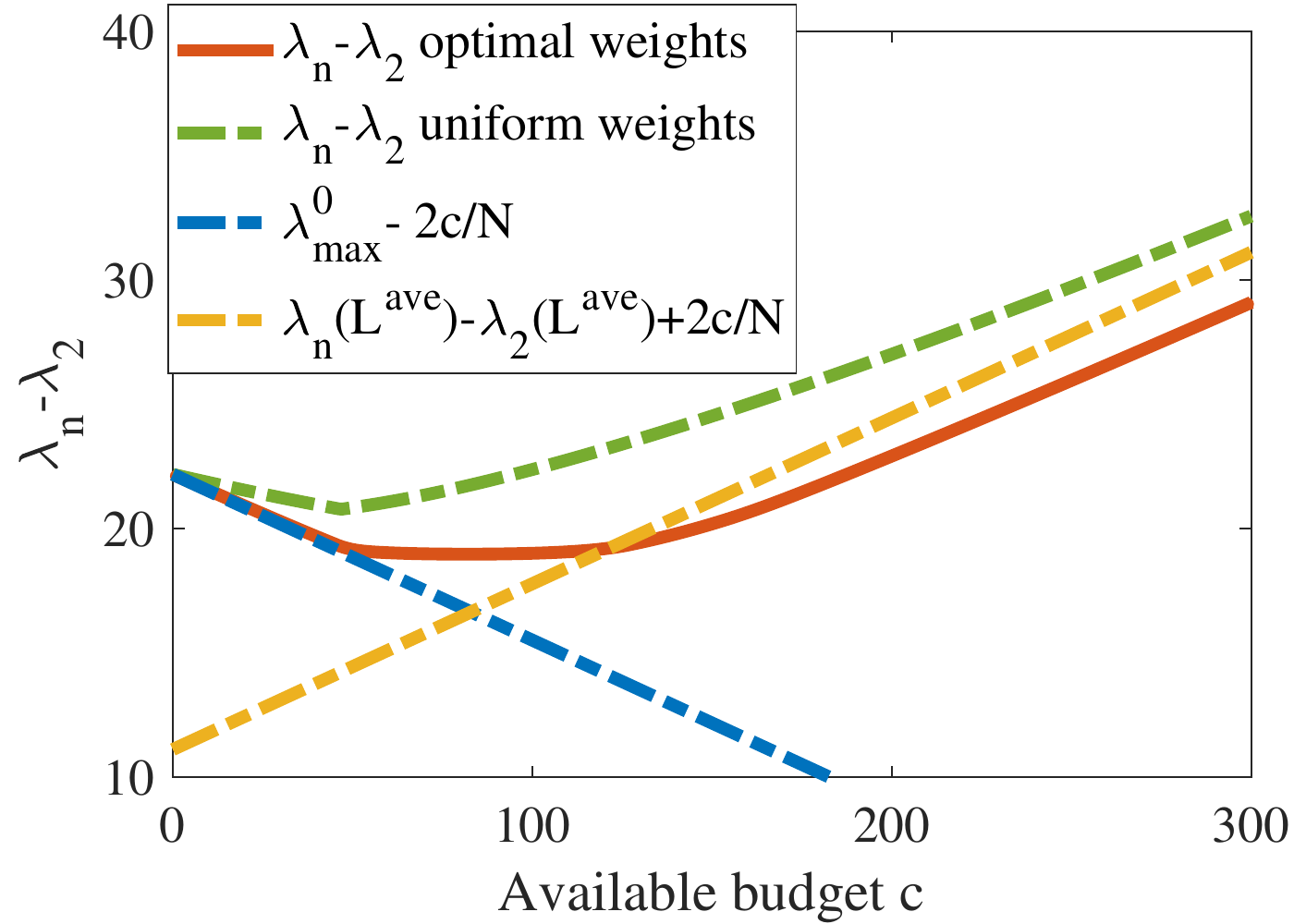}
	\caption{The optimal spectral width $\lambda_n-\lambda_2$ in a multiplex of two random geometric networks with 30 nodes and its related bounds.}
	\label{fig:GapPrimal}
\end{figure}

In Figure \ref{fig:GapAll}, we compare different gaps associated with different optimization problems for a multiplex of two random geometric networks with 30 nodes. According to Remark \ref{rema:GapLarge}, for large $c$, the gap obtained by minimizing $\lambda_n-\lambda_2$ approaches the gap obtained for the minimizer when minimizing only $\lambda_n$. For moderate $c$, this relation is inverted. Finally, for small $c$, the gap obtained for the weights that  maximize the algebraic connectivity is the same as the one for uniform weights.   while it grows with steep slope after the corresponding threshold $c^*$. The largest gap is thereby associated with the problem of maximizing the algebraic connectivity. 
\begin{figure}[!htb]
	\centering
	\includegraphics[clip,width=.75\columnwidth]{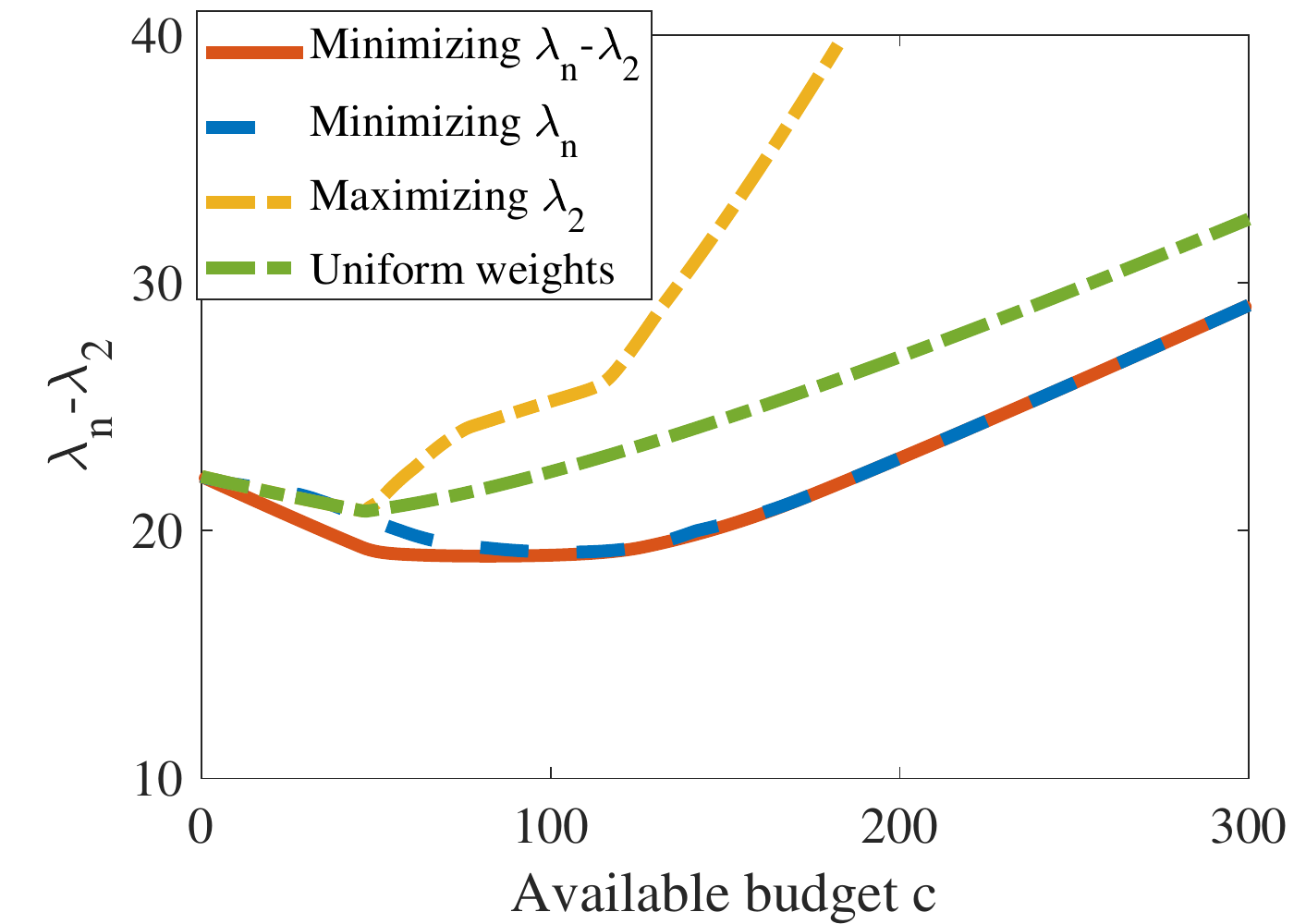}
	\caption{The spectral width $\lambda_n-\lambda_2$ associated with different optimization problems for a multiplex of two random geometric networks with 30 nodes: red is the optimal spectral width, dashed green is the spectral width for the weight distribution maximizing $\lambda_2$, dashed blue is the spectral width for the weight distribution minimizing $\lambda_n$.}
	\label{fig:GapAll}
\end{figure}

The above observations are supported by studying the optimal weight distributions obtained from different problems  (Figure \ref{fig:weight}). For small budgets, the weights that are  minimizing the gap are more similar to the uniform weights for maximizing the algebraic connectivity. However, minimizing the largest eigenvalue corresponds to completely nonuniform weights. For large budgets, the weight distribution for minimizing the gap is similar to weights for minimizing the largest eigenvalue (for more on this see SM \ref{app_gap_dim}).  
\begin{figure}[!htb]
		\centering
	\subfloat[\label{fig:w_gap}]{\includegraphics[clip,width=.5\columnwidth]{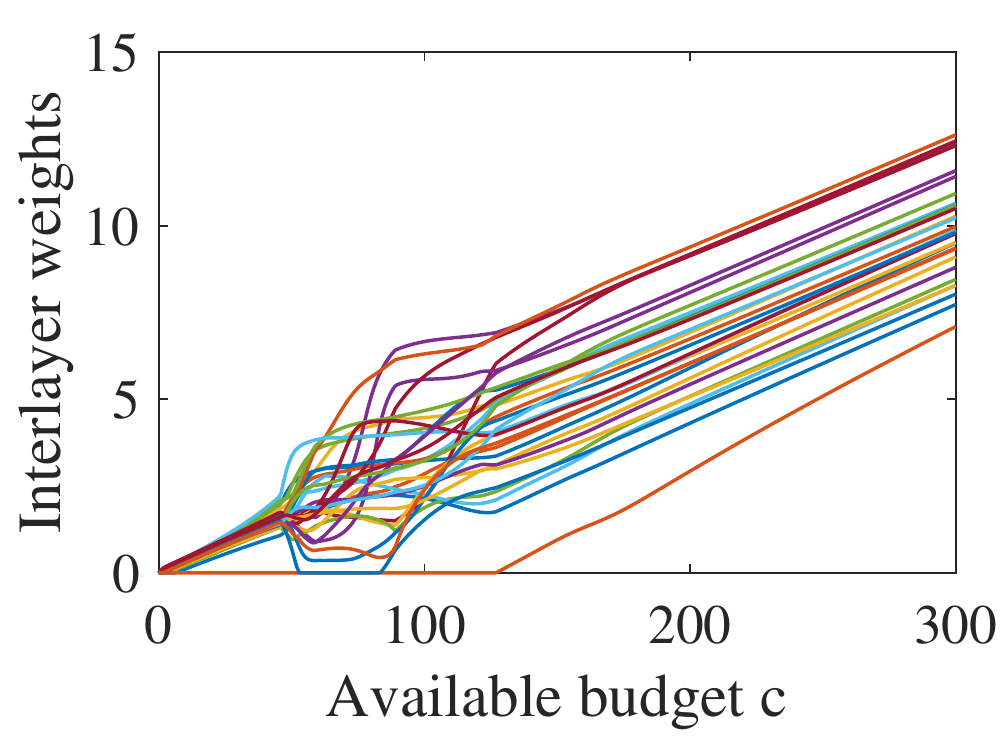}}
	\subfloat[\label{fig:w_lam_n}]{\includegraphics[clip,width=.5\columnwidth]{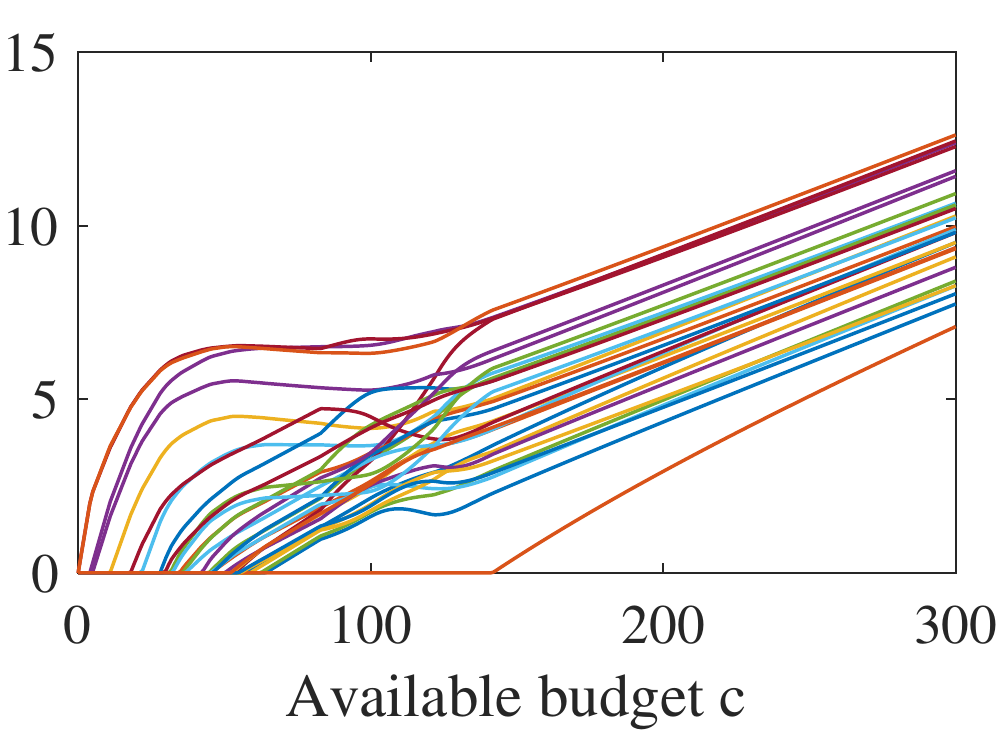}} \\
	\subfloat[\label{fig:w_lam_2}]{\includegraphics[clip,width=.5\columnwidth]{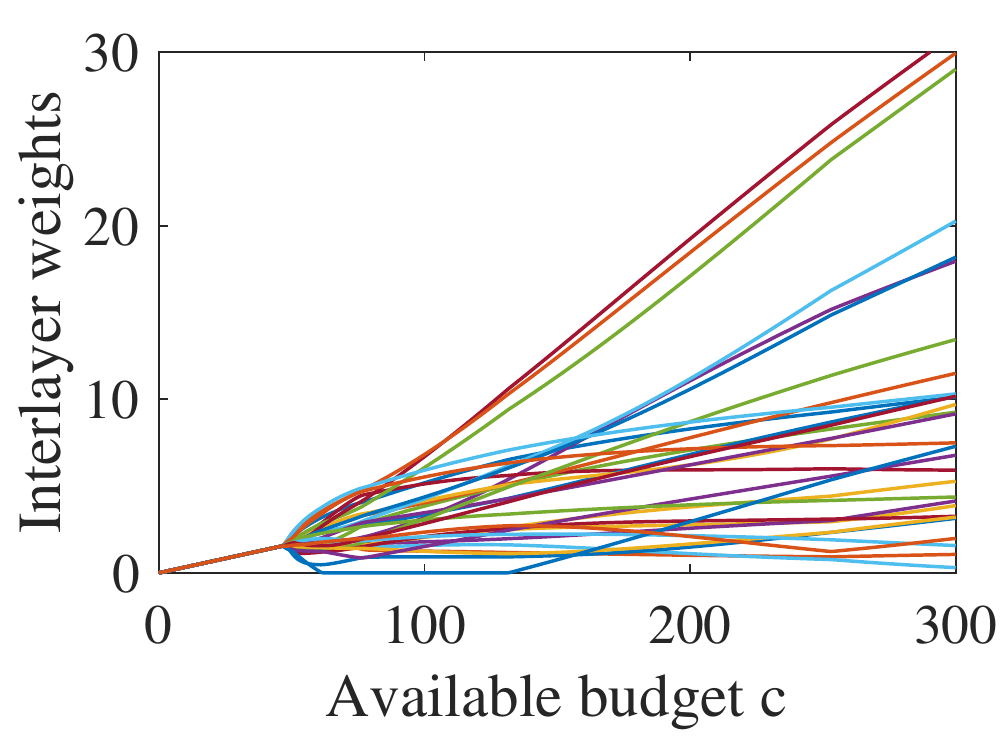}}
	\caption{Optimum weight distributions in a multiplex of two random geometric networks with 30 nodes associated with (a) Minimizing the gap. (b) Minimizing the largest eigenvalue. (c) Maximizing the algebraic connectivity.}
	\label{fig:weight}
\end{figure}

\subsection{The embedding problem associated to minimizing the spectral width }

%The lagrangian is:
%\begin{align*}
%M = \lambda_n-\lambda_2+\langle Y,  \sum_{ij\in E_3}w_{ij}E_{ij}+L_0-\lambda_nI\rangle \\
%-\langle X, \sum_{ij\in E_3}w_{ij}E_{ij}+L_0+\mu ee^T-\lambda_2I\rangle \\
%-\xi (\sum_{ij\in E_3}w_{ij} - c) -\langle \theta , \bw \rangle
%\end{align*}
%\begin{align*}
%M = \lambda_n(1-\langle Y,I\rangle) -\lambda_2(-1+\langle X,I\rangle) -\mu\langle ee^T,X \rangle +\\
%\sum_{ij\in E_3}w_{ij}\left( \langle Y,E_{ij}\rangle -\langle X, E_{ij}\rangle -\xi  - \theta\right)\\
% +c\xi+\langle Y,L_0\rangle -\langle X,L_0\rangle 
%\end{align*}
%%\subsection*{Dual embedding of Maximizing $\lambda_n-\lambda_2$}
%The dual problem is:

The dual problem of \eqref{PrimalGap} is 
\begin{equation}\label{DualGap}
\begin{aligned}
& \underset{X,\ Y,\ \xi}{\text{maximize}}
& & c\xi+\langle Y,L_0\rangle -\langle X,L_0\rangle  \\
& \text{subject to}
& &  \langle X, I\rangle = 1\\
& & & \langle Y, I\rangle = 1\\
& & & \langle X, ee^T\rangle = 0\\
& & & \langle Y,E_{ij}\rangle-\langle X,E_{ij}\rangle -\xi \geq 0\\
& & & \xi\in \mathbb{R} ,\ X,\ Y\succeq 0 
\end{aligned}
\end{equation}

	It is an easy exercise to show that the primal problem \eqref{PrimalGap} and dual problem \eqref{DualGap} have feasible solutions, and thus common finite optimal value. Moreover,  strong duality holds and the optimal value is attainable.  

By Gram representations $X = U^TU$ and $Y = V^TV$, we obtain the embedding problem  as the following non-convex problem:
\begin{equation}\label{EmbeddingGap}
\begin{aligned}
& \underset{u_i,\ v_i,\ \xi}{\text{maximize}}
& & c\xi-\sum_{\lbrace i,j\rbrace\in E_1\cup E_2}\lVert u_i-u_j\rVert^2 +  \\
& & & \sum_{\lbrace i,j\rbrace\in E_1\cup E_2}\lVert v_i-v_j\rVert^2  \\
& \text{subject to}
& &  \sum_{i\in N}\lVert u_i\rVert^2 = 1\\
& & & \sum_{i\in N}\lVert v_i\rVert^2 = 1\\
& & &  \sum_{i\in N} u_i = 0\\
& & &  \lVert u_i-u_j\rVert^2-\lVert v_i-v_j\rVert^2+\xi \leq 0, \ \forall \lbrace i,j\rbrace\in E_3\\
& & & \xi\in \mathbb{R} ,\ u_i,\ v_i\in\mathbb{R}^n \ (i\in N)
\end{aligned}
\end{equation}

The following proposition guarantees this combined embedding is directly related to the single embeddings \eqref{Lambda2Embedding} and \eqref{Embedding_n}. Namely, the embedding problem \eqref{EmbeddingGap} has the same projection properties as in Propositions \ref{lem:ProjectionOfEmbedding} and \ref{lem:ProjLam_n}. 

\begin{proposition}\label{lem:ProjGap}
	The projections of the optimal $u$ and $v$ vectors onto one-dimensional subspaces yield eigenvectors to the second eigenvalue $\lambda_2(L)$ and  the largest eigenvalue $\lambda_n(L)$ respectively \footnote{Proof is similar to SM \ref{proof:ProjectionOfEmbedding}, and follows from complementary slackness. }.
\end{proposition}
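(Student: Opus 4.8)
The plan is to follow the hint in the footnote and obtain the result from strong duality together with complementary slackness for the semidefinite pair \eqref{PrimalGap}--\eqref{DualGap}. Both problems are feasible and attain their optima with no duality gap (as noted above), so fix optimal primal data $w^*,\lambda_n^*,\lambda_2^*,\mu^*$ and optimal dual data $X^*,Y^*,\xi^*$. The key structural observation is that the pairing decouples: the matrix $Y^*$ is the multiplier for the $\lambda_n$-constraint only, while $X^*$ is the multiplier for the $\lambda_2$-constraint only. This is exactly why the argument reduces to two independent copies of the reasoning already used in Propositions \ref{lem:ProjectionOfEmbedding} and \ref{lem:ProjLam_n}.

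For the $v$-vectors I would start from the complementary slackness identity $\langle Y^*,\ \lambda_n^* I - L(w^*)\rangle = 0$, where both $Y^*\succeq 0$ and the slack matrix $\lambda_n^* I - L(w^*)\succeq 0$. Invoking the elementary fact that two positive semidefinite matrices with vanishing inner product annihilate one another, and inserting the Gram factorization $Y^* = V^{*T}V^*$, one finds that $V^*$ is annihilated on the right by $\lambda_n^* I - L(w^*)$; transposing and using symmetry of $L(w^*)$ gives $\big(\lambda_n^* I - L(w^*)\big)V^{*T}=0$. Consequently, for any projection direction $p\in\mathbb{R}^n$ the vector $z\defeq V^{*T}p$, whose $i$-th entry is the scalar $\langle p, v_i^*\rangle$, satisfies $L(w^*)z = \lambda_n^* z$. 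Thus every one-dimensional projection of the $v$-embedding is an eigenvector of $L(w^*)$ for $\lambda_n^*$ (unless it is zero).

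The $u$-vectors are handled identically up to one extra step. Complementary slackness for the second block reads $\langle X^*,\ L(w^*)+\mu^* ee^T-\lambda_2^* I\rangle = 0$, and with $X^* = U^{*T}U^*$ the same annihilation argument produces $\big(L(w^*)+\mu^* ee^T-\lambda_2^* I\big)U^{*T}=0$. Writing $y\defeq U^{*T}p$ then gives $L(w^*)y + \mu^* e\,(e^Ty) = \lambda_2^* y$, and the remaining task is to discard the shift term. This is where the dual constraint $\langle X^*, ee^T\rangle = 0$ enters: it says $\|U^*e\|^2=0$, i.e.\ $U^*e=\sum_i u_i^*=0$ (the barycenter condition), so that $e^Ty = (U^*e)^Tp = 0$. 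Hence $L(w^*)y=\lambda_2^* y$ with $y\perp e$, and $y$ is a genuine eigenvector for the second smallest eigenvalue $\lambda_2^*$, not merely for the shifted operator.

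The only delicate point --- and the step I expect to be the main, if modest, obstacle --- is precisely this removal of the $\mu^* ee^T$ shift in the $\lambda_2$ case: without the barycenter identity $U^*e=0$ one would recover an eigenvector of $L(w^*)+\mu^*ee^T$ rather than of $L(w^*)$. Everything else is routine, and the existence of nonzero projections for generic $p$ is guaranteed by the normalizations $\langle X^*,I\rangle=\langle Y^*,I\rangle=1$, which prevent $U^*$ and $V^*$ from vanishing. Reading off the two conclusions establishes the proposition.
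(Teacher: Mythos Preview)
Your proof is correct and follows essentially the same route as the paper's proof of Proposition~\ref{lem:ProjectionOfEmbedding} (to which the footnote refers): complementary slackness for the semidefinite constraints, Gram factorization, and the barycenter constraint $\langle X^*,ee^T\rangle=0$ to kill the $\mu^* ee^T$ shift. The only cosmetic difference is that you invoke the annihilation lemma for two positive semidefinite matrices with zero inner product, whereas the paper expands the trace into a sum of nonnegative Rayleigh-type terms; both arguments are equivalent here.
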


The main consequence of Proposition \ref{lem:ProjGap} is that the multiplicities of $\lambda_2$ and $\lambda_n$ give upper bounds on the dimensionality of the $u$ and $v$ embeddings of \eqref{EmbeddingGap}, respectively. However, the dimensions of the combined embeddings do not necessarily match the individual ones, see SM \ref{app_gap_dim}.

\section{Conclusions}\label{sec:conclusion}
In this paper, we investigate the allocation of a budget $c$ on the interlayer links to optimize certain functions defined on the eigenvalues  of the Laplacian matrix for multiplex networks. The primal problem, in terms of Rayleigh quotients, and the dual problem, in terms of embeddings,  are useful to glean the optimal structure of the multiplex.
Maximizing the algebraic connectivity, $\lambda_2$, requires a uniform weight distribution up  to a threshold budget. While, for larger budgets, the optimal weights are generally non-uniform. 
The largest eigenvalue $\lambda_n$ remains constant if we invest the budget on the edges that correspond to the nodal lines of the layer with larger Laplacian spectral radius. Moreover, as we increase the budget, the optimal weights tend to become highly homogeneous, almost uniform.
Using these findings from the separate problems, maximizing $\lambda_2$ and minimizing $\lambda_n$, we analyze  the interlayer weights that minimize the spectral width $\lambda_n-\lambda_2$. There we see that for small budgets the problem behaves similarly to maximizing the algebraic connectivity and, for large budgets, minimizing the spectral radius.

\section*{Acknowledgment}
P. Poggi-Corradini is supported by NSF grant n. 1515810.
%\pagebreak
\bibliographystyle{chicago}
\bibliography{refs}

\clearpage
\newpage

\onecolumngrid
\appendix
\section*{Supplemental materials}\label{proofs} 

\subsection{Dual formulation of \eqref{eq:PrimalScaled} }\label{app_lagrangian}
%\color{red}{
%Since algebraic connectivity is an increasing function of weights and edges the budget limiting constraint becomes an equality constraint and we can scale the weights by $C\lambda$, i.e., $\frac{\bw}{C\lambda} = \hat{\bw}$. By this scaling \eqref{NewFrame0} transforms to:
%\begin{equation}\label{NewFrame}
%\begin{aligned}
%& \underset{\hat{w}_{i,j}}{\text{minimize}}
%& & \sum_{ij\in E_3}\hat{w}_{ij} \\
%& \text{subject to}
%& & C\sum_{i,j\in E_3}\hat{w}_{ij}L_{ij}+(\sum_{ij\in E_3}\hat{w}_{ij})L_0+\hat{\mu} e e^T - I\succeq 0 \\
%& & & \hat{w}_{ij}\geq 0 \ \ \forall \lbrace i,j\rbrace\in E_3
%\end{aligned}
%\end{equation}
The problem in \eqref{eq:PrimalScaled} is a standard convex semi-definite program (SDP) \citep[Chapter 4]{boyd2004convex}.
The Lagrangian will be
\begin{equation}
\begin{split}
L =& \sum_{i,j\in E_3}\hat{w}_{ij}+ 
\langle X,I- C\sum_{i,j\in E_3}\hat{w}_{ij}L_{ij} - (\sum_{ij\in E_3}\hat{w}_{ij})L_0-\mu e e^T\rangle-\langle Z,\hat{W}\rangle\\
=&\sum_{\lbrace i,j\rbrace\in E_3}\hat{w}_{ij}\left(1-\langle X, CL_{ij}+L_0\rangle-z_{ij}\right)+\langle X, I\rangle 
- \mu\langle X, ee^T\rangle
\end{split}
\end{equation}
%\textbf{where $\langle A,B \rangle$ is the Frobenius inner (scalar) product operation and $X$ and $Z$ are dual variable matrices.}
and the dual is
\begin{equation}\label{DualFrame0}
\begin{aligned}
& \underset{X}{\text{maximize}}
& & \langle X, I\rangle \\
& \text{subject to}
& & \langle X, CL_{ij}+L_0\rangle\leq 1\ \ \text{for}\ \lbrace i,j\rbrace\in E_3 \\
& & & \langle X, ee^T\rangle= 0\\
& & & X\succeq 0
\end{aligned}
\end{equation}

The two SDP problems in \eqref{eq:PrimalScaled} and \eqref{DualFrame0}, form a dual pair that satisfy the weak and strong duality, i.e., for any feasible $\hat{\bw}$ and $X$:
\begin{equation}
\begin{split}
\sum_{ij\in E_3}\hat{w}_{ij}-\langle X,I\rangle &= 
\sum_{ij\in E_3}\hat{w}_{ij}-
\langle X,I\rangle +\sum_{ij\in E_3}\hat{w}_{ij}\langle X, CL_{ij}+
L_0\rangle \\
&- \sum_{ij\in E_3}\hat{w}_{ij}\langle X, CL_{ij}+L_0\rangle +\hat{\mu}\langle X, ee^T\rangle \\
&= 
\sum_{ij\in E_3}\hat{w}_{ij}\left(I-\langle X, CL_{ij}+L_0\rangle\right) \\
&+
\langle C\sum_{i,j\in E_3}\hat{w}_{ij}L_{ij}+(\sum_{ij\in E_3}\hat{w}_{ij})L_0-I+\hat{\mu}ee^T\rangle \langle X\rangle \geq 0.
\end{split}
\end{equation}

\subsection{Proof of Proposition \ref{lem:FeasibleDualSolution}}\label{proof:FeasibleDualSolution}
\begin{proof}
	Let
	\begin{equation}\label{EqApp}
	\begin{aligned}
	X=[u_1,\dots,u_n]^T[u_1,\dots,u_n]\\
	\end{aligned}
	\end{equation}
	where
	\begin{equation}\label{EqApp}
	\begin{aligned}
	u_i=\begin{cases} 
	\alpha_i\boldsymbol{h}, &  i\in{V_1}  \\
	\beta_i\boldsymbol{h},  &  i\in{V_2}
	
	\end{cases}
	\end{aligned}
	\end{equation}
	with $ \boldsymbol{h} \in{\mathbb{R}^n}, \lVert\boldsymbol{h}\rVert=1 $. If the constants $ \alpha_i $  and $ \beta_i $ exist such that
	\begin{equation}\label{eqapp}
	\begin{aligned}
	\begin{cases}
	\langle X,I \rangle = \sum_{i\in{V_1}} \alpha_i^2 + \sum_{i\in{V_2}}\ \beta_i^2=1  \\
	\langle X,\boldsymbol{e}\boldsymbol{e}^T \rangle =\left(\sum_{i\in{V_1}} \alpha_i\ + \sum_{i\in{V_2}} \beta_i \right)^2=0
	\end{cases}
	\end{aligned}
	\end{equation}
	then we can choose $ \xi\leq\min\left\{-\left(\alpha_i-\beta_i\right)^2\left|ij\in E_3, i\in V_1, j\in V_2 \right. \right\} $ and $ X $ is feasible for the dual problem. Since $ X $ is a Gram matrix, it is positive semidefinite. It is not a difficult task to show that Equation \eqref{eqapp} is solvable. Indeed one solution is simply $ \alpha_i=-\beta_i=\dfrac{1}{\sqrt{n}} $.
\end{proof} 
\subsection{Proof of Proposition \ref{lem:StrongDuality}}\label{proof:StrongDuality}
\begin{proof}
	Since the primal problem \eqref{lambda_2Primal} is a convex optimization \citep{shakeri2015PRL}, to show strong duality it is sufficient to show the Slater's constraint qualification is satisfied. First, it is observed that $ \left(\hat{\lambda}<0,\hat{\mu}\geq0,\hat{\omega}>0\right) $ is a strictly feasible primal solution whenever $ \sum_{ij\in E_3}\omega_{ij}=c $. The feasible primal solution  $ \left(\hat{\lambda}<0,\hat{\mu}\geq0,\hat{\omega}>0\right) $ and the existence of dual solution by Proposition \ref{lem:FeasibleDualSolution} gives rise to strong duality by the Slater's qualification condition. Since none of feasible sets are empty, the optimal value is finite. Therefore the dual attains its optimal solution. Moreover, the primal constraint gives $ \omega_{ij}\leq c $. Thus the weights remain in a compact subsets.  
\end{proof}
\subsection{Proof of Proposition \ref{lem:ProjectionOfEmbedding}}\label{proof:ProjectionOfEmbedding}
\begin{proof}
	Let $ U=\left[u_1\dots u_n\right] $ be an optimal embedding of \eqref{Lambda2Embedding} and $ \omega_{ij} $ be the corresponding optimal weights leading to the algebraic connectivity $ \lambda_2 $ in \eqref{lambda_2Primal}. Then $ X=U^TU $ is an optimal solution of \eqref{lambda_2Dual}. Since by KKT complementarity condition for the first inequality in \eqref{lambda_2Primal} we have $\left<X,\sum_{ij\in E_3}w_{ij}L_{ij}+L_0+\mu \boldsymbol{e} \boldsymbol{e}^T - \lambda_2 I\right>=0$, and by the second equality in \eqref{lambda_2Dual} $ \left<X,\boldsymbol{e}  \boldsymbol{e}^T \right>=0 $, it follows
	\begin{equation}
	\begin{aligned}
	0=\left<X,\sum_{\lbrace i,j\rbrace\in E_3} \omega_{ij}L_{ij}+L_0-\lambda_2I\right> \\
	=\left<U^TU,\sum_{\lbrace i,j\rbrace\in E_3} \omega_{ij}L_{ij}+L_0-\lambda_2I\right> \\
	=\left<I,U\left(\sum_{\lbrace i,j\rbrace\in E_3} \omega_{ij}L_{ij}+L_0-\lambda_2I\right)U^T\right>
	\end{aligned}
	\end{equation}
	where we use the condition $\text{trace}\left(U^TUA\right)=\text{trace}\left(UAU^T\right)$. The above condition shows that $\text{trace}\left[U\left(\sum_{\lbrace i,j\rbrace\in E_3} \omega_{ij}L_{ij}+L_0-\lambda_2I\right)U^T\right]=0$, so that \\ $\sum_{i=1}^ny_i^T\left(\sum_{\lbrace i,j\rbrace\in E_3} \omega_{ij}L_{ij}+L_0-\lambda_2I\right)y_i=0$ with $y_i$ the i-th row of $U$. Since $\lambda_2$ is the smallest nonzero eigenvalue, we know that each element $y_i^T\left(\sum_{\lbrace i,j\rbrace\in E_3} \omega_{ij}L_{ij}+L_0-\lambda_2I\right)y_i\geq0$. Therefore, $y_i^T\left(\sum_{\lbrace i,j\rbrace\in E_3} \omega_{ij}L_{ij}+L_0-\lambda_2I\right)y_i=0$. This indicates that each row of $U$, or column of $ U^T $, and accordingly the vector $ \nu=U^Tp $ with $ p $ arbitrary vector, is in the eigenspace of $ L=\sum_{\lbrace i,j\rbrace\in E_3} \omega_{ij}L_{ij}+L_0 $ to $ \lambda_2 $.
\end{proof}

\subsection{Proof of Lemma \ref{lem:clump}}\label{Proof:clump}
For $c<c^*$, the second eigenvalue of the supra-Laplacian is simple, and the corresponding eigenvector is a one-dimensional subspace given by the Fiedler vector $\boldsymbol{v} = \frac{1}{\sqrt{n}}\left[\boldsymbol{e};-\boldsymbol{e}\right]$ \citep{shakeri2015PRL}. Then, since by Proposition \ref{lem:ProjectionOfEmbedding} the projection of optimal embedding onto any arbitrary one-dimensional subspace is parallel to $\boldsymbol{v}$, i.e. $\exists \hspace{1mm} a>0 \hspace{1mm} \text{s.t.} \hspace{1mm} \left[\hat{u_1},\dots,\hat{u_n}\right]^T\boldsymbol{p}=a\boldsymbol{v} \hspace{3mm}\forall\hspace{1mm}\boldsymbol{p}\in\mathbb{R}^n$, it can result that the optimal embedding takes the form \eqref{eq:UniformEmbed}.

\subsection{Separator-shadow theorem}\label{Proof:Shadow-Separator}

\begin{theorem}\label{thm_Separator-shadow}
	Let the weights $w_{ij}\geq0$ and points $\hat{u}_i\in\mathbb{R}^n,\ i\in V,$ be optimal solutions of \eqref{eq:PrimalScaled} and \eqref{eq:EmbeddingScaled}, respectively. Let $S$ be a separator of $G_w$ partitioning the graph as $V=S\cup C_1\cup C_2$ with no edge between $C_1$ and $C_2$. Then, for at least one $j\in\{1,2\}$
	\begin{align*}
	\text{conv}\{0,\hat{u}_i\}\cap\{\hat{u}_s:s\in S\}\neq\emptyset \ \ \ \ \forall i\in C_j.
	\end{align*}
	That means, there is at least one $j\in\{1,2\}$ such that, the straight line between the origin and each $\hat{u}_i,i\in C_j,$ intersects the convex hull of the points in $S$.
\end{theorem}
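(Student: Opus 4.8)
The plan is to reduce the theorem to the static equilibrium condition \eqref{eq_equilibrium} and then run a convex--geometric argument on the resulting force field, in the spirit of \citet{GoringShadowSeperator}. The only optimality information I expect to need is the equilibrium relation, so the first step is to state it precisely and read off its geometric content.

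By Proposition \ref{lem:ProjectionOfEmbedding} together with complementary slackness, an optimal pair $(\hat u_i,w_{ij})$ of \eqref{eq:EmbeddingScaled} and \eqref{eq:PrimalScaled} satisfies the equilibrium \eqref{eq_equilibrium}, which I would write as
\[
\sum_{k\sim i}T_{ik}\,\frac{\hat u_i-\hat u_k}{\|\hat u_i-\hat u_k\|}=\hat u_i,\qquad T_{ik}\ge 0,\quad i\in V .
\]
Since a tension $T_{ik}>0$ can only be carried by an edge with $w_{ik}>0$, the sum runs exactly over the $G_w$--neighbors of $i$, so a separator $S$ of $G_w$ genuinely cuts the support of the force field. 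Rewriting the balance gives the one fact I will exploit,
\[
\hat u_i\in\mathrm{cone}\{\hat u_i-\hat u_k:\ k\sim i\ \text{in}\ G_w\},
\]
i.e.\ the outward radial direction at each node lies in the convex cone generated by the directions pointing away from its neighbors. Summing the equilibrium over all nodes recovers $\sum_i\hat u_i=0$, so the barycenter constraint is automatically consistent and the contradiction will have to be local rather than global in origin.

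With the cone condition in hand I would argue the shadow alternative by contradiction, assuming that \emph{both} $C_1$ and $C_2$ contain a node whose segment to the origin misses $\mathrm{conv}\{\hat u_s:s\in S\}$. For such an out--of--shadow node the disjoint compact convex sets $[0,\hat u_i]$ and $\mathrm{conv}\{\hat u_s\}$ admit a separating hyperplane, yielding a direction $d_j$ along which the extreme node $i_j^\star=\arg\max_{i\in C_j}\langle d_j,\hat u_i\rangle$ protrudes past the separator hull. At $i_j^\star$ every $C_j$--neighbor term $\langle d_j,\hat u_{i_j^\star}-\hat u_k\rangle$ in the cone relation is nonnegative, so the radial balance can only be sustained by the edges running into $S$, which pins $\hat u_{i_j^\star}$ to the separator side; performing this simultaneously for the two components, which share no $G_w$--edge, forces the incompatibility. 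Equivalently, one may exhibit an improving perturbation: rotate one protruding component about the origin so that its interlayer edges to $S$ shorten, which leaves the objective $\sum_i\|\hat u_i\|^2$ and all intralayer lengths unchanged, creates slack in the constraints of \eqref{eq:EmbeddingScaled}, and then rescale radially to strictly increase the objective, contradicting optimality.

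The main obstacle is exactly this global step --- promoting the local cone condition to the ``at most one component escapes the shadow'' dichotomy. Two technical points will carry the weight. First, one must show that two escaping components are genuinely incompatible; this is where the connectedness of $G_w$ and the zero--barycenter constraint are used to close the extremal/separation argument, since either alone is insufficient. Second, in the rotation--rescaling route the barycenter $\sum_i\hat u_i=0$ and the coupled form of the interlayer constraints must be restored after moving a single component, so the improving variation has to be corrected to remain feasible. Both routes are adaptations of the separator--shadow technique of \citet{GoringShadowSeperator}; the only specialization here is that $G_w$ is split into the two parts $C_1$ and $C_2$, so the conclusion reads that at least one of them lies entirely in the shadow of $\mathrm{conv}\{\hat u_s:s\in S\}$.
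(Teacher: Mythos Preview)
Your route departs from the paper's. The paper never touches the force balance \eqref{eq_equilibrium}; instead it uses Proposition~\ref{lem:ProjectionOfEmbedding} to identify each one-dimensional projection $i\mapsto\langle d,\hat u_i\rangle$ of the optimal embedding with a Fiedler vector of $G_w$, and then invokes Fiedler's level-set connectivity theorem (the subgraph induced on $\{i:\langle d,\hat u_i\rangle\ge -\alpha\}$ is connected for every $\alpha\ge 0$). From this one obtains a clean hyperplane lemma: if an affine hyperplane separates all of $\{\hat u_s:s\in S\}$ from some $\hat u_{\hat\imath}$ with $\hat\imath\in C_1$, then every $\hat u_j$, $j\in C_2$, lies on the $S$-side. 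The shadow dichotomy then follows by taking the two separating hyperplanes for the assumed escaping nodes in $C_1$ and $C_2$ and sliding a convex combination of them from one to the other; the barycenter constraint $\sum_i\hat u_i=0$ forbids the moving hyperplane from ever separating the origin from all of $V$, so at some intermediate parameter it has nodes of both $C_1$ and $C_2$ on the origin side, contradicting the hyperplane lemma. The spectral connectivity fact does all the global work you are trying to reconstruct from local equilibrium.

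Your extremal-node argument, as written, does not close. A hyperplane separating $[0,\hat u_i]$ from $\mathrm{conv}\{\hat u_s:s\in S\}$ places the origin and the escaping node on the \emph{same} side; choosing $d_j$ to point toward that side, the projected balance at $i_j^\star=\arg\max_{i\in C_j}\langle d_j,\hat u_i\rangle$ reads
\[
\langle d_j,\hat u_{i_j^\star}\rangle=\sum_{k\sim i_j^\star}\tilde T_{i_j^\star k}\bigl(\langle d_j,\hat u_{i_j^\star}\rangle-\langle d_j,\hat u_k\rangle\bigr),
\]
with every summand nonnegative (the $C_j$-neighbors by extremality, the $S$-neighbors because $S$ lies on the far side). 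This is fully compatible with $\langle d_j,\hat u_{i_j^\star}\rangle\ge 0$ and produces no incompatibility between the two components: the same pattern can hold simultaneously for $C_1$ and $C_2$, since $S$ simply sits between them and absorbs both sets of tensions. What is missing is exactly the global connectivity statement the paper imports from Fiedler. The rotation variant runs into a separate difficulty specific to \eqref{eq:EmbeddingScaled}: every inequality constraint contains the \emph{total} intralayer length $\sum_{\{k,l\}\in E_1\cup E_2}\|\hat u_k-\hat u_l\|^2$, so rotating $C_1$ about the origin alters the intralayer edges joining $C_1$ to $S$ and hence perturbs all constraints at once, not only those incident to $C_1$; creating slack that can be converted into objective gain by a radial rescale is therefore not immediate.
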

First, the following Lemma \ref{lem:FiedlerPartition} due to Fiedler (see Theorem 3.3 in  \citep{Fiedler1975Partition}) plays a crucial role in characterizing the graph structure based on a Fiedler vector, i.e. an eigenvector to the second smallest eigenvalue. First, denoting a Fiedler vector by $y=y(i)=[y_i], \  \forall i \in V$, the coordinates of $y$ can be assigned to the vertices of $G$. This is called a \textit{characteristic valuation}.
\begin{lemma}\label{lem:FiedlerPartition}
	%Theorem 3.3 of (Fiedler, 1975), or equivalently Corollary 12 of Markus's thesis.
	Let $G=\left( V,E\right)$ be a finite connected graph with positive weights $w_{ij}$. Let $y_i$ be a characteristic valuation of $G$, and for any $\alpha\geq0$, define $V_\alpha = \{i\in V|y_i+\alpha\geq0\}$. Then the subgraph $G(\alpha)$ induced by $G$ on $V_\alpha$ is connected.
\end{lemma}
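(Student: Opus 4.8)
The plan is to argue by contradiction via a variational test-vector argument, after reducing immediately to the case $\alpha>0$. Recall that a characteristic valuation is an eigenvector $y$ of the weighted Laplacian $L$ for the algebraic connectivity, so $Ly=\lambda_2 y$ with $y\perp\mathbf{1}$, and that $\lambda_2=\min\{\langle v,Lv\rangle/\|v\|^2 : v\perp\mathbf{1},\,v\neq0\}$. First I would dispose of $\alpha=0$ by a limiting argument: since $y\perp\mathbf{1}$ is nonzero it has a strictly negative entry, so $\alpha_0:=\min\{|y_i| : y_i<0\}$ is positive, and for every $\alpha\in(0,\alpha_0)$ no coordinate of $y$ lies in $[-\alpha,0)$, whence $V_\alpha=V_0$. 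Thus connectivity of $G(\alpha)$ for small positive $\alpha$ forces connectivity of $G(0)$, and it suffices to treat $\alpha>0$.

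So fix $\alpha>0$ and suppose $G(\alpha)$ is disconnected. Because any edge joining two vertices of $V_\alpha$ keeps them in one component, I can split $V_\alpha=C_1\cup C_2$ into two nonempty parts with no edge between them, and set $N_3:=V\setminus V_\alpha=\{i:y_i+\alpha<0\}$. For $k=1,2$ define the truncated test vector $\phi^{(k)}$ by $\phi^{(k)}_i=y_i+\alpha$ on $C_k$ and $0$ elsewhere. Using $Ly=\lambda_2 y$ together with the fact that every neighbour of a vertex of $C_k$ lies in $C_k$ or in $N_3$ (never in the other part), a direct manipulation of the eigenvalue equation gives, for $i\in C_k$,
\[
(L\phi^{(k)})_i=\lambda_2 y_i+\sum_{\substack{j\sim i\\ j\in N_3}}w_{ij}\,(y_j+\alpha).
\]
Since $(y_i+\alpha)\ge 0$ on $C_k$, $w_{ij}>0$, and $(y_j+\alpha)<0$ on $N_3$, pairing with $\phi^{(k)}$ yields $\langle\phi^{(k)},L\phi^{(k)}\rangle\le\lambda_2\langle\phi^{(k)},y\rangle$, and with $s_k:=\langle\phi^{(k)},\mathbf{1}\rangle=\sum_{i\in C_k}(y_i+\alpha)$ one checks $\langle\phi^{(k)},y\rangle=\|\phi^{(k)}\|^2-\alpha s_k$.

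The decisive step is to combine the two pieces into a single competitor orthogonal to $\mathbf{1}$. I would take $\phi:=s_2\,\phi^{(1)}-s_1\,\phi^{(2)}$, so that $\langle\phi,\mathbf{1}\rangle=0$; because $C_1$ and $C_2$ share no edge the supports interact trivially, and both $\langle\phi,L\phi\rangle$ and $\|\phi\|^2$ split as $s_2^2(\cdot)_1+s_1^2(\cdot)_2$. Feeding the two inequalities above into the variational bound $\langle\phi,L\phi\rangle\ge\lambda_2\|\phi\|^2$ and cancelling the common terms collapses everything to
\[
0\le-\alpha\,s_1 s_2\,(s_1+s_2),
\]
which is impossible as soon as $\alpha>0$ and $s_1,s_2>0$. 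Note that a one-component Rayleigh estimate alone only gives $\langle\phi^{(k)},L\phi^{(k)}\rangle\le\lambda_2\|\phi^{(k)}\|^2$, which is consistent with $y$ itself; the orthogonalized two-component combination is precisely what manufactures the strict gap.

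The main obstacle is therefore to rule out the degenerate possibility $s_k=0$ (equivalently $\phi^{(k)}\equiv0$), where the strictness must come from the equation rather than from the quadratic form. I would show that $s_k=0$ forces $y_i=-\alpha$ throughout $C_k$, and then evaluate $Ly=\lambda_2 y$ at a vertex $i_0\in C_k$ having a neighbour in $N_3$ — such a vertex exists because $G$ is connected, $C_k\ne V$, and there are no $C_1$--$C_2$ edges. The left-hand side is then a sum of nonnegative terms $w_{i_0 j}(y_{i_0}-y_j)$ over $j\in N_3$, hence $\ge 0$, whereas the right-hand side equals $-\lambda_2\alpha<0$, a contradiction. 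This secures $s_1,s_2>0$ and completes the proof; it is exactly here that connectedness of $G$ and positivity of the weights are used, matching the hypotheses of the lemma.
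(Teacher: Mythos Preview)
The paper does not actually prove this lemma: it is stated with attribution to Fiedler's 1975 paper (Theorem~3.3 there) and used as a black box in the proof of Lemma~\ref{lem:Hyperplane}. So there is no in-paper argument to compare your proposal against.

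That said, your variational test-vector proof is correct and self-contained. The reduction to $\alpha>0$ via $V_\alpha=V_0$ for $0<\alpha<\min\{|y_i|:y_i<0\}$ is clean; the computation of $(L\phi^{(k)})_i$ on $C_k$ is right (the cross term $\langle\phi^{(1)},L\phi^{(2)}\rangle$ indeed vanishes because every edge of $G$ with both endpoints in $V_\alpha$ stays inside a single $C_k$); and the orthogonalized combination $\phi=s_2\phi^{(1)}-s_1\phi^{(2)}$ produces the desired strict inequality. One cosmetic slip: the final displayed inequality should read $0\le -\lambda_2\,\alpha\,s_1 s_2(s_1+s_2)$, with the factor $\lambda_2>0$ carried through, but this does not affect the conclusion. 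Your handling of the degenerate case $s_k=0$ is also sound: the connectedness of $G$ together with the absence of $C_1$--$C_2$ edges forces a $C_k$--$N_3$ edge, and evaluating the eigenvalue equation there gives the needed sign contradiction.
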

\begin{figure}
	\centering
	\includegraphics[clip,width=.4\columnwidth]{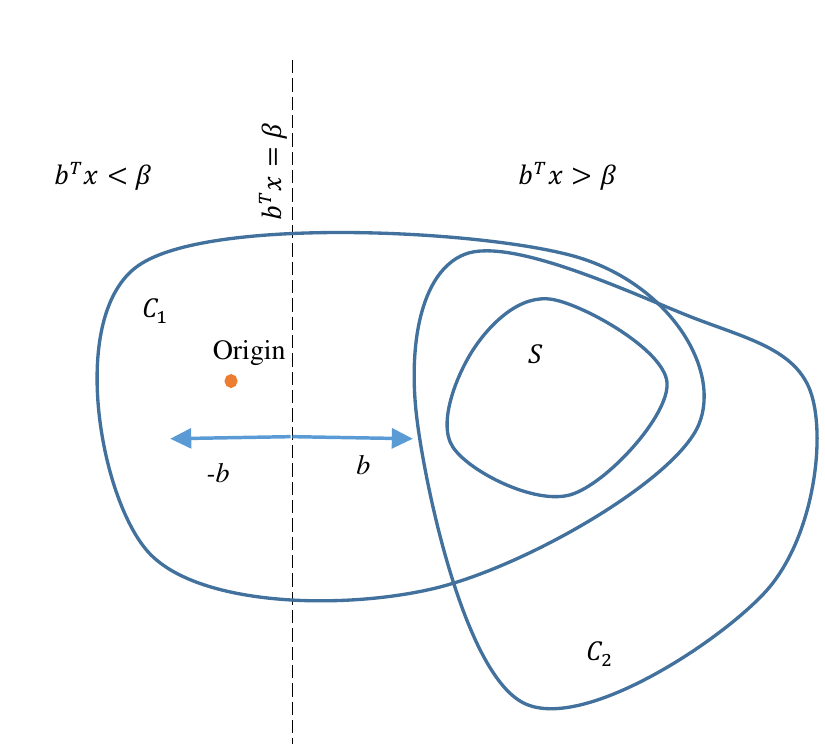}%
	\caption{Different graph partitions in Lemma \ref{lem:Hyperplane}}
	\label{fig:hyperplane}
\end{figure}
Now, having already shown Proposition \ref{lem:ProjectionOfEmbedding}, we can conclude the following Lemma \ref{lem:Hyperplane} for multiplex networks.  
\begin{lemma}\label{lem:Hyperplane}
	%	Lemma 29 in Markus's thesis.
	Let the weights $w_{ij}\geq0$ and points $\hat{u}_i\in\mathbb{R}^n,\ i\in V,$ be optimal solutions of \eqref{eq:PrimalScaled} and \eqref{eq:EmbeddingScaled}, respectively. Let $S$ be a separator of $G_w$ partitioning the graph as $V=S\cup C_1\cup C_2$ with no edge between $C_1$ and $C_2$. Suppose there is a normalized $b\in\mathbb{R}^n$ and $\beta>0$ defining a hyperplane $b^Tx=\beta$ within the subspace $\text{span}\{\hat{u}_i,i\in V\}$ that separates all points of $S$ from at least one point $\hat{i}\in C_1$, namely
	\begin{align*}
	b^T\hat{u}_{\hat{i}}\leq\beta<b^T\hat{u}_i, \ \forall i\in S.
	\end{align*}
	Then
	\begin{align*}
	b^T\hat{u}_i>\beta, \ \forall i \in C_2.
	\end{align*}
\end{lemma}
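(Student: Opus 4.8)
The plan is to reduce the statement to Fiedler's connectivity lemma (Lemma~\ref{lem:FiedlerPartition}) by projecting the optimal embedding onto the normal direction $b$ of the separating hyperplane. First I would set $y_i := b^T\hat{u}_i$ for every $i\in V$. Since $b$ lies in $\text{span}\{\hat{u}_i : i\in V\}$ and the points $\hat{u}_i$ form an optimal embedding, Proposition~\ref{lem:ProjectionOfEmbedding} guarantees that the vector $y=(y_i)_{i\in V}$ is an eigenvector of $L(w)=\sum_{ij\in E_3}w_{ij}L_{ij}+L_0$ for the algebraic connectivity $\lambda_2$; in other words, $y$ is a characteristic valuation of $G_w$. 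Note that $y\neq 0$, because the separation hypothesis forces $y_s=b^T\hat{u}_s>\beta>0$ for every $s\in S$. Because the optimal $\lambda_2$ is positive, $G_w$ is connected, so Lemma~\ref{lem:FiedlerPartition} is applicable.

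Next I would apply Fiedler's lemma to the characteristic valuation $-y$ (also an eigenvector for $\lambda_2$) with the shift $\alpha=\beta$, which is legitimate since $\beta>0$. This yields that the vertex set
\[
V_\beta:=\{i\in V : -y_i+\beta\geq 0\}=\{i\in V : b^T\hat{u}_i\leq\beta\}
\]
induces a connected subgraph of $G_w$. The separation hypothesis translates into two facts about $V_\beta$: the distinguished point satisfies $\hat{i}\in C_1\cap V_\beta$, while every separator vertex satisfies $b^T\hat{u}_s>\beta$, so that $S\cap V_\beta=\emptyset$.

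Finally I would argue by contradiction. Suppose some $j\in C_2$ had $b^T\hat{u}_j\leq\beta$, i.e. $j\in V_\beta$. Connectivity of the subgraph induced on $V_\beta$ would then provide a path in $G_w$ from $\hat{i}\in C_1$ to $j\in C_2$ all of whose vertices lie in $V_\beta$, hence avoiding $S$ entirely. This contradicts the defining property of the separator $S$, since with no edges between $C_1$ and $C_2$ any path from $C_1$ to $C_2$ in $G_w$ must pass through $S$. Therefore $b^T\hat{u}_i>\beta$ for all $i\in C_2$, as claimed. The only genuinely delicate points are bookkeeping ones: matching the sign convention of Fiedler's lemma (which forces the shift onto $-y$ with the nonnegative $\alpha=\beta$) and confirming that the projection is a \emph{nonzero} characteristic valuation of the correct support graph $G_w$, so that Lemma~\ref{lem:FiedlerPartition} genuinely applies.
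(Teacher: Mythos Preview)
Your proposal is correct and follows essentially the same line as the paper's own proof: project the optimal embedding onto $b$, invoke Proposition~\ref{lem:ProjectionOfEmbedding} to see that the resulting vector (with the sign flip) is a Fiedler vector of $G_w$, apply Lemma~\ref{lem:FiedlerPartition} with $\alpha=\beta$ to conclude $V_\beta=\{i:b^T\hat u_i\le\beta\}$ induces a connected subgraph, and then use the separator property to rule out any vertex of $C_2$ from $V_\beta$. Your write-up is actually a bit more careful than the paper's in spelling out why $y\neq 0$, why $G_w$ is connected, and why the sign convention forces the application of Fiedler's lemma to $-y$.
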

\begin{proof}
	By Proposition \ref{lem:ProjectionOfEmbedding} we know that the vector $-\left[b^T\hat{u}_1,\dots,b^T\hat{u}_n\right]^T$ is an eigenvector to $\lambda_2\left(L\left(G_w\right)\right)$. Then, setting $\alpha=\beta$ in Lemma \ref{lem:FiedlerPartition}, it follows that the subgraph $G\left(\beta\right)=\left(V_\beta,E_w\right)$ with $V_\beta=\{i\in V:b^T\hat{u}_i\leq\beta\}$ is connected. Therefore, since there is no edge between $C_1$ and $C_2$, the set $V_\beta$ can not simultaneously contain nodes from both $C_1$ and $C_2$, and since it already contains a node form $C_1$, i.e. $\hat{i}\in V_\beta$, it follows there is no point of $C_2$ in $V_\beta$.
\end{proof}
Lemma \ref{lem:Hyperplane} means that, of $C_1$ and $C_2$, only at most one part can be separated by the hyperplane $b^Tx=\beta$ from $S$ (see Figure \ref{fig:hyperplane}). This is the key to reach separator-shadow theorem.
\begin{figure}
	\centering
	\includegraphics[clip,width=.4\columnwidth]{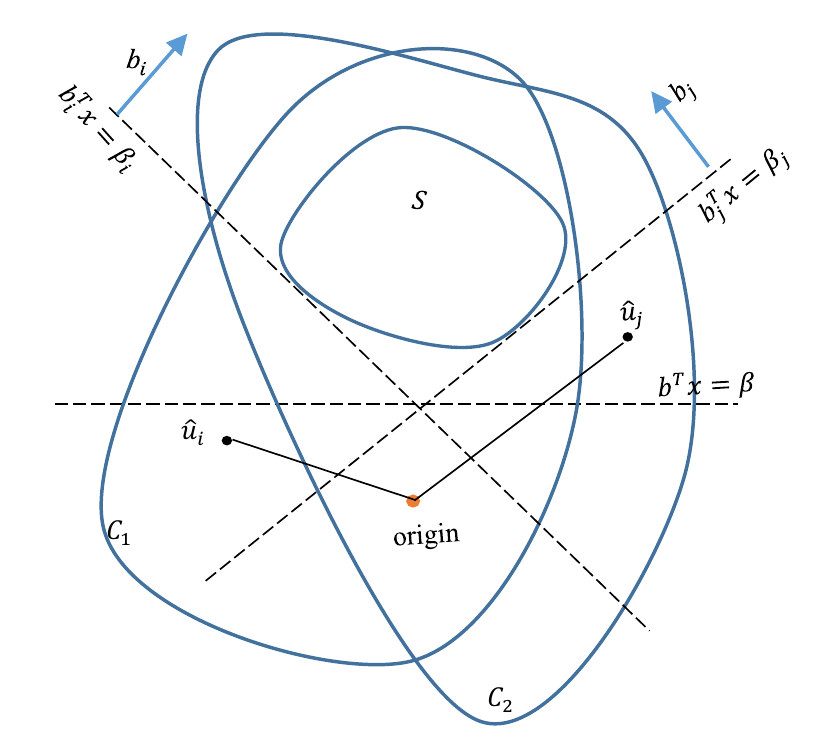}%
	\caption{Graph embedding contradicting separator-shadow criterion}
	\label{fig:separator-shadow}
\end{figure}
\\

\begin{proof}[Proof of Theorem \ref{thm_Separator-shadow}]. It is supposed that the origin is not contained in the convex hull of the points in $S$, and none of $S$, $C_1$, and $C_2$ are empty; otherwise the theorem holds trivially. \\ 

The proof is by a contradiction argument. If the theorem does not hold, then there are points $\hat{u}_i$ and $\hat{u}_j$ with $i\in C_1$ and $j\in C_2$ such that $\text{conv}\left[0,\hat{u}_i\right]\cap \mathcal{S}=\text{conv}\left[0,\hat{u}_j\right]\cap \mathcal{S}=\emptyset$, with $\mathcal{S}=\text{conv}\{\hat{u}_s:s\in S\}$ (Figure \ref{fig:separator-shadow}). Then, convexity shows for any $k\in\{i,j\}$ there is a hyperplane in the space spanned by the solution of \eqref{eq:EmbeddingScaled} that separates the line segment $\text{conv}\left[0,\hat{u}_k\right]$ from $\mathcal{S}$, i.e. there are $b_k\in\text{span}\{\hat{u}_r:r\in V\}$ and $\beta_k>0$, such that $b_k^Tx\geq\beta_k \ \forall x\in\mathcal{S}$ and $b_k^Tx<\beta_k \ \forall x\in\text{conv}\left[0,\hat{u}_k\right]$. \\ 

Next it is seen that there can be found a convex combination of the two hyperplanes by an $\alpha\in\left[0,1\right]$  such that, for $b\left(\alpha\right)=\left(1-\alpha\right)b_i+\alpha b_j$ and $\beta\left(\alpha\right)=\left(1-\alpha\right)\beta_i+\alpha\beta_j$, the half-space $\mathcal{H}=\{x:b\left(\alpha\right)^Tx<\beta\left(\alpha\right)\}$ contains points of both $C_1$ and $C_2$. This is because $\mathcal{H}$, that includes origin for all $\alpha\in\left[0,1\right]$, contains $\hat{u}_i\in C_1$ for $\alpha=0$ and $\hat{u}_j\in C_2$ for $\alpha=1$. Now, if for contradiction, when continuously varying $\alpha$ from $0$ to $1$, there is no $\alpha$ for which $\mathcal{H}$ contains points from both $C_1$ and $C_2$, there should exist an $\alpha=\bar{\alpha}$ where $\mathcal{H}$ leaves $C_1$ before it meets $C_2$. In such condition, the hyperplane $b\left(\bar{\alpha}\right)^Tx=\beta\left(\bar{\alpha}\right)$ would separate the origin from $\text{conv}\{\hat{u}_r:r\in V\}$, thus violating the equality constraint in \eqref{eq:EmbeddingScaled}. \\ 

The fact that $\mathcal{H}$ contains points from both $C_1$ and $C_2$ contradicts Lemma \ref{lem:Hyperplane}, and thus the points $\hat{u}_i$ and $\hat{u}_j$ assumed initially can not exist.
\end{proof}

The situation can be checked for the separator involving the node set $\{16,11,12,20,8,38\}$ in the multiplex network of Figure \ref{fig:SepShadEmb} .
\begin{figure}[!htb]
	\subfloat[Initial multiplex graph]
	{
		\includegraphics[clip,width=.5\columnwidth]{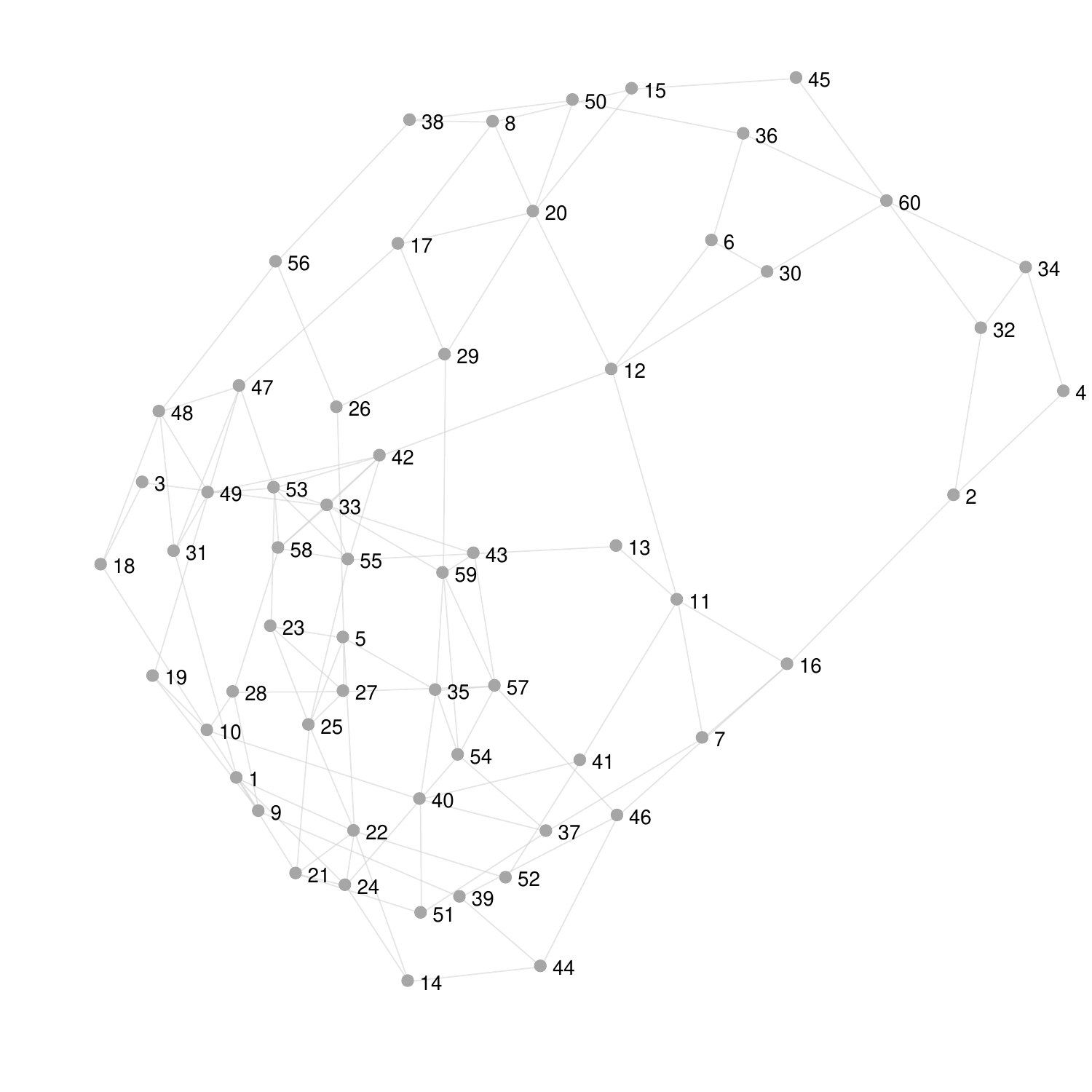}%
	}\\
	\subfloat[Embedded graph]
	{
		\includegraphics[clip,width=.5\columnwidth]{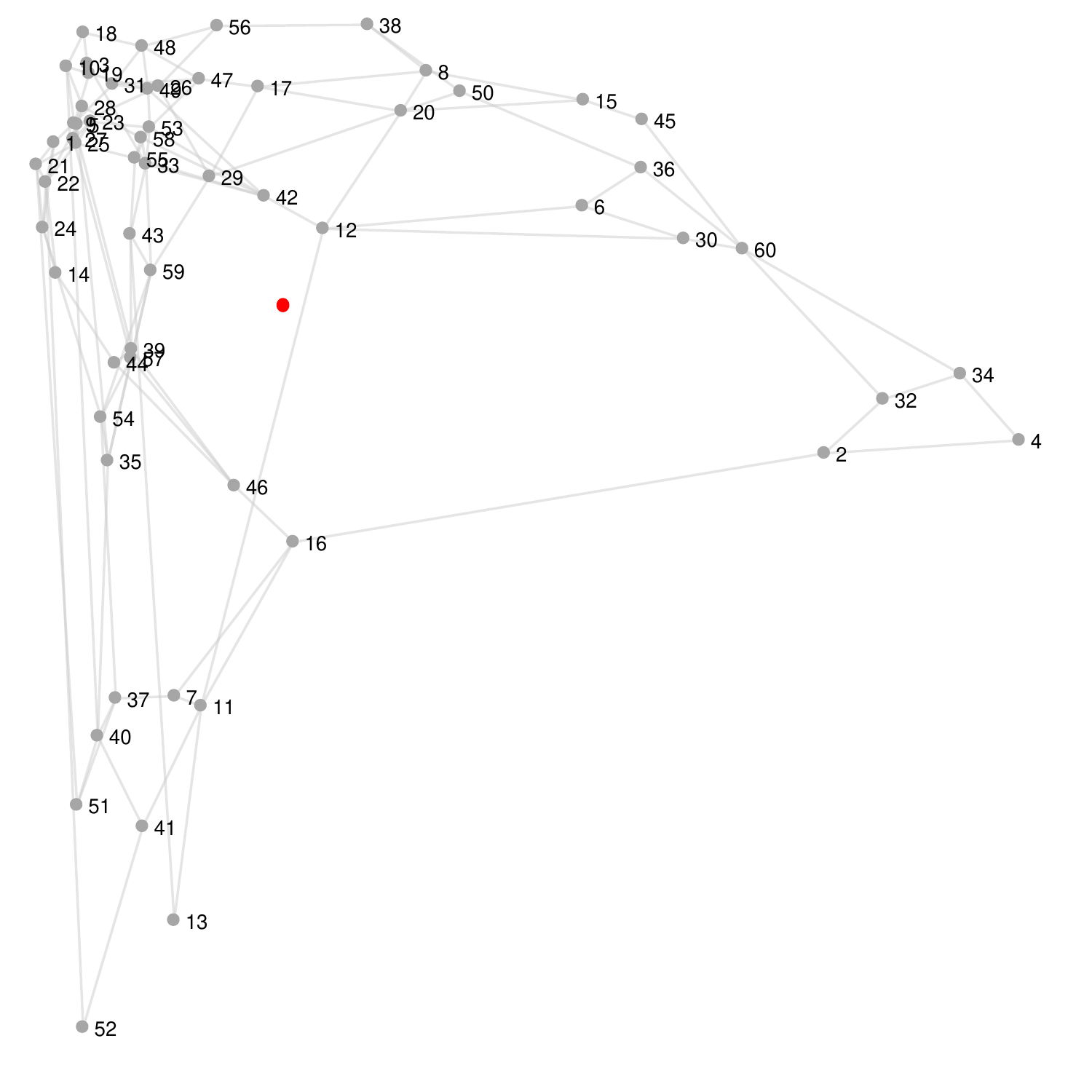}%
	}
	\caption{2-D embedding of two 30-node random geometric graphs: numbers 1-30 are the node set for $G_1$ and numbers 31-60 indicate the nodes of graph $G_2$. The red circle shows origin. The separated node set $\{2,4,6,15,30,32,34,36,40,45,50,60\}$ in the initial graph is embedded at the shadow of the separator constituted by the node set $\{16,11,12,20,8,38\}$.}
	\label{fig:SepShadEmb}
\end{figure}

\subsection{More results on maximizing the spectral radious of the Laplacian matrix}\label{app_radius}
In Figures \ref{fig:lam_n_all} and \ref{fig:lam_2_all}, we plot, respectively, the largest eigenvalue and the second eigenvalue associated with different optimization problems. In Figure \ref{fig:lam_n_all}, while maximizing the algebraic connectivity $\lambda_2$ generates the largest eigenvalue particularly after the threshold $c^*$, the problems of minimizing $\lambda_n$ and minimizing $\lambda_n-\lambda_2$ lead to similar results for the largest eigenvalue with some difference in small $c$. On the other hand, Figure \ref{fig:lam_2_all} illustrates that the algebraic connectivity resulting from minimizing the gap is approximately equal to that of maximizing $\lambda_2$ for small budgets and that of minimizing $\lambda_n$ for large budgets. \\

For moderate budgets, the algebraic connectivity of minimizing the gap is smaller than the value obtained by uniform weight distribution. Moreover, the problem of minimizing $\lambda_n$ generates the minimum algebraic connectivity at all. A further result is that, from perspective of the algebraic connectivity in Figure \ref{fig:lam_2_all}, the problem of minimizing the gap works based on (is approximately equivalent to) maximizing the algebraic connectivity for small budgets smaller than the threshold $c^*$, while it operates through (is approximately equivalent to) minimizing the largest eigenvalue $\lambda_n$ for large $c$, particularly after the threshold $c^\star$. For the largest eigenvalue in Figure \ref{fig:lam_n_all}, on the other hand, minimizing the gap operates similar to minimizing the largest eigenvalues for all budgets. The full spectra of supra-Laplacians associated with different primal problems are seen in Figure \ref{fig:spectrum}.   \\

%\begin{figure}
%	\subfloat[]{\includegraphics[clip,width=.7\columnwidth]{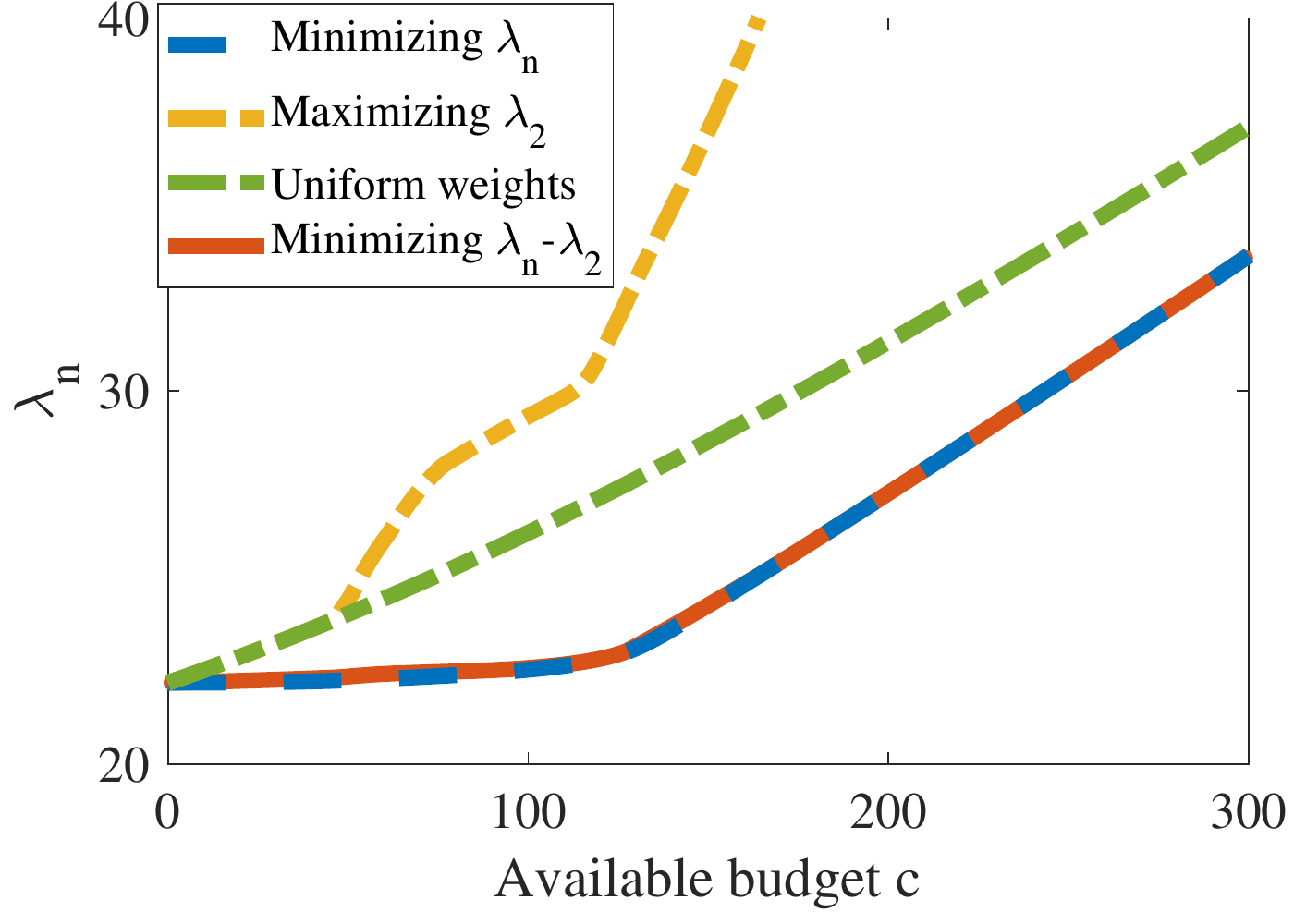}          	
%	} 
%	
%	\subfloat[]{\includegraphics[clip,width=.7\columnwidth]{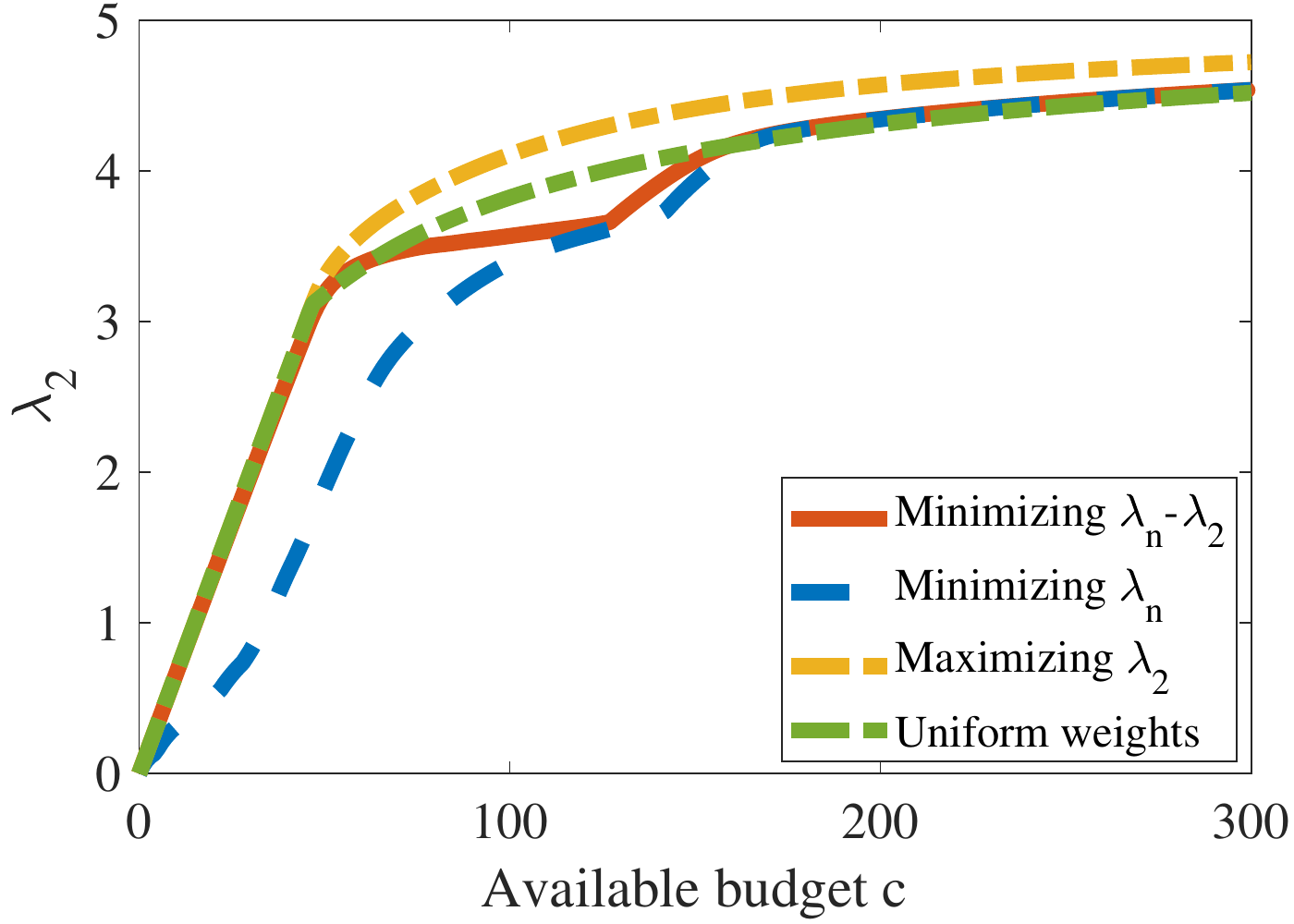}          	
%	}	
%	\caption{Algebraic connectivity for network with interconnected Erd\H{o}s-R\'enyi layers with (a) small, and (b) large difference between the algebraic connectivities of individual networks.}
%	\label{fig:lambda2ER}
%\end{figure}
%\begin{figure}
%	\includegraphics[clip,width=.5\columnwidth]{Gap_lam_n_all.pdf}
%	\caption{The largest eigenvalue $\lambda_n$ associated with different optimization problems for a multiplex of two Geo networks with 30 nodes}
%	\label{fig:lam_n_all}
%\end{figure}
%\begin{figure}
%	\includegraphics[clip,width=.5\columnwidth]{Gap_lam_2_all.pdf}
%	\caption{The algebraic connectivity $\lambda_2$ associated with different optimization problems for a multiplex of two Geo networks with 30 nodes}
%	\label{fig:lam_2_all}
%\end{figure}

\begin{figure*}


	\subfloat[\label{fig:lam_n_all}]{\includegraphics[clip,width=.4\columnwidth]{Gap_lam_n_all.pdf}} \ \ \ \ \ \
	\subfloat[\label{fig:lam_2_all}]{\includegraphics[clip,width=.4\columnwidth]{Gap_lam_2_all.pdf}}
	\caption{(a) The largest eigenvalue $\lambda_n$ associated with different optimization problems for a multiplex of two random geometric networks with 30 nodes. (e) The algebraic connectivity $\lambda_2$ associated with different optimization problems for a multiplex of two random geometric networks with 30 nodes.}
\end{figure*}

\begin{figure*}
	\centering
	\subfloat[\label{fig:gap_spect}]{\includegraphics[clip,width=.4\columnwidth]{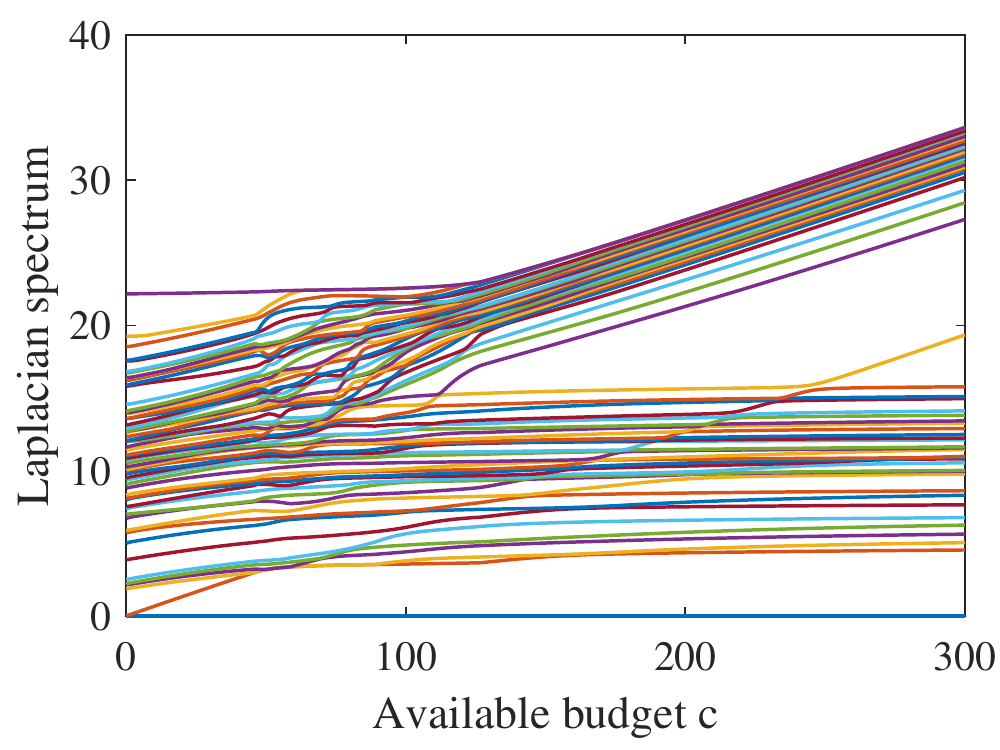}} \ \ \ \ \ \
	\subfloat[\label{fig:lam_n_spect}]{\includegraphics[clip,width=.4\columnwidth]{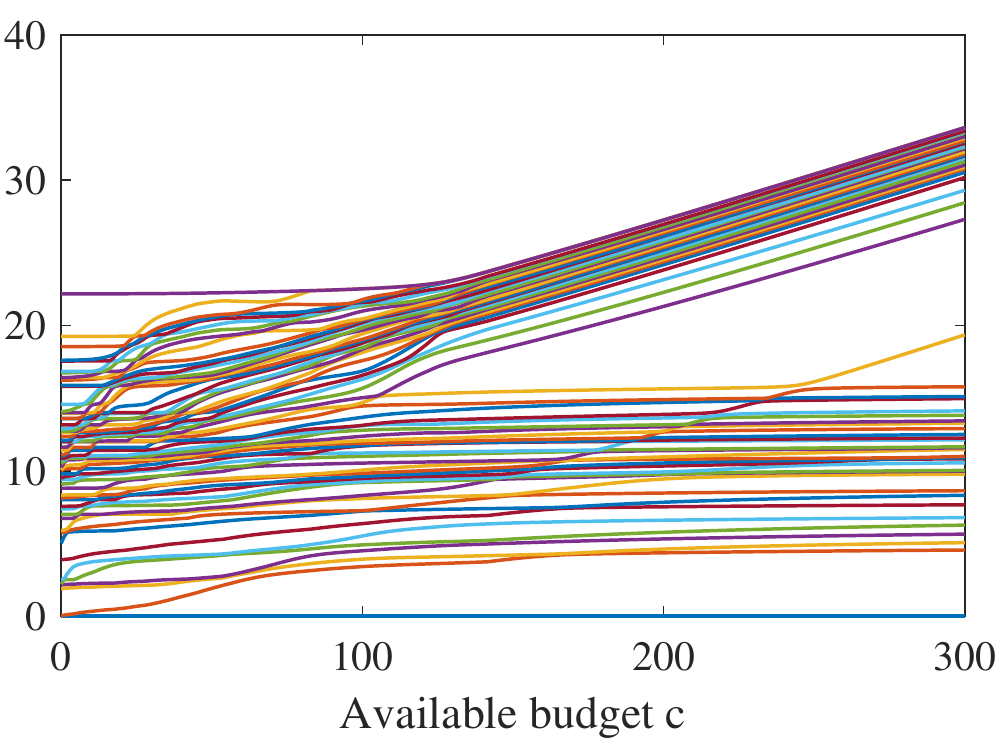}} \\
	\subfloat[\label{fig:lam_2_spect}]{\includegraphics[clip,width=.4\columnwidth]{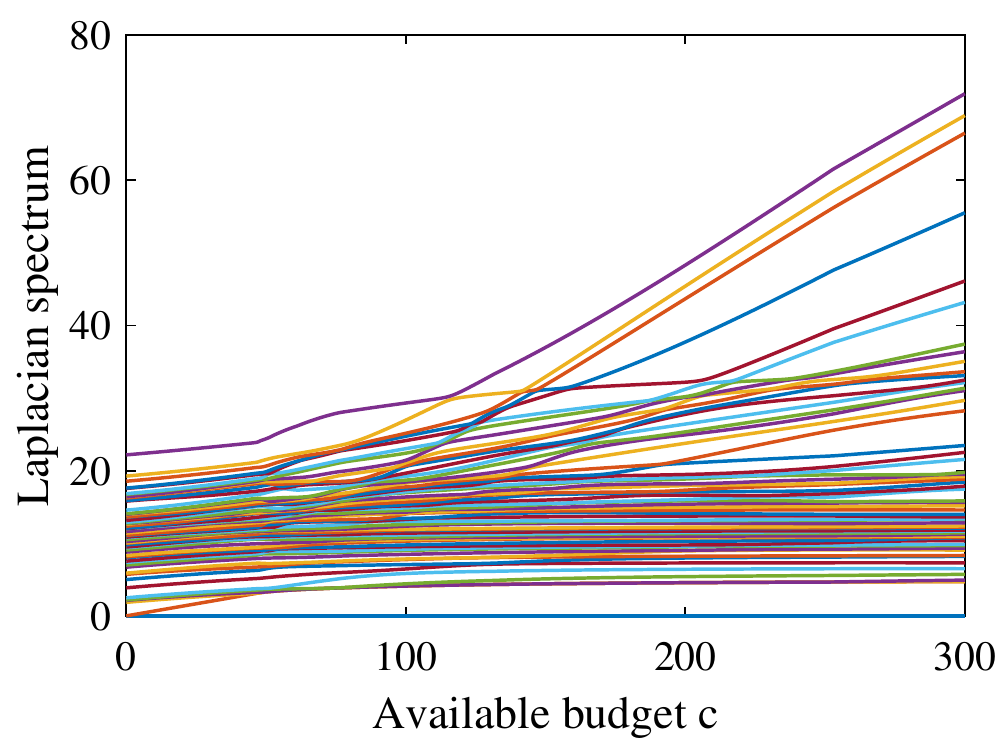}}
	\caption{The spectrum of supra-Laplacian in a multiplex of two random geometric networks with 30 nodes associated with (a) Minimizing the gap. (b) Minimizing the largest eigenvalue. (c) Maximizing the algebraic connectivity.}
	\label{fig:spectrum}
\end{figure*}
%
%
%(RR) or k-reg? \\
%lattice  \\

\subsection{Dimensionality of the combined embedding associated to minimizing the spectral width}\label{app_gap_dim}
%
%	Proposition \ref{lem:clump} holds for the $u$-embeddings of \eqref{EmbeddingGap}, where $c^*$ is replaced with 
%	%	$c_g^*$ 
%	the budget threshold at which the second eigenvalue $\lambda_2$ acquires multiplicity two by minimizing the spectral width. In the same manner, Propositions \ref{Emb_n_Small} and \ref{Emb_n_Large} hold for the $v$-embeddings of \eqref{EmbeddingGap}, where $c^\star$ is replaced with 
%	%	$c_g^\star$ 
%	the budget threshold at which the largest eigenvalue $\lambda_n$ gets multiplicity two by minimizing the spectral width.

Figure \ref{fig:GapEmbdLam} illustrates the dimensions of $u$- and $v$-embeddings in different regimes of the budget for a multiplex of two Watts-Strogatz networks, each with 30 nodes. 
\begin{figure*}[!htb]
	\centering
	\includegraphics[clip,width=.4\columnwidth]{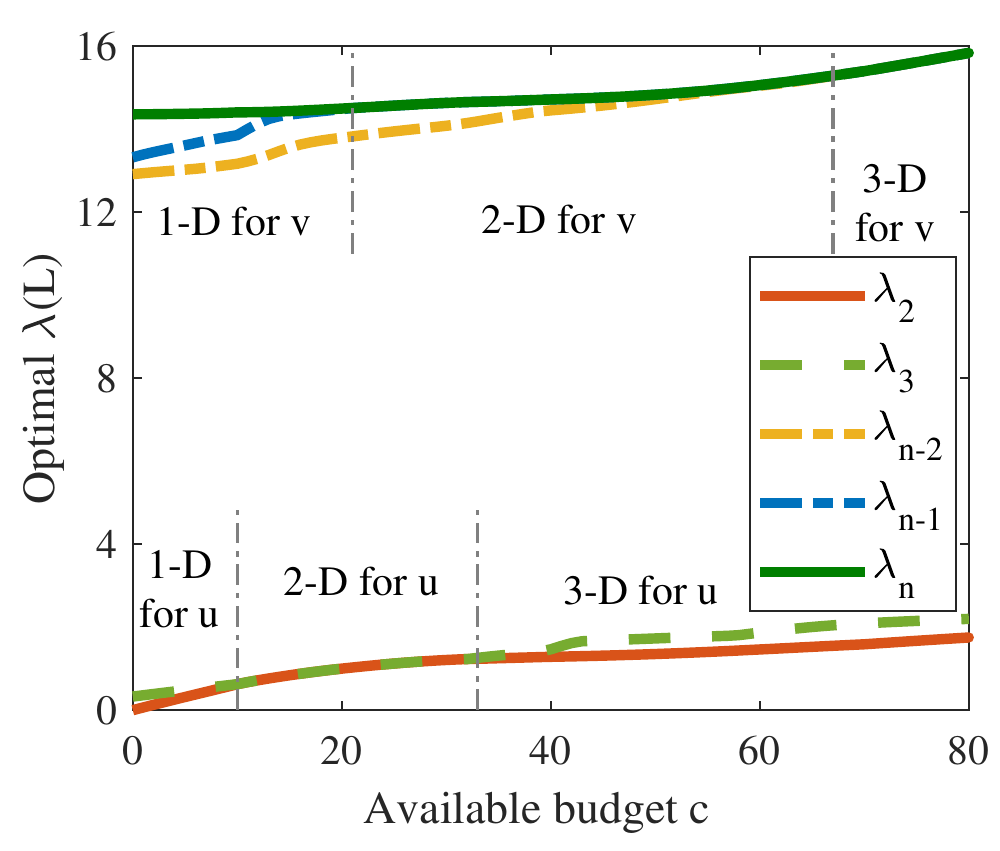}  
	\caption{Smallest and largest eigenvalues of $L$ for two 30-node Watts-Strogatz networks, obtained by minimizing the spectral width}
	\label{fig:GapEmbdLam}
\end{figure*}

Moreover, we show the  corresponding embeddings for  $\lambda_2$ and  $\lambda_n$ in minimum width problem in Figure \ref{fig:GapEmbc5}(a-b) for a small budget $c=5$ together with their counterparts in maximum $\lambda_2$ and minimum $\lambda_n$ problems  in Figure \ref{fig:GapEmbc5}(c-d). Whilst $u$- and $v$-embeddings are one dimensional they are perturbation of the corresponding individual problems, with more perturbation in $u$-embedding, and more similarity in $v$-embedding. This can be explained by the less rigid conditions in \eqref{opt_cond} compared to the optimal condition for maximizing algebraic connectivity. 

\begin{figure*}[!htb]
	\subfloat[\label{fig:GapEmbUc5}]{\includegraphics[clip,width=.3\columnwidth]{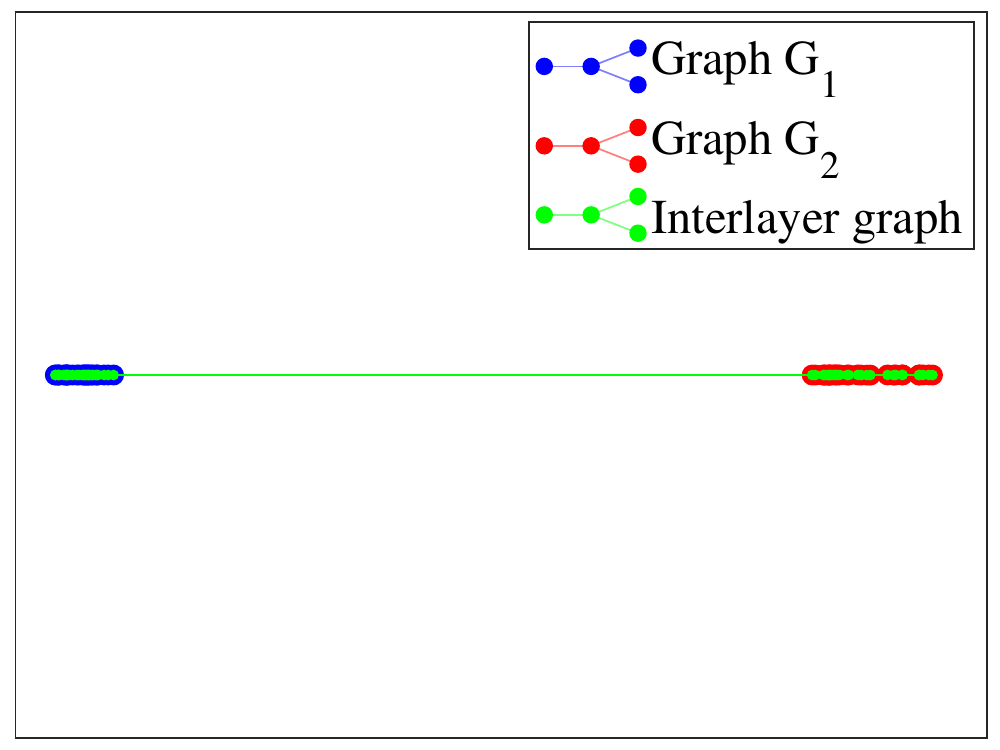}} \ \ \ \ \ \ \
	\subfloat[\label{fig:GapEmbVc5}]{\includegraphics[clip,width=.3\columnwidth]{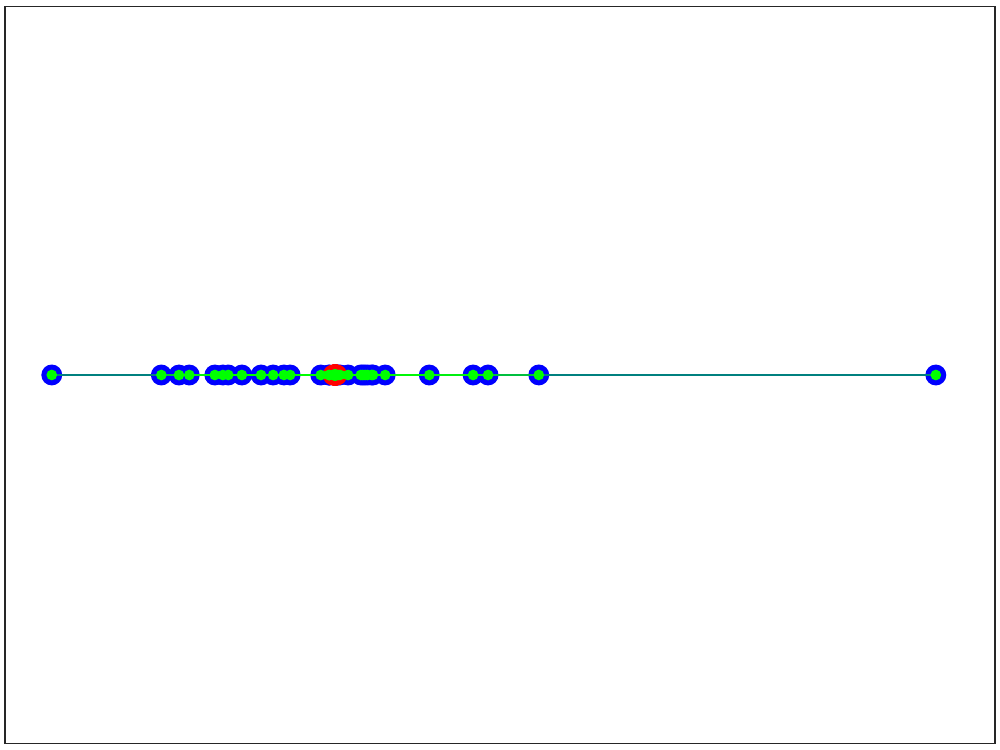}}
	
	\subfloat[\label{fig:GapEmbLam2c5}]{\includegraphics[clip,width=.3\columnwidth]{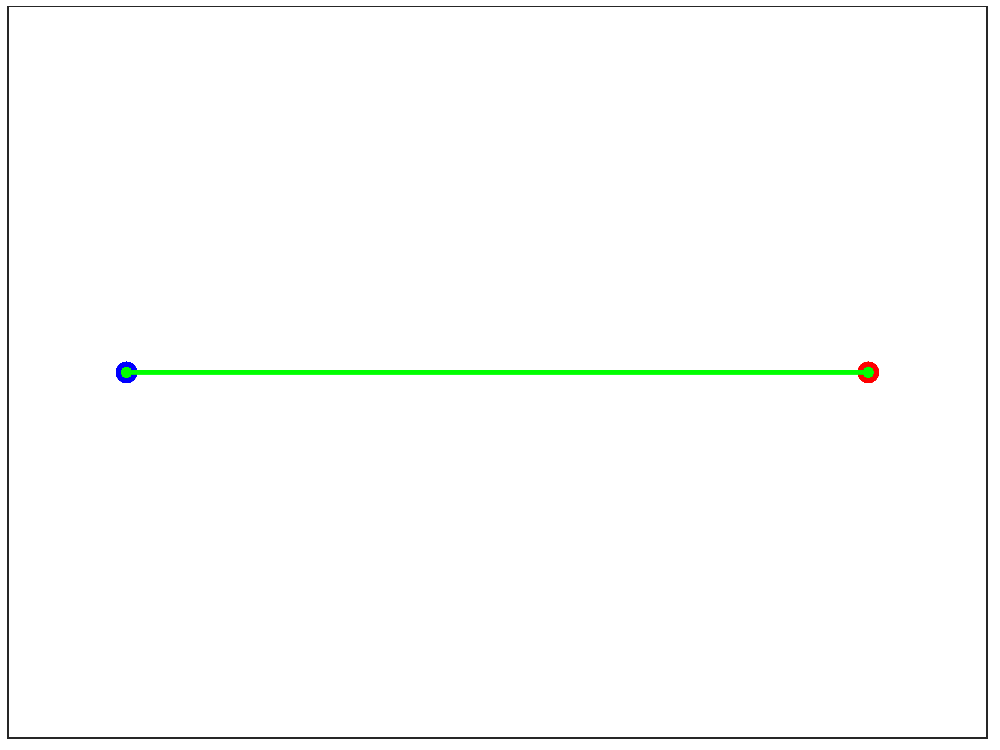}} \ \ \ \ \ \ \
	\subfloat[\label{fig:GapEmbLamnc5}]{\includegraphics[clip,width=.3\columnwidth]{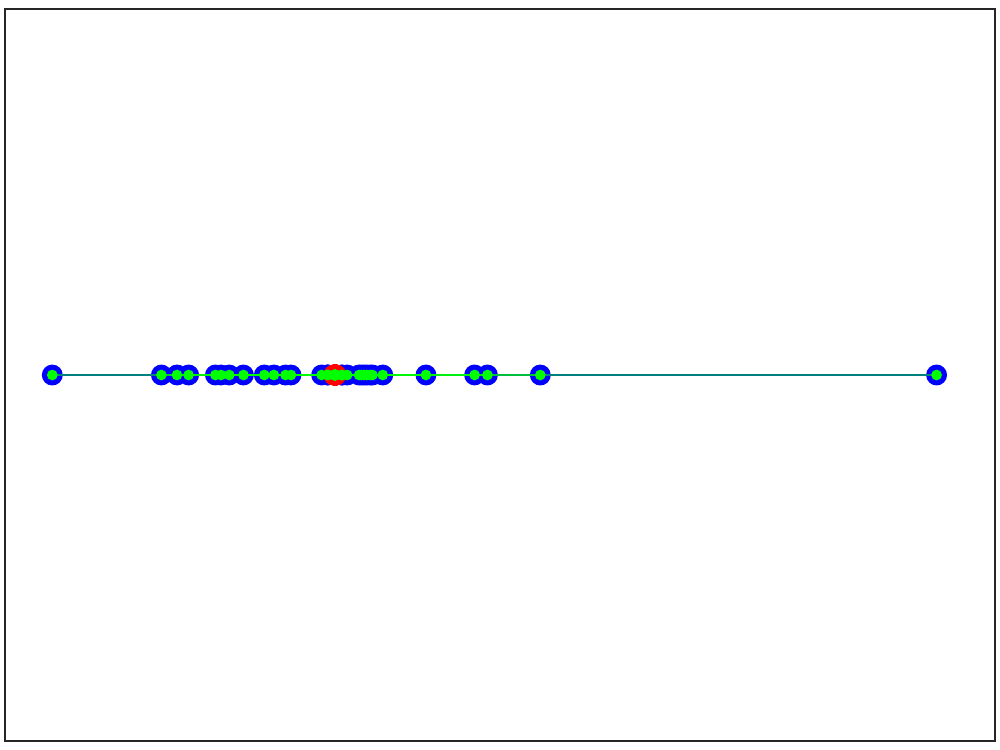}}
	
	\caption{(a) $u$- and (b) $v$-embedding corresponding to minimum spectral width, and embedding corresponding to (c) maximum $\lambda_2$ and (d) minimum $\lambda_n$ for a multiplex of two different Watts-Strogatz networks with 30 nodes for $c=5$.}
	\label{fig:GapEmbc5} 
\end{figure*}

Figure \ref{fig:GapEmbUc20} and \ref{fig:GapEmbLam2c5} illustrates the embeddings for $c=20$. The $u$-embedding and the corresponding embedding of maximizing $\lambda_2$ are both two dimensional. However, Figure \ref{fig:GapEmbc20} shows a 1-D embedding for $v$-embedding versus two dimensional embedding for minimizing $\lambda_n$. 
\begin{figure*}[!htb]
	\subfloat[\label{fig:GapEmbUc20}]{\includegraphics[clip,width=.3\columnwidth]{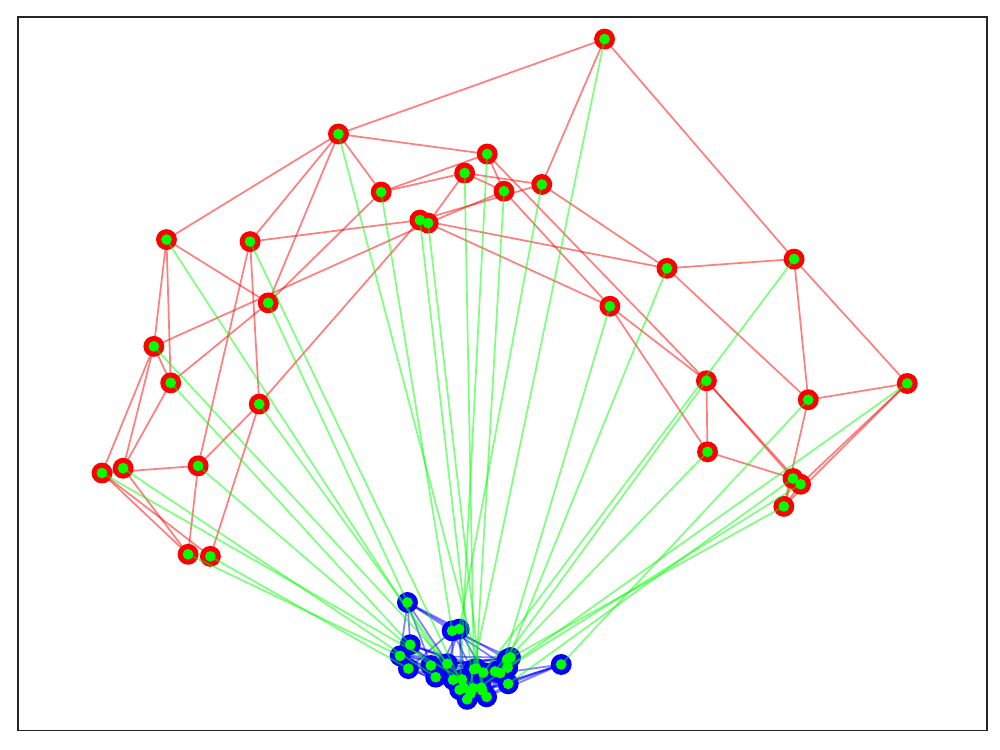}} \ \ \ \ \ \ \
	\subfloat[\label{fig:GapEmbVc20}]{\includegraphics[clip,width=.3\columnwidth]{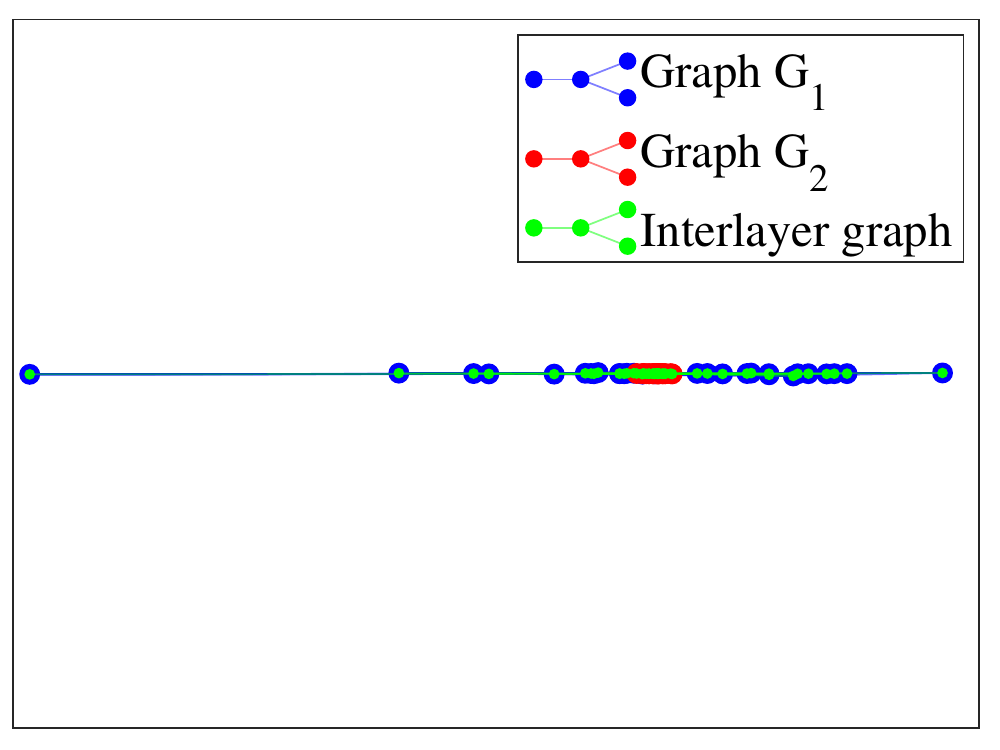}}
	
	\subfloat[\label{fig:GapEmbLam2c20}]{\includegraphics[clip,width=.3\columnwidth]{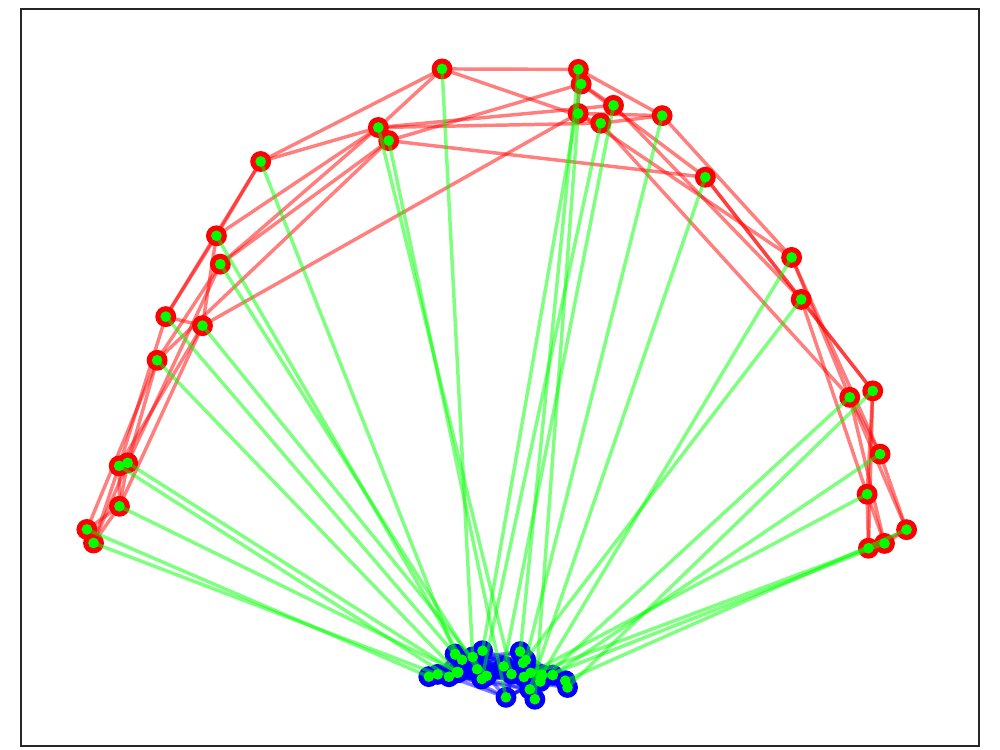}} \ \ \ \ \ \ \
	\subfloat[\label{fig:GapEmbLamnc20}]{\includegraphics[clip,width=.3\columnwidth]{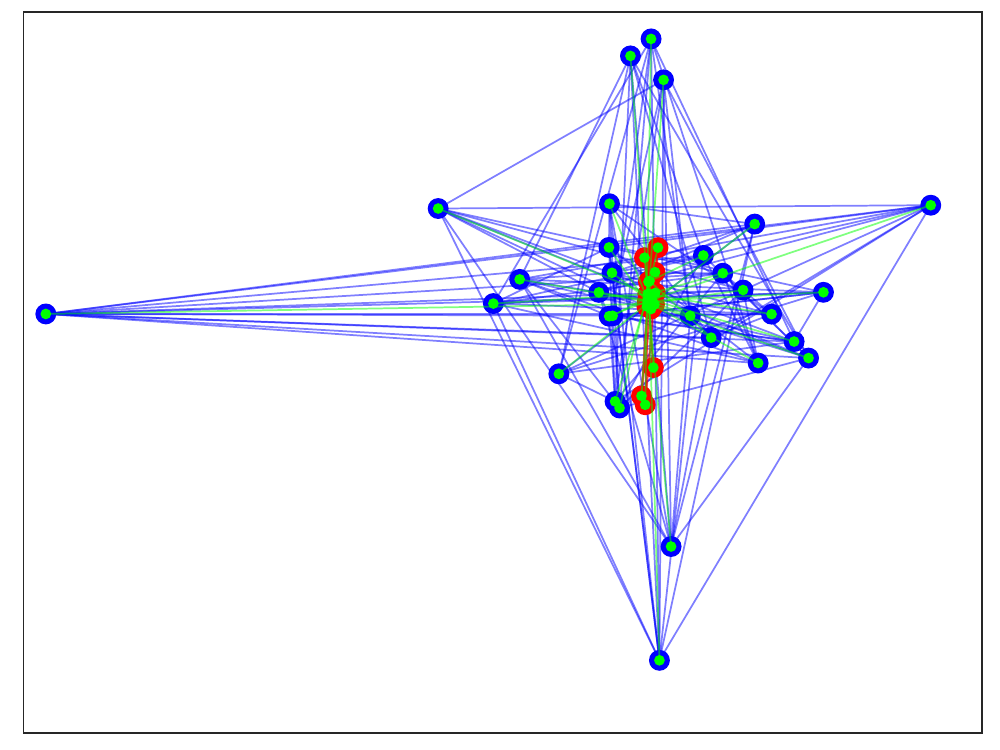}}
	
	\caption{(a) $u$- and (b) $v$-embedding corresponding to minimum spectral width, and embedding corresponding to (c) maximum $\lambda_2$ and (d) minimum $\lambda_n$ for a multiplex of two different Watts-Strogatz networks with 30 nodes for $c=20$.}
	\label{fig:GapEmbc20} 
\end{figure*}
%Likewise, different embedding dimensions occurs in larger budge
%Such different embedding dimensions can be seen for larger budgets as well, such that for $c=30$, 
%Figure \ref{fig:GapEmbc30} shows two dimensional $u$-embedding, while its corresponding embedding for maximizing $\lambda_2$ is three dimensional. 
%\begin{figure*}
%	\subfloat[\label{fig:GapEmbUc30}]{\includegraphics[clip,width=.5\columnwidth]{GapEmbUc30.pdf}}
%	\subfloat[\label{fig:GapEmbVc30}]{\includegraphics[clip,width=.5\columnwidth]{GapEmbVc30.pdf}}
%	
%	\subfloat[\label{fig:GapEmbLam2c30}]{\includegraphics[clip,width=.5\columnwidth]{GapEmbLam2c30.pdf}}
%	\subfloat[\label{fig:GapEmbLamnc30}]{\includegraphics[clip,width=.5\columnwidth]{GapEmbLamnc30.pdf}}
%	
%	\caption{(a) $u$- and (b) $v$-embedding corresponding to minimum spectral width, and embedding corresponding to (c) maximum $\lambda_2$ and (d) minimum $\lambda_n$ for a multiplex of two different Watts-Strogatz networks with 30 nodes for $c=30$.}
%	\label{fig:GapEmbc30} 
%\end{figure*}
Figure \ref{fig:GapEmbc50} shows the embeddings for $c=50$ with one and two dimensional $u$- and $v$-embeddings are comparable with three dimensional embeddings of individual problems.  

\begin{figure}[!htb]
	\subfloat[\label{fig:GapEmbUc50}]{\includegraphics[clip,width=.3\columnwidth]{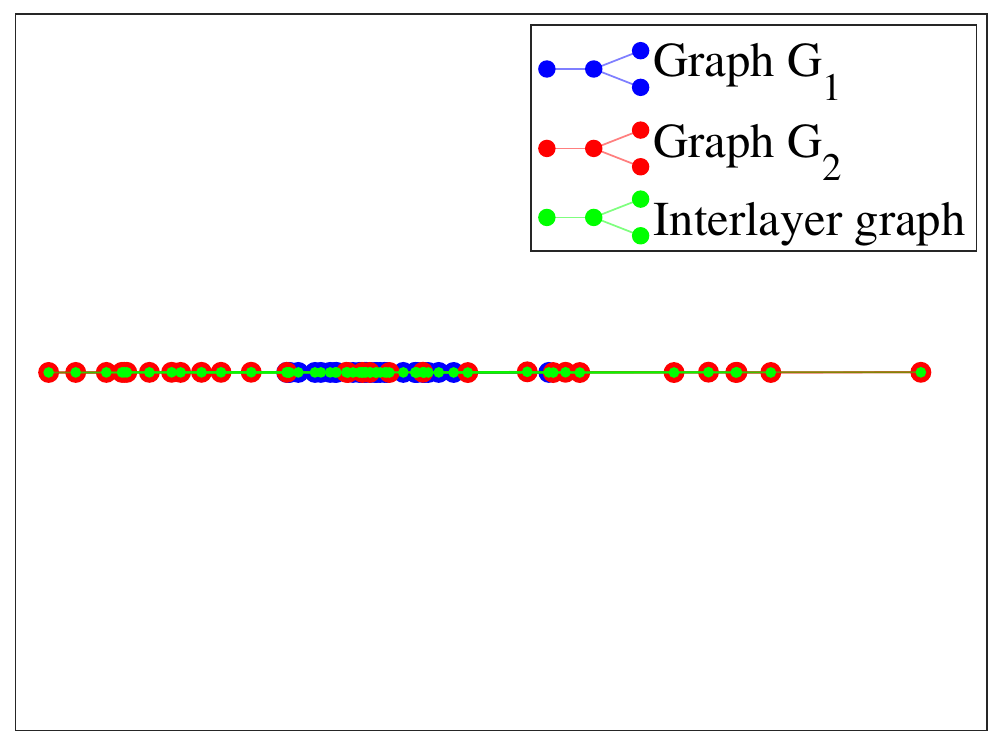}} \ \ \ \ \ \ \ 
	\subfloat[\label{fig:GapEmbVc50}]{\includegraphics[clip,width=.3\columnwidth]{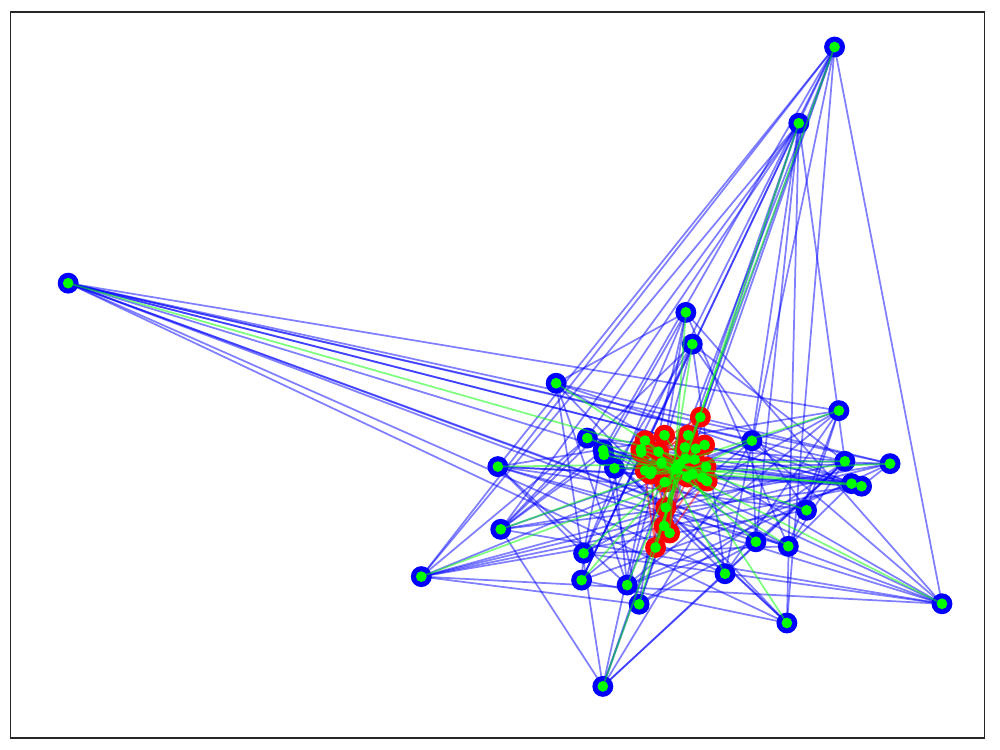}}
	
	\subfloat[\label{fig:GapEmbLam2c50}]{\includegraphics[clip,width=.3\columnwidth]{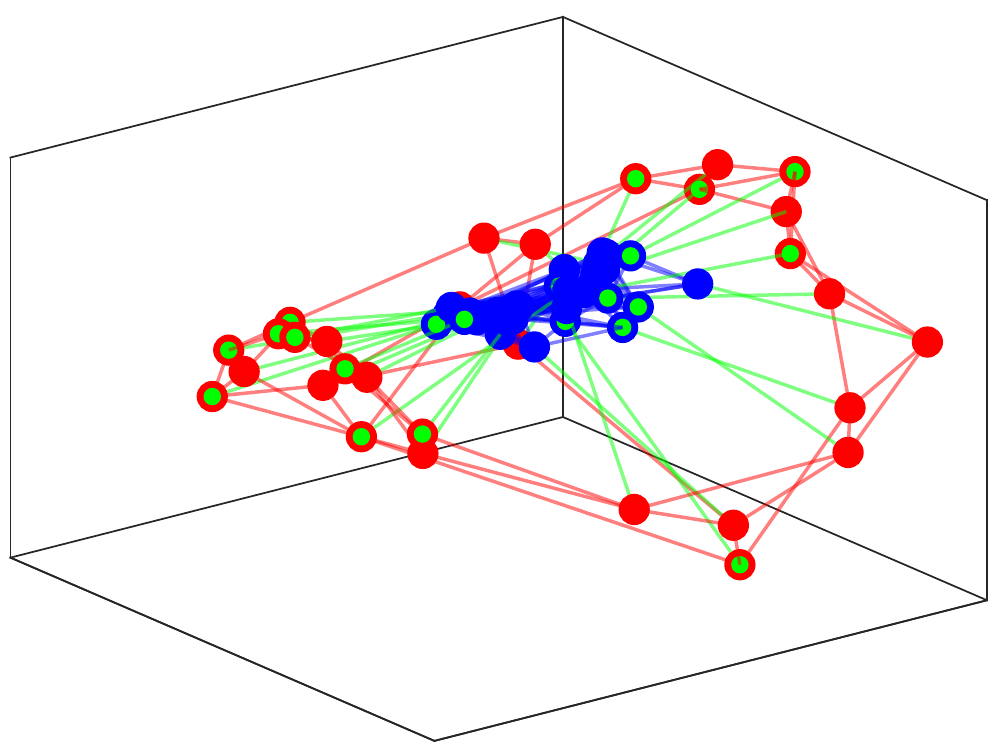}} \ \ \ \ \ \ \ 
	\subfloat[\label{fig:GapEmbLamnc50}]{\includegraphics[clip,width=.3\columnwidth]{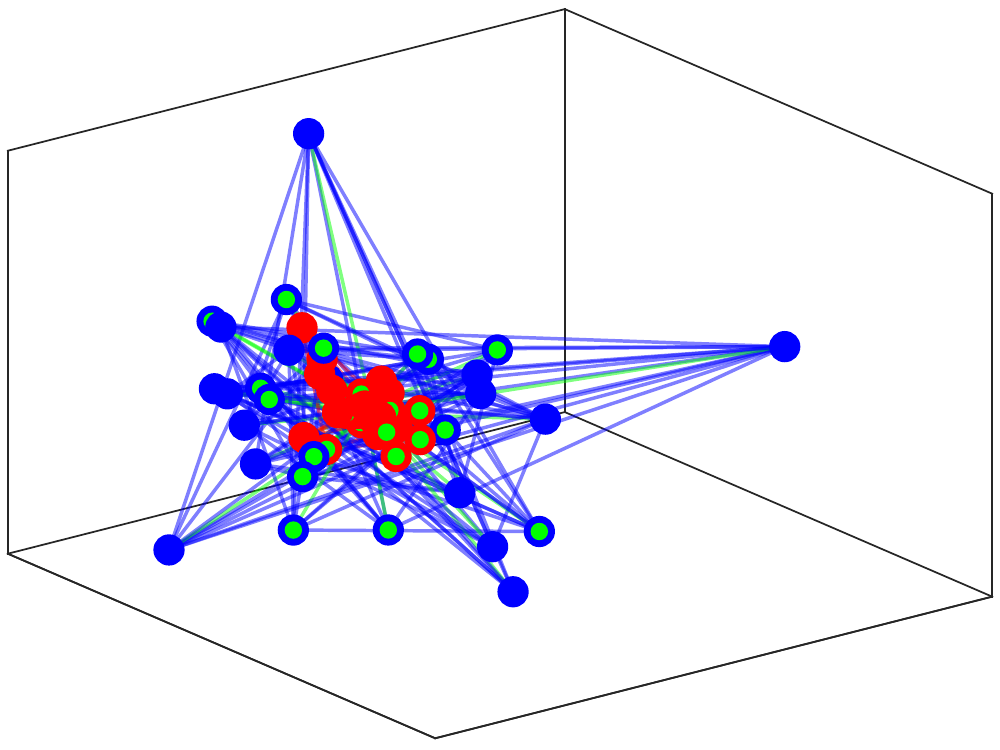}}
	
	\caption{(a) $u$- and (b) $v$-embedding corresponding to minimum spectral width, and embedding corresponding to (c) maximum $\lambda_2$ and (d) minimum $\lambda_n$ for a multiplex of two different Watts-Strogatz networks with 30 nodes for $c=50$.}
	\label{fig:GapEmbc50} 
\end{figure}

%\textbf{KKT}, From stationary condition:
%\begin{align*}
%-\sum_{ij\in E_1\cup E_2}(u_i-u_j) + \lambda_2 u_i + \mu \sum_{i\in N} u_i + \sum_{ij\in E_3}w_{ij}(u_i-u_j) = 0\\
%\sum_{ij\in E_1\cup E_2}(v_i-v_j) + \lambda_n v_i - \sum_{ij\in E_3}w_{ij}(v_i-v_j) = 0
%\end{align*}
%The third sum of the first equation, cancel out (feasibility, $\mu \sum_{i\in N} u_i=0$) and we have:
%\begin{align*}
%-\sum_{ij\in E_1\cup E_2}(u_i-u_j) + \lambda_2 u_i + \sum_{ij\in E_3}w_{ij}(u_i-u_j) = 0\\
%\sum_{ij\in E_1\cup E_2}(v_i-v_j) + \lambda_n v_i -\sum_{ij\in E_3}w_{ij}(v_i-v_j) = 0
%\end{align*}
%Which means eigenvalues ... 
%And from complementary slackness:
%
%\begin{align*}
%w_{ij}(\lVert v_i-v_j\rVert^2 - \lVert u_i-u_j\rVert^2 -\xi) = 0.
%\end{align*}
%For $ij$'s active constraint:
%\[
%\lVert v_i-v_j\rVert^2 - \lVert u_i-u_j\rVert^2 = \xi,
%\]
%and for an inactive constraint $w_{ij} = 0$. 
\subsection{Individual network models}\label{section:NetModels}
In the numerical simulations discussed throughout the paper, we use four different graph models for individual networks: \\

\textbf{Barab\'asi-Albert scale-free network (BA).} New nodes are attached to a specified number of already existing nodes in a preferential attachment fashion. For nodes number large enough, this method ensures the emergence of power-law behavior observed in many real-world networks \citep{barabasi1999emergence}.\\

\textbf{Erd\H{o}s-R\'enyi (ER).} Starting with a complete graph for the given number of nodes, the edges are randomly deleted according to a specified probability \citep{bollobas1998random}. \\

\textbf{Geometric network (Geo).} A set of nodes, picked randomly in a specified interval, are connected by an edge if the Euclidean distance is up to a definite value \citep{penrose2003random}. \\ 

\textbf{Watts-Strogatz (WS).} First, all nodes are connected to their immediate neighbors according to a fixed degree specified. Then, all existing links are rewired with a given probability, which produces graphs with low average hop count yet high clustering coefficient, which mimics the small-world property found in real-world networks \citep{watts1998collective}. \\

\subsection{Correlation with centrality measures}\label{sec:Correlation}
We have observed that maximum algebraic connectivity is achieved with uniform wights when $c\leq c^*$. For larger budgets the weight distribution is generally not uniform. To illustrate that optimal weights are not so simple as to be exclusively determined based on common centrality measures, we examine the weight distribution for maximizing algebraic connectivity corresponding to Figure \ref{fig:Emb3}. We examined several common centrality measures such as Degree, Eigenvector, and Page Rank, as well as Fiedler vector components, to evaluate nodes importance. Here, only results of degree centrality and Fiedler vector are shown in Figure \ref{fig:CentrCorltn}, since no additional results were observed by other measures.  Figures \ref{fig:Centr1} and \ref{fig:Centr2} report no correlation of optimal weights with degree centrality of individual network components. There is also no correlation in Figure \ref{fig:Fiedler2} between optimal weights and Fiedler vector components of the graph with larger connectivity; here $G_2$. However, some positive correlation is observed between optimal weights and Fiedler vector components of the graph with smaller connectivity; here $G_1$, i.e. the subgraph first unfolds in Figure \ref{fig:Emb3}. Therefore, for budget values beyond the threshold $c^*$, optimal weights prove positive correlation with Fiedler vector components of the subgraph with smaller algebraic connectivity so that, generally, the higher the absolute value of Fiedler vector component corresponding to a node in this subgraph, the larger the weight assigned to it; or, the smallest weights are assigned to those nodes with smallest Fiedler vector component and the largest weights are assigned to those nodes with largest Fiedler vector component . The layer with larger connectivity hardly plays role in this regard. We remind that this condition is manifested for budget values just above $c^*$ and when the two subgraphs hold very different algebraic connectivity values. Thus, with distancing from these two conditions correlation between optimal weights and Fiedler components of the subgraph with smaller algebraic connectivity dies away. Finally, we show in Figure \ref{fig:CentrCorltnLrg} the results corresponding to the larger network investigated in Figure \ref{fig:EmbLrg}. Again, only some positive correlation can be observed with Fiedler vector of subgraph $G_1$ with smaller algebraic connectivity. \\

\begin{figure}[!htb]
	\subfloat[\label{fig:Centr1}]{\includegraphics[clip,width=.25\columnwidth]{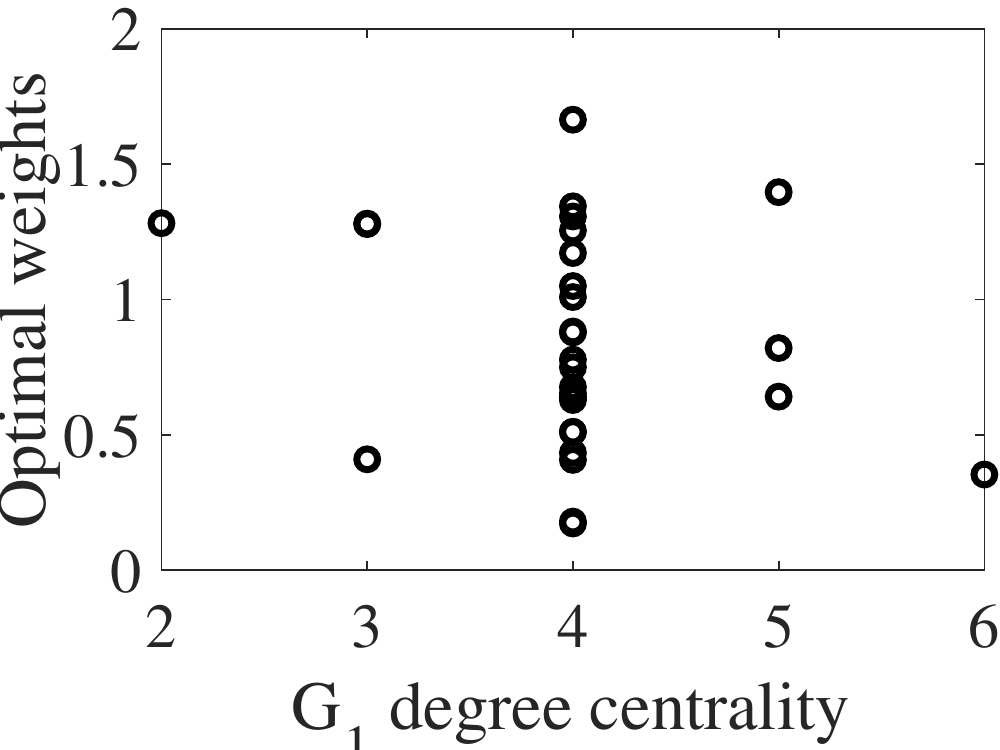}}	
	\subfloat[\label{fig:Centr2}]{\includegraphics[clip,width=.25\columnwidth]{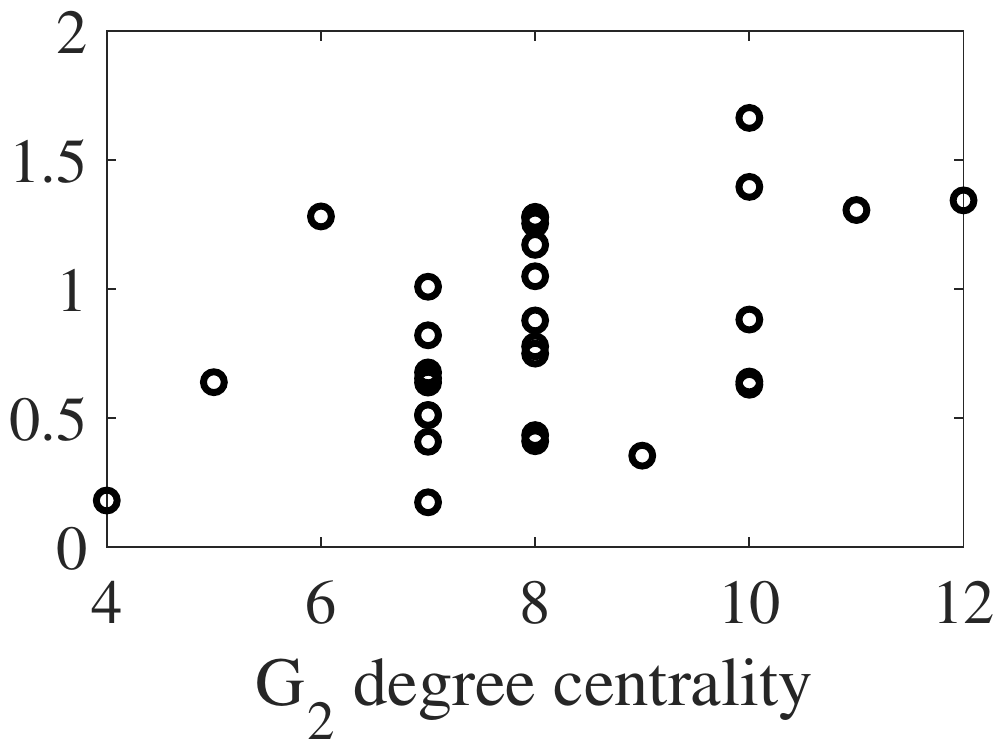}} 
	\subfloat[\label{fig:Fiedler1}]{\includegraphics[clip,width=.25\columnwidth]{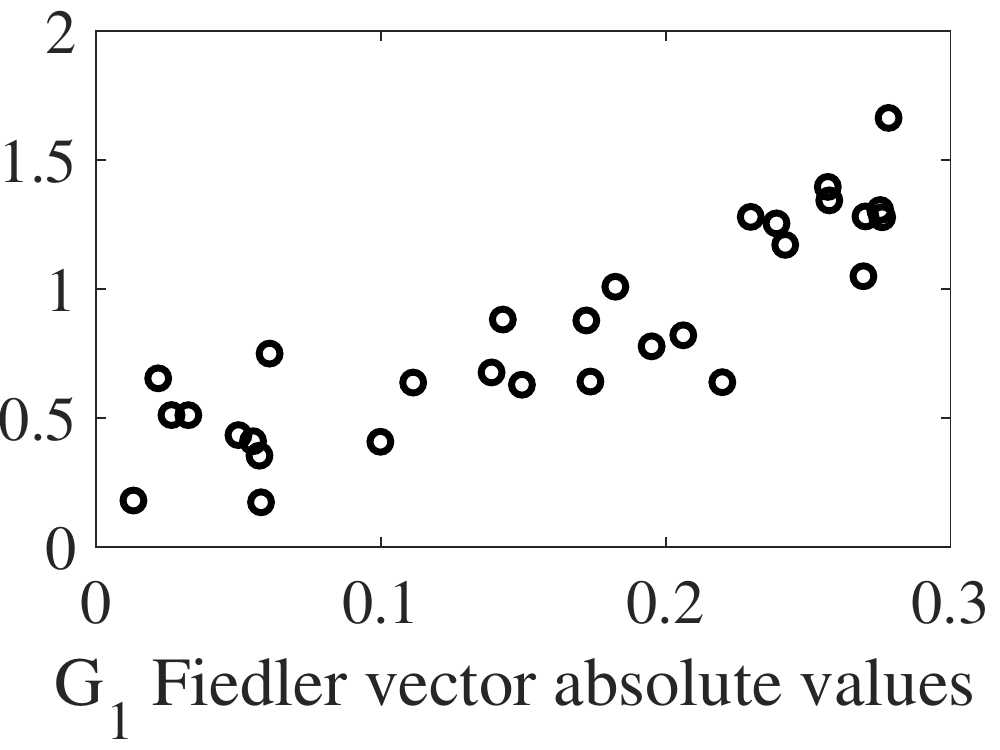}} 	\subfloat[\label{fig:Fiedler2}]{\includegraphics[clip,width=.25\columnwidth]{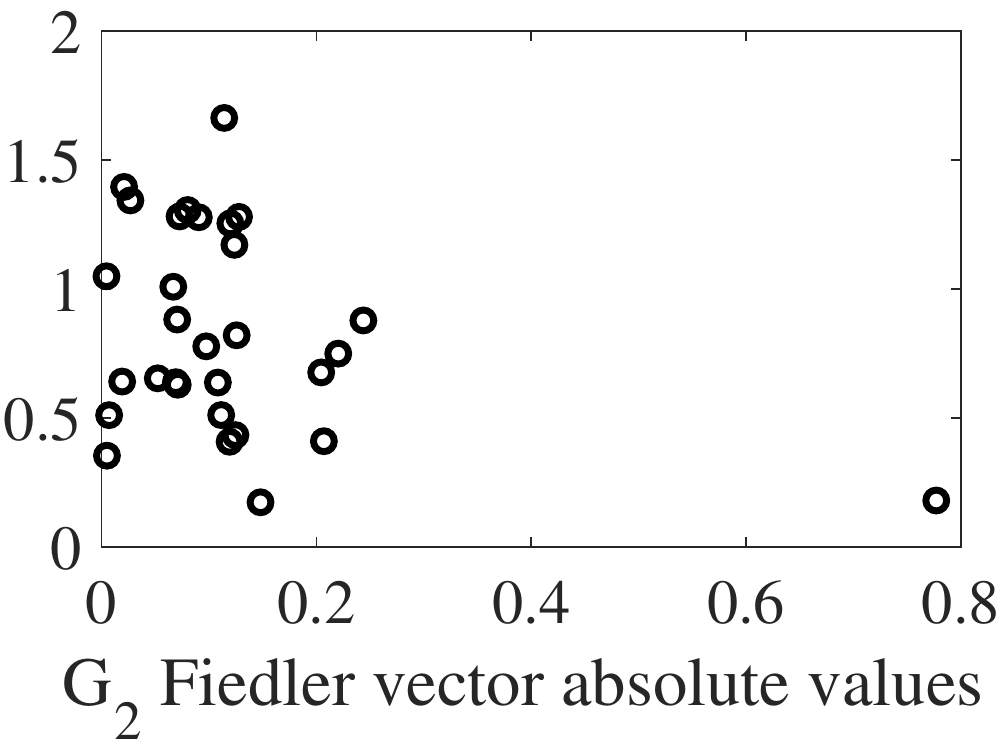}}
	
	\caption{Optimal weights corresponding to Figure \ref{fig:Emb3} when $c=10$ for different values of (a) degree centrality in $G_1$, (b) degree centrality in $G_2$, (c) absolute values of Fiedler vector components in $G_1$, and (d) absolute values of Fiedler vector components in $G_2$.}
	\label{fig:CentrCorltn} 
\end{figure}
\begin{figure}[!htb]
	\subfloat[\label{fig:Centr1Lrg}]{\includegraphics[clip,width=.25\columnwidth]{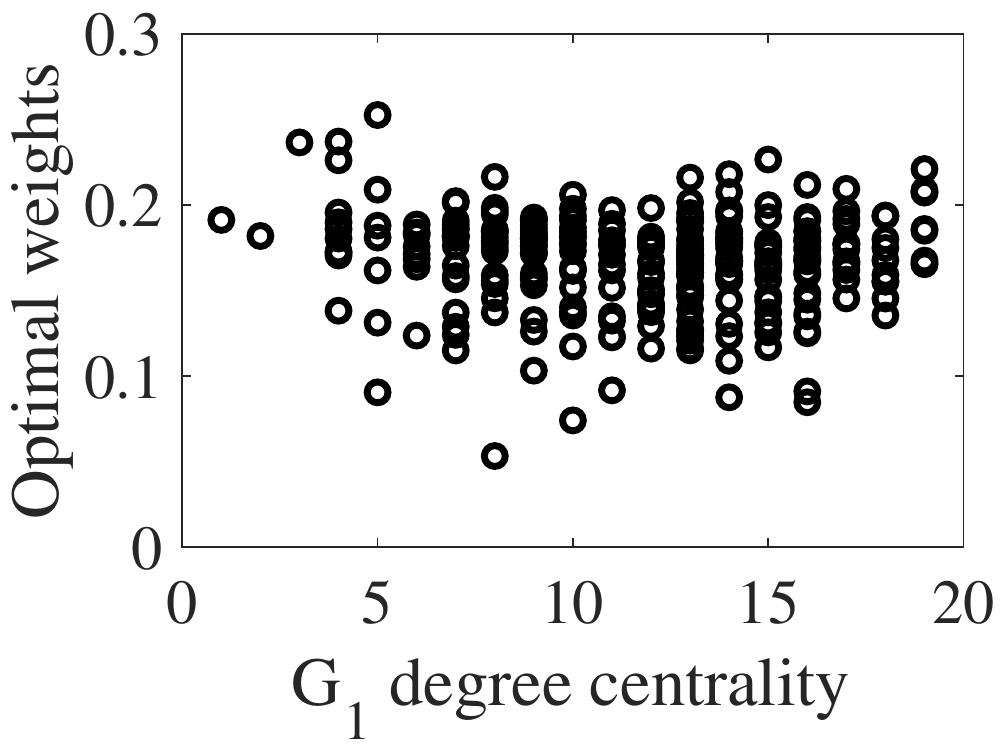}}	
	\subfloat[\label{fig:Centr2Lrg}]{\includegraphics[clip,width=.25\columnwidth]{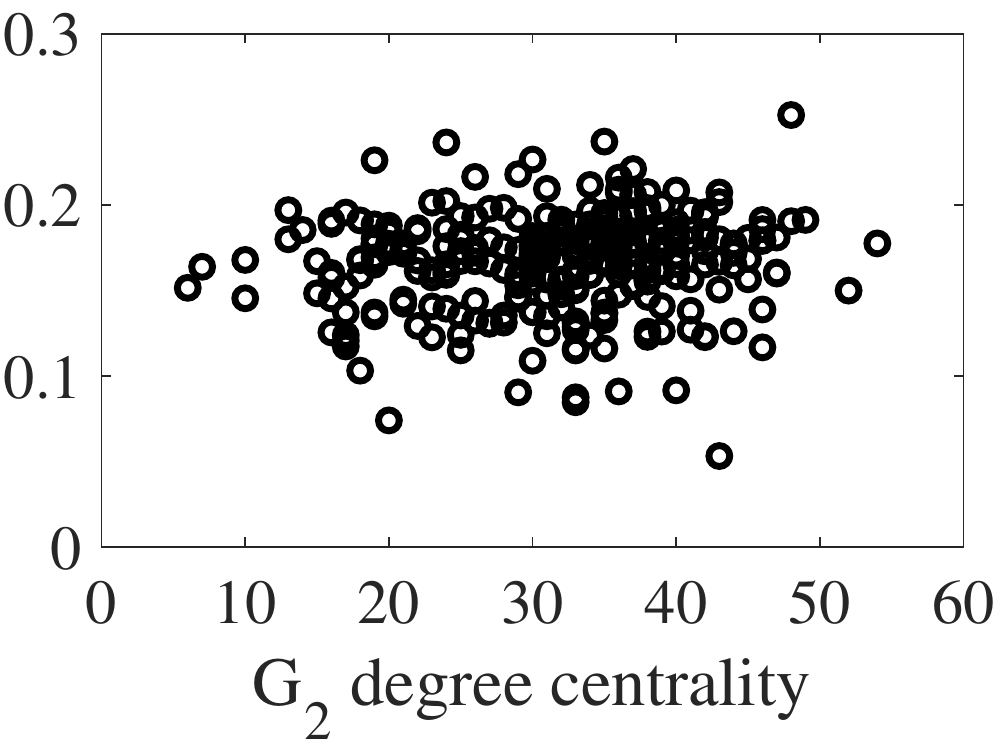}}
	\subfloat[\label{fig:Fiedler1Lrg}]{\includegraphics[clip,width=.25\columnwidth]{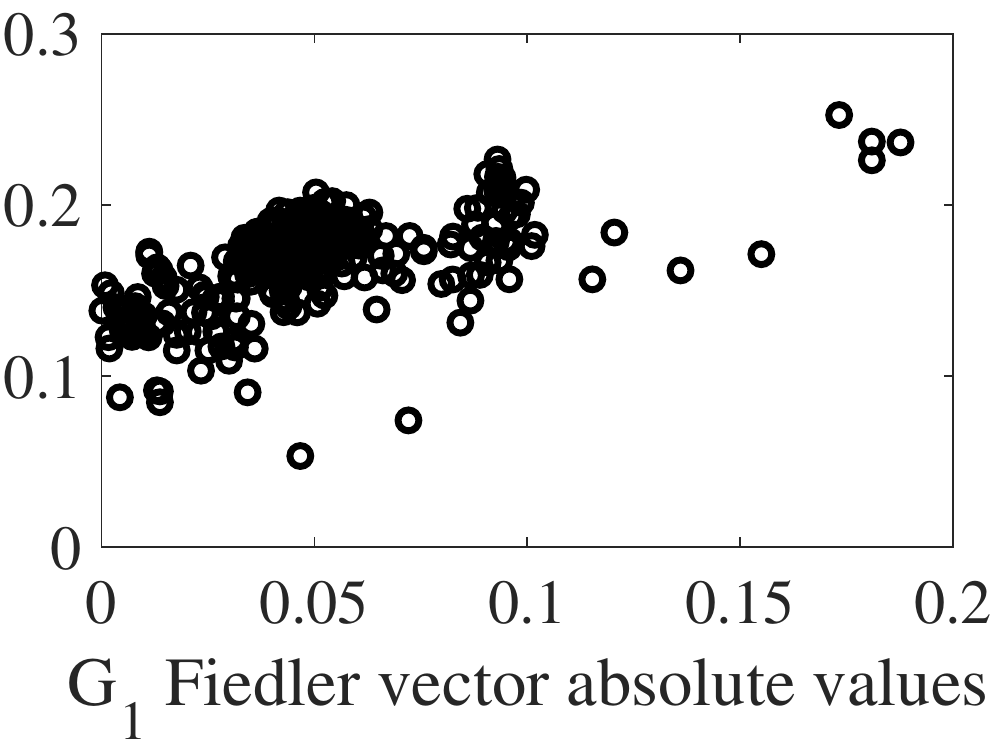}}	\subfloat[\label{fig:Fiedler2Lrg}]{\includegraphics[clip,width=.25\columnwidth]{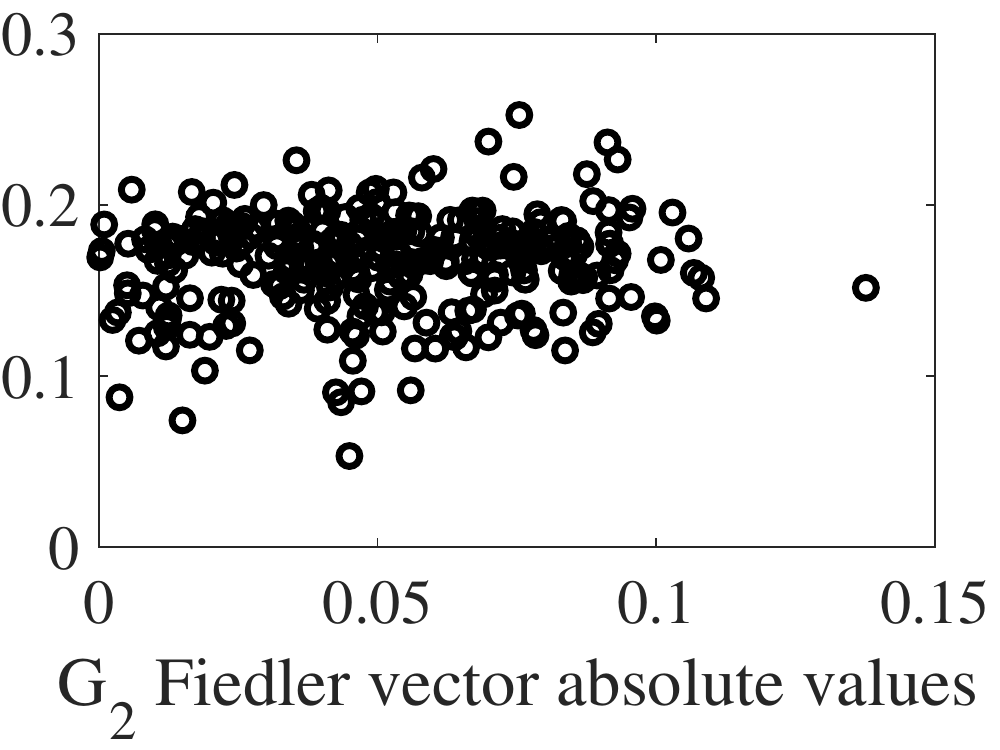}}
	
	\caption{Optimal weights corresponding to Figure \ref{fig:EmbLrg} when $c=50$ for different values of (a) degree centrality in $G_1$, (b) degree centrality in $G_2$, (c) absolute values of Fiedler vector components in $G_1$, and (d) absolute values of Fiedler vector components in $G_2$.}
	\label{fig:CentrCorltnLrg} 
\end{figure}

\end{document}